\numberwithin{equation}{section}
\newtheorem{theorem}{Theorem}
\newtheorem{proposition}{Proposition}
\newtheorem{lemma}{Lemma}
\numberwithin{theorem}{section}
\numberwithin{proposition}{section}
\numberwithin{lemma}{section}
\title{\Large \bfseries{\uppercase {Continuation Criterion For Solutions To The Einstein Equations}}\par}
\author[1,2]{\normalsize Oswaldo Vazquez\thanks{ ovazquez@college.harvard.edu}}
\author[2,3]{\normalsize Puskar Mondal\thanks{ puskar\_mondal@fas.harvard.edu}}
\affil[1]{\itshape \small Department of Physics, Harvard University, 17 Oxford Street, MA 02138, USA}
\affil[2]{\itshape \small Department of Mathematics, Harvard University, 1 Oxford Street, MA 02138, USA}
\affil[3]{\itshape \small Center of Mathematical Sciences and Applications, Harvard University, 20 Garden Street, MA 02138, USA}
\date{}
\begin{document}

\maketitle

\begin{abstract}
    \noindent We prove a continuation condition in the context of 3+1 dimensional vacuum Einstein gravity in Constant Mean extrinsic Curvature (CMC) gauge. More precisely, we obtain quantitative criteria under which the physical spacetime can be extended in the future indefinitely as a solution to the Cauchy problem of the Einstein equations given regular initial data. In particular, we show that a gauge-invariant $H^{2}$ Sobolev norm of the spacetime Riemann curvature remains bounded in the future time direction provided the so-called deformation tensor of the unit timelike vector field normal to the chosen CMC hypersurfaces verifies a spacetime $L^\infty$ bound. To this end, we implement a novel technique to obtain this refined estimate by using Friedlander's parametrix for tensor wave equations on curved spacetime and Moncrief's subsequent improvement \cite{integral,moncrief2019could} \footnote{We note that this continuation criterion was obtained by \cite{klainerman2010breakdown} using a different technique. Our approach has the potential to be used in gravity coupled to other sources and also for addressing large data global existence problems such as the $U(1)$ problem. The current article can be thought of as a warm-up for more difficult scenarios. In fact, one of us recently proved the stability of $3+1$ dimensional Milne spacetime under coupled Einstein-Yang-Mills perturbations where the technique introduced in this article plays an important role- this result is being written up}. We conclude by providing a physical explanation of our result as well as its relation to the issues of determinism and weak cosmic censorship.
\end{abstract}

\medskip


\section{Introduction}
\noindent  Given regular initial data for gravity, it is of mathematical and physical interest to obtain analytic criteria that codify whether the gravitational field will evolve to a unique  singularity-free (naked) global solution to the Einstein equations, which are a system of quasi-linear hyperbolic PDEs while expressed in a suitable gauge (e.g. spacetime harmonic gauge \cite{choquet2008general} or constant mean curvature spatial harmonic gauge \cite{elliptichyperbolic}). The appeal for mathematicians is obvious as there is a plethora of literature studying the breakdown of solutions to non-linear field equations. A few examples where global existence holds in 3+1 Minkowski space are: the nonlinear wave equation $\Box \varphi=\lambda |\varphi|^{p-1}\varphi$ for $1\leq p \leq 4$ \cite{grillakis1990regularity,struwe1988globally, jorgens1961anfangswertproblem, rauch1981u5}, sine-Gordon equation \cite{sine} non-Abelian Yang-Mills-Higgs for a fixed choice of compact gauge group and specific restrictions on the Higgs potential \cite{eardley1982global, eardley1982global2}. Yang-Mills fields on a globally hyperbolic background are also known to exhibit a non-blow-up characteristic \cite{satah,ghanem}. In contrast, some equations that may have finite time blow-ups in 3+1 dimensions include: wave maps (also known as non-linear sigma models) \cite{krieger2008renormalization}, and relativistic perfect fluids \cite{christodoulou2007formation}. Any physically acceptable classical field is desired to be globally well-posed on any fixed globally hyperbolic spacetime.

\noindent From the physics perspective, the breakdown/continuation of solutions to Einstein's equations is essential to (dis)validate the deterministic essence of classical general relativity theory. For if the maximal Cauchy development of regular initial data were to not be regular (in a suitable sense), then the future cannot be fully predicted even with perfect knowledge about the present (therefore a loss of information occurs). A more pathological situation would be if the evolved spacetime contains so-called naked singularities which in principle should be observable by a timelike observer located in the future. These irregularities such as naked singularities and Cauchy horizons are hypothesized to be absent from nature (or exist as purely mathematical objects which are to be unstable against perturbations) as stated in Penrose's cosmic censorship conjecture \cite{penrose1999question}.

\noindent Even though several interesting results in the context of small data global well-posedness problems have been established over the past thirty years \cite{christodoulou1993global,bieri2010extension,zipser2000global, lefloch2016global2, bigorgne2021asymptotic, liu2021new, liu2022global, fajman2021attractors, andersson2020nonlinear, branding2019stable, mondal2022radiation}, the large data problem is far from being solved. While this is an extremely difficult issue to handle if one does not impose a smallness condition on the size of the data, it is a desirable first step to deduce the condition under which the Cauchy problem with large data is indeed globally well-posed in the absence of any symmetry. Another important motivation lies in the field of numerical relativity. If one attempts to solve the initial value problem for Einstein's field equations numerically, then a choice of coordinates must be explicitly made (there are infinitely many possible coordinates due to diffeomorphism invariance). During numerical evolution, it is indeed possible to encounter \textit{fake} singularities which are a result of the choice of coordinates. One such example occurs when working with Gaussian normal coordinates \cite{tipler1977nature}. Coordinate singularities are difficult to handle numerically and they can be wrongfully interpreted as intrinsic spacetime singularities. In order to avoid such issue, one needs sharp criteria that can distinguish between the true and fake singularities.

\noindent  In this paper, we attempt to find such breakdown/continuation condition for the special case of vacuum gravity, namely the equation
\begin{equation}
\label{eq:vacuum}
    \mathbf{Ric}_{\widehat{g}}=0
\end{equation}
 is assumed to be satisfied everywhere in a time orientable globally hyperbolic spacetime $M$ endowed with a  metric tensor $\widehat{g}$ of Lorentzian signature. We will work in Constant Mean extrinsic Curvature Spatial Harmonic (CMCSH) gauge.

\noindent  Previous accounts on continuation criteria for vacuum include the work of Anderson \cite{anderson2001long}, who showed that a breakdown occurs when the $L^{\infty}_{t}L^{\infty}_{\vec{x}}$ norm of the Riemann tensor $\mathbf{Rm}_{\widehat{g}}$ of the spacetime $(M,\widehat{g})$ blows up. An improved breakdown criteria requiring one less degree of differentiability in the metric $\widehat{g}$ was later found by Klainerman and Rodnianski \cite{klainerman2010breakdown}, namely the $L^{\infty}_{t}L^{\infty}_{\vec{x}}$ blow-up of the deformation tensor $^{\mathbf{n}}\pi:=\mathcal{L}_{\mathbf{n}}\widehat{g}$ of the unit timelike vector field $\mathbf{n}$ normal to a Constant Mean Curvature (CMC) foliation of $M$ (here $\mathcal{L}$ is the Lie derivative). This in fact did not require all derivatives of the metric (that would be non-geometrical) but rather certain components describing the extrinsic geometry of the chosen Cauchy hypersurfaces. The work in \cite{klainerman2010breakdown} was extended by Shao \cite{shao2011breakdown} to apply for Einstein-scalar field and Einstein-Maxwell spacetimes. A further improvement due to Wang \cite{wang} required the $L^{1}_{t}L^{\infty}_{\vec{x}}$ norm of $^{\mathbf{n}}\pi$ to be bounded for the continuation of the vacuum CMC foliation. Our treatment differs substantially from that of the previous ones and attempts to shed light on the advantages of considering the Cartan/tetrad/frame bundle formalism that can essentially handle any gauge-covariant tensor wave equations (e.g., Einstein's equations, Yang-Mills equations). 

\noindent The outline of this presentation is as follows. \hyperlink{section.2}{Sec. 2} will lay out the notations and definitions to be used. \hyperlink{section.3}{Sec. 3} states the main theorem and provides examples of the continuation criteria for two different spacetimes (Taub-NUT and closed FLRW universe), the methods of proof are also summarized. In particular, we will need two results due to Chen and LeFloch \cite{chen2008injectivity,lefloch2008injectivity} and the analysis in \cite{KR1,KR2,KR3} regarding lower bounds for the null and chronological injectivity radii of the exponential map at a point $p\in M$ in order to invoke the representation formula for the Riemann curvature deduced by Friedlander \cite{Freidlander} and Moncrief \cite{integral}. \hyperlink{section.5}{Sec. 5} provides such representation formula after a review of the Cartan/tetrad/frame bundle formalism of GR. The full proof of the main theorem is the subject of \hyperlink{section.4}{Sec. 4} and \hyperlink{section.6}{Sec. 6}, we use a bootstrap technique to bound the spacetime $L^\infty$ norm of the Riemann tensor which turns out to be sufficient to conclude well-posedness through usual elliptic arguements in a CMC spatial harmonic or CMC-spatially transported gauge. Concluding remarks are made in \hyperlink{section.7}{Sec. 7}.

\section{Notations and definitions}

\noindent  Let $(M,\widehat{g})$ be a time orientable globally hyperbolic spacetime of dimension 3+1. Since it is globally hyperbolic, $M$ can be foliated by a family of Cauchy hypersurfaces $\{\Sigma_t\}$ as level sets of a time function $t$. The spacetime topology is decomposed as $\Sigma \times \mathbb{R}$. Each level set $\Sigma_{t_1}$ is diffeomorphic to a future one $\Sigma_{t_2}$ thanks to the flow generated by
\begin{equation}
    \begin{gathered}
        \partial_t=N\mathbf{n}+X,
    \end{gathered}
\end{equation}
where $\mathbf{n}$ is a future unit vector field orthogonal to the level sets, $N$ is called the lapse function, and $X$ is called the shift vector field which is tangent to the Cauchy hypersurfaces. Choosing local coordinates $(t,x^i)$ gives the metric in ADM form
\begin{equation}
    \begin{gathered}
        \widehat{g} = -N^2dt \otimes dt + g_{ij}(dx^i + X^idt) \otimes (dx^j + X^jdt),
    \end{gathered}
\end{equation}
here $g_{ij}=g(\partial_i,\partial_j)=\widehat{g}(\partial_i,\partial_j)$ is the induced Riemannian metric on $\Sigma$. Direct calculation shows that 
\begin{equation}
\label{eq:unit_timelike_normal}
    \mathbf{n}=-Ndt
\end{equation}
\begin{equation}
\label{eq:induced_volume}
    \mu_{\widehat{g}}=N\mu_{g}.
\end{equation}
The second fundamental form of the constant time hypersurfaces is defined as $k_{ij}=(\frac{1}{2}\mathcal{L}_{\mathbf{n}}g)_{ij}$. One has a valid Cauchy problem upon choosing an initial level set with suitable regularity conditions (choosing a slicing of the spacetime or equivalently choosing a gauge). In this article we are interested in spacetimes that are foliated by closed Cauchy hypersurfaces of negative Yamabe type (see \cite{schoen1984conformal, moncrief2019could} for detail about negative Yamabe manifolds).

\noindent \textbf{Other objects and spaces relevant to our study}

\noindent $L^{p}(X)$ \quad Lebesgue function space of $p$th-order over the manifold $X$ with density $\mu$. The norm is given by $\lvert\lvert f\rvert\rvert_{L^p}:=(\int_{X}\lvert f\rvert^p\mu)^{1/p}$.

\noindent $L^{\infty}(X)$ \quad Space of measurable functions that are bounded almost everywhere. The supremum norm is $\lvert\lvert f\rvert\rvert_{L^\infty}:=\sup_{x\in X}\lvert f(x)\rvert$. The spacetime $L^{\infty}$ with coordinates $(t,\vec{x})$ is denoted as $L^\infty_t L^\infty_{\vec{x}}$.

\noindent $H^s$ \quad Sobolev space of order $s$ defined on a Cauchy hypersurface.

\noindent $\exp_p:V\subset T_{p}M\rightarrow M$ \quad Exponential map at $p\in M$ with domain $V$ a neighborhood of the origin.

\noindent $\mathrm{Inj}(M,p,E)$\quad Injectivity radius of $\exp_p$ with respect to a Riemannian metric $E$ at $p$, it is defined as the largest positive number $r$ such that $\exp_p$ restricted to the $E$-ball $B_E(0,r)=\{v\in T_pM:E(0,v)\leq r\}$ is a diffeomorphism.

\noindent $\mathrm{NullInj}_{\widehat{g}}(M,p,E)$\quad Null injectivity radius of $\exp_p$ with respect to a Riemannian metric $E$ and Lorentzian metric $\widehat{g}$. It is the largest $r$ such that $\exp_p|_{\mathcal{D}_E(0,r)}$ is a diffeomorphism where $\mathcal{D}_E(0,r)$ is the intersection of $B_E(0,r)$ with the bottom portion of the double null cone $\{v\in T_pM:\widehat{g}(v,v)= 0\}$.

\noindent $D_p$ \quad Image of $\exp_p|_{\mathcal{D}_E(0,r)}$ where $r$ is the null injectivity radius at $p$. Throughout the paper, we call it the ``full past light cone of $p$". Since it extends down to Euclidean distance $r$, $D_p$ will meet a family of spacelike Cauchy level sets $\{\Sigma_t\}_{t\in [t_p-r,t_p]}$ where $t_p$ is the global time coordinate of the point $p$. Declare $B_p(t):=D_p\cap \Sigma_t$ for each $t\in [t_p-r,t_p]$, this is a ball on the level set of $t$ with topological 2-sphere boundary $\sigma_p(t)$. 

\noindent $C_p$ \quad  Mantle of full past light cone of a point $p\in M$, by mantle we mean it does not include the interior of the cone nor the interior of the ball at $t_p-r$.

\noindent $J_p$ \quad Interior of past light cone of a point $p\in M$ plus the ball at $t_p-r$. Note that $D_p:=C_p\cup J_p$, $\sigma_p(t):=C_p\cap\Sigma_{t}$ and $\partial D_p=C_p \cup B_p(t_p-r)$.

\noindent $\chi, \Bar{\chi}$ \quad Null second fundamental forms. Precisely, given two light-like/null future-directed vector fields $L,\Bar{L}$ (corresponding to outgoing and incoming directions, respectively) which are perpendicular to each $\Sigma_t$, define $\chi(v,w):=\widehat{g}(\nabla_v L,w)$ and $\Bar{\chi}(v,w):=\widehat{g}(\nabla_v \Bar{L},w)$ for any two vector fields $v,w\in T\Sigma$. The traces in particular control the evolution of surface areas from spacelike spheres along the incoming and outgoing null directions.

\noindent $\mu_{\widehat{g}}(x):=\sqrt{-\det \widehat{g}(x)}$ \quad Canonical volume form for the Lorentzian manifold $(M,\widehat{g})$.

\noindent $\eta_{ab}$ \quad Minkowski metric.

\noindent $\delta_{ab}$ \quad Kronecker delta.

\section{Main theorem and idea of the proof}
 
\noindent Global hyperbolicity allows us to cast the Einstein equations as a dynamical system with phase space coordinates $(g(t),k(t))$. Assume $\Sigma_{t=0}$ is the initial Cauchy hypersurface and on it we prescribe the data $(g_0,k_0)\in H^{s}\times H^{s-1}$ $(s\geq 4)$ that verifies the constraint equations. One may now begin to study the determinism of the system i.e. we want to understand if the solutions to the evolution equations can be extended to the future without any obstruction or if there are any obstructions then we want to understand their nature. It is not clear how to proceed at this point. We utilize the physical meaning of timelike Killing fields. Existence of a timelike Killing vector field implies that the spatial hypersurface is stationary or the induced geometry does not change along the flow of this vector field. Therefore, if a timelike Killing field exists, then from a physical perspective, the predictability should trivially hold since the data is not changing in time (in a rigorous sense this is tied to Noether's theorem and conservation laws). But a generic spacetime is almost always not stationary. Therefore, we do not have a timelike Killing vector field. However, one can claim that in order for predictability to hold one does not require exact preservation of the initial information but instead \textit{non-drastic change}. In other words, the obstruction to the existence of a timelike Killing field is not infinitely large so that the initial information is not completely deformed within a finite time interval in the future. In order for such a property to hold, a physically plausible guess would be that the gauge-invariant $L^\infty$ norm of the deformation tensor of the unit timelike vector field orthogonal to the Cauchy foliation $^{\mathbf{n}}\pi:=\mathcal{L}_{\mathbf{n}}\widehat{g}$ should remain finite. Indeed, if $^{\mathbf{n}}\pi$ vanishes then $\mathbf{n}$ is Killing and the spacetime is stationary.

\noindent The next question is how do we show that this criteria of finiteness of $^{\mathbf{n}}\pi$ can be used to conclude that the CMCSH vacuum Cauchy problem is globally well-posed in the future? In fact, an established result (e.g., see \cite{andersson2004future}) states that non-blow up of the $H^2(\Sigma)$ norm of Riemann tensor in finite time leads to well-posedness. This is expected from a physical point of view since Riemann curvature is the manifestation of vacuum gravity (recall the geodesic deviation equation) and exhausts all degrees of freedom. The problem at hand has now been reduced to the following: Suppose $T^*$ is the maximal time of existence of a solution, then the solution can actually be extended indefinitely past $T^*$ provided we use the mild assumption of $L^\infty_t L^\infty_{\Vec{x}}$ finiteness of $^{\mathbf{n}}\pi$ to control $||\mathbf{Rm}_{\widehat{g}}||_{H^{2}}$ at $T^*+\epsilon,~\epsilon>0$ in terms of the initial data at $t=0$. Success means the validity of the following

\noindent\textbf{Main Theorem:} \textit{Let $(M=\Sigma\times \mathbb{R},\widehat{g})$ be a globally hyperbolic spacetime and $\Sigma_{t=0}$ be an initial Cauchy hypersurface of negative Yamabe type and on it $(g_{0},k_{0})\in H^{s}\times H^{s-1} ~(s\geq 4)$ is the initial data for the Cauchy problem of the vacuum Einstein evolution equations in Constant Mean extrinsic Curvature Spatial Harmonic (CMCSH) gauge satisfying the constraint equations. This CMCSH Cauchy problem is well posed in $\mathcal{C}\left([0,t^{*}];H^{s}\times H^{s-1}\right)$. In particular, there exists a time $t^{*}>0$ dependent on $||g_{0}||_{H^{s}},||k_{0}||_{H^{s-1}}$ such that the solution map $(g_{0},k_{0})\mapsto (g(t),k(t),N(t),X(t))$ is continuous 
\begin{eqnarray}\nonumber
H^{s}\times H^{s-1}
\to H^{s}\times H^{s-1}\times H^{s+1}\times H^{s+1}.
\end{eqnarray}
Let $T^{*}$ be the maximal time of existence (i.e., $T^{*}\geq t^{*}$) of a solution to the CMCSH Cauchy problem with data $(g_{0},k_{0})$, then either $T^{*}=\infty$ or 
\begin{eqnarray}
\lim_{t\to T^{*}}\sup ||^\mathbf{n}\pi(t)||_{L^{\infty}(\Sigma_{t})}=\infty.
\end{eqnarray}}

\noindent The proof is \textit{not} straightforward. Direct energy-type argument for the Riemann curvature alone fails to yield the desired result since one would require a point-wise bound to close such an argument (which by means of Sobolev embedding can only lead to an existence result for a short time). Therefore, one must require an additional means to estimate the point-wise behavior of the Riemann curvature. To this end, we utilize the integral equation for the Riemann curvature derived by Moncrief \cite{integral}. In order to make sense of such an integral equation on a dynamical spacetime, one needs to have a well-defined geodesically convex neighborhood. This in turn requires a bound on the point-wise norm of the Riemann curvature. Once again, this leads to a circular argument. In order to circumvent this issue, we shall use the bootstrap method (note that this `bootstrap' has nothing to do with the conformal `bootstrap' of field theory) that lies at the heart of hyperbolic partial differential equations \cite{ringstrom2009cauchy}. We begin by assuming a point-wise upper bound of the Riemann curvature. This together with the bound on $^{\mathbf{n}}\pi$ and Theorems 3.3 and 5.2 of \cite{lefloch2008injectivity} allows us to utilize Moncrief's integral equation (\ref{eq:integral}). Analysis of this equation yields a point-wise bound of the Riemann curvature that is better than the bound we assumed in the first place. Upon closure of the bootstrap argument, the newly acquired point-wise bound of Riemann curvature is then used to prove the main theorem via energy estimates.

\noindent We shall now sketch the plan of action in more detail. First and foremost we need to define the gauge-invariant $L^\infty$ norm, it will be with respect to the Riemannian metric below
\begin{align}
\label{eq:euclidean_metric}
    E:=\widehat{g}+2\mathbf{n}\otimes \mathbf{n}.
\end{align}
The norm-squared of a continuous tensor field $\mathbf{T}$ is to be the contraction with itself via $E$
\begin{align}
    |\mathbf{T}|^2_E:=\mathbf{T}^{ab\cdots}~_{cd\cdots}\mathbf{T}^{ef\cdots}~_{gh\cdots}E_{ae}E_{bf}\cdots E^{cg}E^{dh}\cdots
\end{align}
Note that we will sometimes omit the subscript $E$ for cleanliness. We then define the gauge-invariant $L^\infty$ norm over a CMC slice $\Sigma_t\subset M$ as expected
\begin{align}
    ||\mathbf{T}(t)||_{L^\infty(\Sigma_t)}:=\sup_{\Vec{x}\in\Sigma_t} |\mathbf{T}(t,\Vec{x})|_E
\end{align}
Our continuation criteria is $L^\infty$ finiteness of the deformation tensor of $\mathbf{n}$ over the slab $[0,T^*]\times\Sigma$ for any $T^{*}<\infty$
\begin{align}
\label{eq:breakdown_criteria}
    \sup_{t\in[0,T^*]}||^\mathbf{n}\pi(t)||_{L^\infty(\Sigma_t)}< \infty
\end{align}
From here on, we denote by $C(^\mathbf{n}\pi)$ any bounded function that solely depends on the estimate above. 

\noindent With this in hand, one can begin performing the estimates. We require three definitions for total energy over a CMC slice at time $t$, these are to be denoted by $E^{0}(t), E^{1}(t), E^{2}(t)$ and must control the $L^2(\Sigma_t)$ norm-squared of $\mathbf{Rm}_{\widehat{g}}, D\mathbf{Rm}_{\widehat{g}}, D^2\mathbf{Rm}_{\widehat{g}}$ respectively (the spacetime gauge covariant derivative $D$ is to be defined later). Appropriate choices of energies will lead to the following bounds
\begin{align}
\label{eq:schematic_0th}
    &E^{0}(t_2)\leq  e^{C(^\mathbf{n}\pi)|t_2-t_1|}E^{0}(t_1)\leq  \cdots \leq  C(^\mathbf{n}\pi,t_2)E^{0}(0) \\
\label{eq:schematic_1st}
    &E^{1}(t_2)\leq  e^{C(^\mathbf{n}\pi)|t_2-t_1|}\Big(E^{1}(t_1)+\int_{t_1}^{t_2}E^{0}(t)||\mathbf{Rm}_{\widehat{g}}(t)||^2_{L^{\infty}(\Sigma_t)}dt\Big) \\
\label{eq:schematic_2nd}
    &E^{2}(t_2)\leq  e^{C(^\mathbf{n}\pi)|t_2-t_1|}\Big(E^{2}(t_1)+\int_{t_1}^{t_2}E^{1}(t)||\mathbf{Rm}_{\widehat{g}}(t)||^2_{L^{\infty}(\Sigma_t)}dt\Big)
\end{align}
where $t_1\leq t_2$ are times in the interval $[0,T^*]$. Obviously, these inequalities are a consequence of general energy nonconservation as we do not assume a timelike Killing field in the spacetime. Furthermore, the appearance of the $ L^\infty$ norm of the Riemann curvature in the 1st and 2nd order bounds prevents the continuation proof from being trivial, if this term was absent then we can simply repeat the estimates until we reach the initial data. We somehow need to bound this point-wise term by the energies at earlier times in order to apply an iteration argument. To this end, we invoke Moncrief's local integral equation for the Riemann tensor \cite{integral} which holds in the geodesically convex causal domain about a chosen point $p\in M$ (denoted $\mathcal{G}_p$). The existence of this neighborhood relies on lower bounds for the null and chronological injectivity radii of the exponential map about $p$. In the CMC gauge, this is guaranteed by the results of Chen and LeFloch \cite{chen2008injectivity,lefloch2008injectivity} as well as the prior study of Klainerman and Rodnianski \cite{KR1,KR2,KR3} on causal geometry of vacuum spacetime.

\begin{theorem}[Chen and LeFloch]
    \label{thm:chron_inj}
    Let $p\in M$. Suppose the domain of the exponential map $\exp_{p}$ contains an $E-$ball of radius $r$ $B_E(0,r)=\{v\in T_pM:E_p(0,v)\leq r\}$ and the Riemann curvature satisfies
\begin{equation}
    \begin{gathered}
        \sup_{\gamma}\sup_a\;\lvert \mathbf{Rm}_{\widehat{g}}(\gamma(a))\rvert_{E}\leq \frac{1}{r^{2}},
    \end{gathered}
\end{equation}
where supremum is taken over every $\widehat{g}-$geodesic $\gamma$ initiating from a vector lying in $B_{E}(0,r)$, then there exists a uniform constant $C\in (0,1)$ such that the following bound is fulfilled by the injectivity radius
\begin{equation}
    \begin{gathered}
        \frac{\mathrm{Inj}_{\widehat{g}}(M,p,E)}{r}\geq C \frac{\mathrm{Vol}_{\widehat{g}}(\mathcal{B}_{E}(p,Cr))}{r^{4}}
    \end{gathered}
\end{equation}
with $\mathcal{B}_{E}(p,r):=\exp_{p}(B_{E}(0,r))$.
\end{theorem}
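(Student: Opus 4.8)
The plan is to prove this injectivity-radius lower bound by combining a curvature-bounded local geometry estimate with a volume-comparison argument, essentially following the strategy of Cheeger–Gromov–Taylor in the Riemannian setting as adapted to Lorentzian manifolds by Chen and LeFloch. The first step is to reduce the statement to an estimate purely about geodesics emanating from $p$. Since by hypothesis the domain of $\exp_p$ already contains the $E$-ball $B_E(0,r)$, the only thing that can obstruct $\exp_p|_{B_E(0,r')}$ from being a diffeomorphism for small $r'$ is either a conjugate point along some geodesic inside the ball, or a failure of injectivity (two distinct geodesics meeting). The curvature hypothesis $\lvert \mathbf{Rm}_{\widehat g}\rvert_E \le r^{-2}$ along all such geodesics is precisely what one needs to rule out conjugate points within $E$-distance $\sim r$: one sets up the Jacobi equation along a geodesic $\gamma$ in an $E$-orthonormal frame parallel-transported along $\gamma$, bounds the curvature operator appearing in it by $r^{-2}$, and runs a standard Gr\"onwall/Rauch-type comparison to show the Jacobi fields do not vanish before parameter value $c_0 r$ for a universal $c_0$. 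This gives a lower bound on the \emph{conjugacy} radius of order $r$; the subtle part, which is where the volume term enters, is passing from "no conjugate points" to "genuinely injective."

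For the injectivity step I would argue by contradiction: if $\mathrm{Inj}_{\widehat g}(M,p,E)$ were much smaller than the conjugacy radius, there would be a geodesic loop (or a short geodesic from $p$ back to $p$, after the usual argument that the shortest obstruction is a loop based at $p$) of $E$-length $2\ell$ with $\ell = \mathrm{Inj}_{\widehat g}(M,p,E)$. One then uses this loop to build a "folding" of the ball $\mathcal B_E(p, r_0)$, $r_0 \sim Cr$, onto itself — intuitively, the deck-transformation-like map generated by the loop forces the exponential image to cover itself with multiplicity, so that the honest (injective) image has volume at most a fixed fraction of what $\exp_p$ sweeps out. Quantitatively this produces an inequality of the shape $\mathrm{Vol}_{\widehat g}(\mathcal B_E(p,Cr)) \le C' \,(\ell/r)\cdot r^4$, or more precisely one estimates the volume of the image in terms of the Jacobian of $\exp_p$, which by the curvature bound is comparable to the flat Jacobian up to universal constants on the ball of radius $\sim r$, yielding $\mathrm{Vol}_{\widehat g}(\mathcal B_E(p,Cr)) \le C'' \ell r^3 = C''\, \mathrm{Inj}_{\widehat g}(M,p,E)\cdot r^3$. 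Rearranging gives exactly $\mathrm{Inj}_{\widehat g}(M,p,E)/r \ge C\,\mathrm{Vol}_{\widehat g}(\mathcal B_E(p,Cr))/r^4$. One must be careful that "volume" here is taken with respect to $E$ (or equivalently, since $E$ and $\widehat g$ agree up to the $2\mathbf n\otimes\mathbf n$ term which is controlled, with a fixed comparability constant), so the Riemannian comparison machinery applies verbatim to the positive-definite metric $E$ while the geodesics and conjugate points are those of the Lorentzian $\widehat g$; reconciling these two metric structures — geodesics/Jacobi fields from $\widehat g$, distances/balls/volumes from $E$ — is the technical heart of the Chen–LeFloch argument.

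The main obstacle I anticipate is precisely this interplay between the Lorentzian connection and the auxiliary Riemannian metric $E$: the Jacobi equation, conjugate points, and the map $\exp_p$ are all governed by $\widehat g$ (and its Levi-Civita connection), whereas balls, radii, and volumes in the statement are measured with $E$. One has to control how much the $\widehat g$-geodesic flow distorts $E$-lengths — this is where the curvature bound $\lvert\mathbf{Rm}_{\widehat g}\rvert_E\le r^{-2}$ does double duty, both in the Jacobi comparison and in bounding the $E$-operator norm of $D\exp_p$ and its inverse on $B_E(0,c_0 r)$. A clean way to organize this is to fix an $E$-orthonormal basis of $T_pM$, write everything in the resulting coordinates, and observe that the geodesic equation's coefficients (Christoffel symbols of $\widehat g$) are controlled by one integration of the curvature bound along rays, so that on the scale $r$ the geodesic normal coordinates of $\widehat g$ are uniformly bi-Lipschitz-comparable to the linear $E$-coordinates. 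Once that comparability is in hand with universal constants, the volume-folding argument proceeds exactly as in the Riemannian case, and since the statement is quoted here as an external input from \cite{chen2008injectivity,lefloch2008injectivity} rather than reproved, it suffices to indicate that this is the route and that all constants can be taken uniform once the curvature normalization $\lvert\mathbf{Rm}_{\widehat g}\rvert_E r^2\le 1$ is imposed.
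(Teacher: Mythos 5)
The paper does not prove this theorem at all --- it is imported verbatim as an external input from \cite{chen2008injectivity,lefloch2008injectivity}, so there is no internal proof to compare against. Your sketch correctly reproduces the strategy of the cited Chen--LeFloch argument: a Rauch/Jacobi comparison along $\widehat{g}$-geodesics measured in the reference Riemannian metric $E$ to bound the conjugacy radius from below by $\sim r$, followed by a Cheeger--Gromov--Taylor volume-folding argument (short geodesic loop $\Rightarrow$ $\mathrm{Vol}_{\widehat g}(\mathcal B_E(p,Cr))\lesssim \ell\, r^{3}$) to convert the volume lower bound into the injectivity-radius lower bound, with the interplay between the Lorentzian connection and the positive-definite metric $E$ being, as you say, the technical heart.
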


\begin{theorem}[LeFloch, Klainerman and Rodnianski]
    \label{thm:null_inj}
    The null injectivity radius of an observer located at $p$ in an Einstein vacuum spacetime is uniformly controlled solely in terms of the lapse function, the second fundamental form of the foliation, finite initial $L^{2}$ data of curvature and lower volume bounds on some initial hypersurface.
\end{theorem}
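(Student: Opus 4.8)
The plan is to adapt the causal-geometry estimates of Klainerman--Rodnianski \cite{KR1,KR2,KR3}, in the form later streamlined by Chen--LeFloch \cite{chen2008injectivity,lefloch2008injectivity}, to produce an explicit lower bound $\mathrm{NullInj}_{\widehat{g}}(M,p,E)\ge \ell_0>0$ in which $\ell_0$ is a function only of an upper bound for $\|k\|_{L^\infty}$, of upper and lower bounds for the lapse $N$ (and for $|\nabla N|$), of the initial $L^2$ norm of $\mathbf{Rm}_{\widehat{g}}$ on $\Sigma_0$, and of a lower volume bound on $\Sigma_0$. The starting point is the classical dichotomy for the past null cone $C_p$: if $\mathrm{NullInj}_{\widehat{g}}(M,p,E)$ is small then either the \emph{null conjugacy radius} $s_\ast(p)$ is small --- some past null geodesic issuing from $p$ acquires a conjugate point at small $E$-parameter --- or the \emph{null cut radius} is small --- two distinct past null geodesics from $p$ meet, at small $E$-parameter, at a point that is conjugate along neither. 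One estimates each of these radii from below separately, and the main device in both cases is a rescaling: if the relevant radius equalled $r$, rescale $\widehat{g}\mapsto r^{-2}\widehat{g}$ so that the offending cone has unit size; the curvature flux over the cone then carries a favorable power of $r$ and becomes small, while $\|k\|_{L^\infty}$, the bounds on $N$, and the volume radius at the rescaled unit scale all remain controlled by the hypotheses.

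\noindent For the conjugacy radius, parametrize $C_p$ by the past null geodesic flow with generator $L$, and let $\chi$ be the associated null second fundamental form with trace $\mathrm{tr}\chi$ and traceless part $\hat\chi$. In vacuum, where $\mathbf{Ric}_{\widehat{g}}=0$, the Raychaudhuri equation reduces to $\frac{d}{ds}\mathrm{tr}\chi+\tfrac{1}{2}(\mathrm{tr}\chi)^2=-|\hat\chi|^2$, so a conjugate point cannot occur on an interval $[0,\delta]$ once $\hat\chi$ is bounded in $L^\infty_s$ there. The transport equation for $\hat\chi$ along $L$ carries the null curvature component $\alpha=\mathbf{Rm}_{\widehat{g}}(L,\cdot,L,\cdot)$ as its source, and a bootstrap on the spacelike $2$-surfaces $S_s\subset C_p$ of constant null parameter --- Hodge and elliptic estimates for $\hat\chi$, for $\mathrm{tr}\chi-\tfrac{2}{s}$, for the torsion, and for the mass aspect function, combined with the exactly Minkowskian vertex asymptotics as $s\to 0$ --- closes once the null curvature flux $\int_{C_p}\big(|\alpha|^2+\cdots\big)$ over the rescaled unit cone lies below a universal threshold, which the rescaling guarantees. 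Here the lapse $N$ and the second fundamental form $k$ of the foliation are exactly what render the $E$-geometry uniformly comparable to the geometry adapted to $\{\Sigma_t\}$, so that ``small with respect to $E$'' is a meaningful smallness, and what permit the reduction of the flux through $C_p$ to a spacetime $L^2$ curvature integral over $D_p$.

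\noindent For the cut radius one argues by contradiction: suppose that past null geodesics from $p$ first self-intersect at $E$-parameter $r=\ell<\ell_0$. After rescaling to unit size, the time function $t$ restricted to the unit cone is proper with level sets $B_p(t)$ whose area, diameter and second fundamental form are controlled by $N$ and $k$, and whose non-collapse is guaranteed by the volume lower bound; a ``first-intersection'' analysis then shows that at the pinch point either the null lapse or the induced $2$-surface geometry of $C_p$ must have degenerated by a definite amount, contradicting the uniform control obtained in the conjugacy step. Equivalently, the self-intersection produces a short null geodesic bridge joining two distinct points of $C_p$, and a comparison and covering argument against the reference metric $E$ forces its $E$-length to be bounded below --- this is the delicate global estimate of \cite{KR1,KR2,KR3}. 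It remains to trade the curvature flux for initial data: Bel--Robinson energy estimates give $\|\mathbf{Rm}_{\widehat{g}}\|_{L^2(D_p)}\le C(N,k,t)\,\|\mathbf{Rm}_{\widehat{g}}(0)\|_{L^2(\Sigma_0)}$, with $C$ depending only on the lapse, the second fundamental form, and the elapsed time, so that the flux feeding the bootstrap is ultimately controlled by the initial $L^2$ curvature together with the foliation and volume bounds, which yields the asserted $\ell_0$.

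\noindent The main obstacle is the cut-radius step. Excluding non-conjugate null self-intersections is the technical heart of the Klainerman--Rodnianski analysis of the radius of injectivity of null hypersurfaces: whereas the conjugacy bound is a substantial but essentially ODE-plus-elliptic analysis of the null structure equations along $C_p$, ruling out short null geodesic bridges between distinct points of the cone requires a genuinely global argument that transports the controlled foliation geometry along the entire cone and interfaces it with the reference Euclidean structure $E$. It is precisely there that all four classes of hypotheses --- lapse, second fundamental form, $L^2$ curvature, and volume --- are simultaneously consumed.
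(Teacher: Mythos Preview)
The paper does not actually prove this theorem: it is stated as a result quoted from the literature (\cite{lefloch2008injectivity} and the Klainerman--Rodnianski series \cite{KR1,KR2,KR3}) and is invoked as a black box. The only accompanying remark in the paper is that in CMC gauge the elliptic lapse equation $\Delta_g N + |k|^2 N = \partial_t \mathrm{tr}_g k$ allows one to convert the hypotheses of the theorem into a bound depending only on $\|{}^{\mathbf{n}}\pi\|_{L^\infty}$ and the initial data, so that the theorem's conclusion applies in the setting of the main argument.

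Your write-up is therefore not really comparable to anything in the paper; rather, it is a competent synopsis of the proof contained in the cited references. The two-step dichotomy (null conjugacy radius versus null cut radius), the Raychaudhuri/null-structure bootstrap for the former, the global geodesic-bridge/volume-comparison argument for the latter, the rescaling to unit cone size, and the reduction of the cone flux to initial $L^2$ curvature via Bel--Robinson energy estimates are indeed the skeleton of the Klainerman--Rodnianski argument as repackaged by LeFloch. As a sketch this is accurate and identifies the correct pressure point (the cut-radius step). If your intent was to reproduce the paper's own proof, be aware there is none to reproduce; if your intent was to outline the cited proof, this is fine, though of course each of the bootstrap and comparison steps you allude to occupies many pages in \cite{KR1,KR2,KR3} and your outline would not stand alone as a proof.
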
 

\noindent In CMC time gauge, one can use the elliptic equation
\begin{eqnarray}
\Delta_{g}N+|k|^{2}N=\frac{\partial \mathrm{tr}_{g}k}{\partial t},\nonumber
\end{eqnarray}
to obtain a point-wise estimate for the lapse function $N$ in terms of the second fundamental form $k$ (and therefore the deformation tensor of $\mathbf{n}$). The initial data is assumed to be finite, therefore we get a lower bound on the null injectivity radius. Consequently, we are now allowed to \textit{draw} a past light cone emanating from $p$ that exists throughout the range of the null exponential map. Our analysis will require us to work in the normal neighborhoods of several points inside the slab $[0,T^*]\times\Sigma$ and the past cones of such points will need to extend to a uniform length $\delta$. We declare
\begin{align}
\label{eq:delta}
    0<\delta\leq  \inf_{p\in\Sigma\times [0,T^*]} \mathrm{NullInj}_{\widehat{g}}(p,E)
\end{align}
in order to meet our needs (see Fig. \ref{fig:1}). Our bootstrap assumption on the curvature is
\begin{align}
\label{eq:bootstrap}
    \sup_{t\in [0,T^*]}||\mathbf{Rm}_{\widehat{g}}(t)||_{L^\infty(\Sigma_t)}\leq \frac{1}{\delta^2} \quad \mathrm{and}\quad \mathrm{Vol}_{\widehat{g}}(\mathcal{B}_{E}(p,C\delta))\geq \frac{\delta^4}{C} \:\: \forall p\in [0,T^*]\times\Sigma
\end{align}

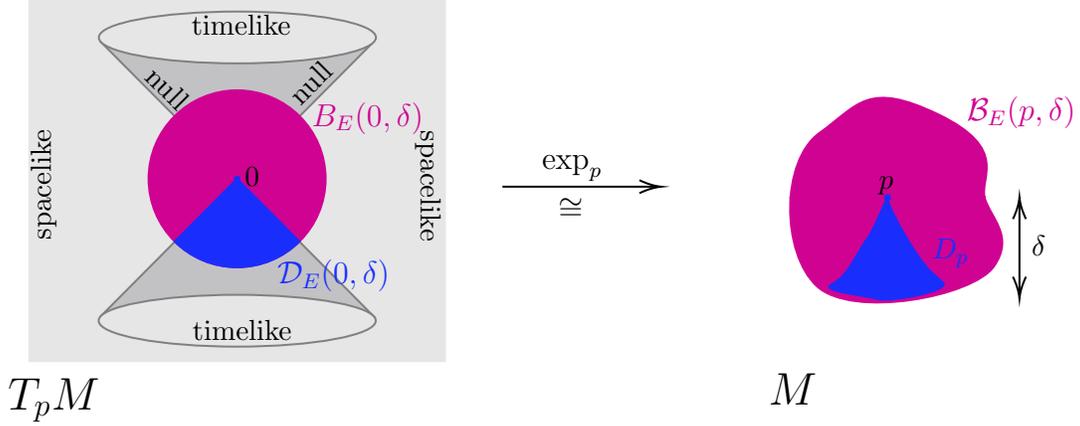
\begin{figure}[t]
    \centering

\tikzset{every picture/.style={line width=0.75pt}} 

\begin{tikzpicture}[x=0.75pt,y=0.75pt,yscale=-1,xscale=1]

\draw  [color={rgb, 255:red, 230; green, 230; blue, 230 }  ,draw opacity=1 ][fill={rgb, 255:red, 230; green, 230; blue, 230 }  ,fill opacity=1 ] (21.42,9.83) -- (230.99,9.83) -- (230.99,193.58) -- (21.42,193.58) -- cycle ;
\draw  [color={rgb, 255:red, 128; green, 128; blue, 128 }  ,draw opacity=1 ][fill={rgb, 255:red, 194; green, 194; blue, 197 }  ,fill opacity=0.67 ] (53.97,29.01) -- (198.45,29.01) -- (53.97,174.4) -- (198.45,174.4) -- cycle ;
\draw  [color={rgb, 255:red, 208; green, 2; blue, 146 }  ,draw opacity=1 ][fill={rgb, 255:red, 208; green, 2; blue, 146 }  ,fill opacity=0.5 ] (81.53,101.56) .. controls (81.53,76.91) and (101.52,56.92) .. (126.17,56.92) .. controls (150.83,56.92) and (170.82,76.91) .. (170.82,101.56) .. controls (170.82,126.22) and (150.83,146.2) .. (126.17,146.2) .. controls (101.52,146.2) and (81.53,126.22) .. (81.53,101.56) -- cycle ;
\draw  [color={rgb, 255:red, 208; green, 2; blue, 146 }  ,draw opacity=1 ][dash pattern={on 1.5pt off 1.5pt}] (81.85,103.48) .. controls (81.85,96.99) and (101.7,91.72) .. (126.17,91.72) .. controls (150.65,91.72) and (170.49,96.99) .. (170.49,103.48) .. controls (170.49,109.97) and (150.65,115.23) .. (126.17,115.23) .. controls (101.7,115.23) and (81.85,109.97) .. (81.85,103.48) -- cycle ;
\draw  [color={rgb, 255:red, 25; green, 46; blue, 253 }  ,draw opacity=1 ][fill={rgb, 255:red, 25; green, 46; blue, 253 }  ,fill opacity=1 ] (125.04,101.71) .. controls (125.04,101.06) and (125.56,100.54) .. (126.21,100.54) .. controls (126.85,100.54) and (127.38,101.06) .. (127.38,101.71) .. controls (127.38,102.35) and (126.85,102.88) .. (126.21,102.88) .. controls (125.56,102.88) and (125.04,102.35) .. (125.04,101.71) -- cycle ;
\draw    (260,105.57) -- (338,105.57) ;
\draw [shift={(340,105.57)}, rotate = 180] [color={rgb, 255:red, 0; green, 0; blue, 0 }  ][line width=0.75]    (10.93,-3.29) .. controls (6.95,-1.4) and (3.31,-0.3) .. (0,0) .. controls (3.31,0.3) and (6.95,1.4) .. (10.93,3.29)   ;
\draw [color={rgb, 255:red, 230; green, 230; blue, 230 }  ,draw opacity=1 ][line width=3.75]    (47.61,29.82) -- (205.04,30.68) ;
\draw  [color={rgb, 255:red, 25; green, 46; blue, 253 }  ,draw opacity=1 ][fill={rgb, 255:red, 25; green, 46; blue, 253 }  ,fill opacity=0.6 ] (126.16,146.2) .. controls (113.9,146.2) and (102.79,141.26) .. (94.72,133.25) -- (126.17,101.56) -- (157.62,133.24) .. controls (149.55,141.25) and (138.43,146.2) .. (126.16,146.2) -- cycle ;
\draw [color={rgb, 255:red, 25; green, 46; blue, 253 }  ,draw opacity=1 ]   (94.72,133.25) .. controls (119.38,125.16) and (133.36,125.16) .. (157.62,133.24) ;
\draw  [color={rgb, 255:red, 128; green, 128; blue, 128 }  ,draw opacity=1 ][fill={rgb, 255:red, 230; green, 230; blue, 230 }  ,fill opacity=1 ] (56.43,30.25) .. controls (56.43,23.03) and (87.72,17.17) .. (126.32,17.17) .. controls (164.93,17.17) and (196.22,23.03) .. (196.22,30.25) .. controls (196.22,37.48) and (164.93,43.33) .. (126.32,43.33) .. controls (87.72,43.33) and (56.43,37.48) .. (56.43,30.25) -- cycle ;
\draw [color={rgb, 255:red, 230; green, 230; blue, 230 }  ,draw opacity=1 ][line width=3.75]    (48.9,173.38) -- (206.33,174.24) ;
\draw  [color={rgb, 255:red, 128; green, 128; blue, 128 }  ,draw opacity=1 ][fill={rgb, 255:red, 230; green, 230; blue, 230 }  ,fill opacity=1 ] (56.33,173.21) .. controls (56.33,165.99) and (87.62,160.13) .. (126.23,160.13) .. controls (164.83,160.13) and (196.12,165.99) .. (196.12,173.21) .. controls (196.12,180.44) and (164.83,186.29) .. (126.23,186.29) .. controls (87.62,186.29) and (56.33,180.44) .. (56.33,173.21) -- cycle ;
\draw  [color={rgb, 255:red, 208; green, 2; blue, 146 }  ,draw opacity=1 ][fill={rgb, 255:red, 208; green, 2; blue, 146 }  ,fill opacity=0.5 ] (507.89,145.78) .. controls (497.22,161.11) and (436.56,172.44) .. (418.56,155.78) .. controls (400.56,139.11) and (400.22,92.56) .. (419.89,78.44) .. controls (439.56,64.33) and (443.89,51.78) .. (479.89,69.11) .. controls (515.89,86.44) and (500.56,101.11) .. (503.22,111.11) .. controls (505.89,121.11) and (518.56,130.44) .. (507.89,145.78) -- cycle ;
\draw  [color={rgb, 255:red, 208; green, 2; blue, 146 }  ,draw opacity=1 ][dash pattern={on 1.5pt off 1.5pt}] (405.22,111.11) .. controls (405.22,104.62) and (427.16,99.36) .. (454.22,99.36) .. controls (481.28,99.36) and (503.22,104.62) .. (503.22,111.11) .. controls (503.22,117.6) and (481.28,122.86) .. (454.22,122.86) .. controls (427.16,122.86) and (405.22,117.6) .. (405.22,111.11) -- cycle ;
\draw  [color={rgb, 255:red, 25; green, 46; blue, 253 }  ,draw opacity=1 ][fill={rgb, 255:red, 25; green, 46; blue, 253 }  ,fill opacity=1 ] (452.91,111.11) .. controls (452.91,110.39) and (453.5,109.8) .. (454.22,109.8) .. controls (454.95,109.8) and (455.53,110.39) .. (455.53,111.11) .. controls (455.53,111.83) and (454.95,112.42) .. (454.22,112.42) .. controls (453.5,112.42) and (452.91,111.83) .. (452.91,111.11) -- cycle ;
\draw  [color={rgb, 255:red, 25; green, 46; blue, 253 }  ,draw opacity=1 ][fill={rgb, 255:red, 25; green, 46; blue, 253 }  ,fill opacity=0.6 ] (426.53,154.57) .. controls (434.37,150.73) and (439.33,137.9) .. (443.53,132.5) .. controls (447.73,127.1) and (454.69,108.67) .. (454.22,111.11) .. controls (453.75,113.56) and (461.61,128.73) .. (468.78,139.23) .. controls (475.94,149.73) and (477.76,149.7) .. (482.13,153.97) .. controls (486.51,158.23) and (454.84,164.19) .. (448.73,162.17) .. controls (442.63,160.15) and (418.7,158.4) .. (426.53,154.57) -- cycle ;
\draw [color={rgb, 255:red, 25; green, 46; blue, 253 }  ,draw opacity=1 ]   (424.97,155.83) .. controls (448.79,149.06) and (454.92,151.94) .. (459.67,152.56) .. controls (464.42,153.19) and (474.92,153.06) .. (482.29,154.73) ;
\draw    (520.75,132.21) -- (520.75,112.96) ;
\draw [shift={(520.75,110.96)}, rotate = 90] [color={rgb, 255:red, 0; green, 0; blue, 0 }  ][line width=0.75]    (10.93,-3.29) .. controls (6.95,-1.4) and (3.31,-0.3) .. (0,0) .. controls (3.31,0.3) and (6.95,1.4) .. (10.93,3.29)   ;
\draw    (520.75,132.21) -- (520.75,159.96) ;
\draw [shift={(520.75,161.96)}, rotate = 270] [color={rgb, 255:red, 0; green, 0; blue, 0 }  ][line width=0.75]    (10.93,-3.29) .. controls (6.95,-1.4) and (3.31,-0.3) .. (0,0) .. controls (3.31,0.3) and (6.95,1.4) .. (10.93,3.29)   ;

\draw (9.27,200.6) node [anchor=north west][inner sep=0.75pt]  [font=\LARGE]  {$T_{p} M$};
\draw (162.49,61.29) node [anchor=north west][inner sep=0.75pt]  [font=\large,color={rgb, 255:red, 208; green, 2; blue, 146 }  ,opacity=1 ]  {$B_{E}( 0,\delta )$};
\draw (128.7,94.21) node [anchor=north west][inner sep=0.75pt]    {$0$};
\draw (278.5,88.4) node [anchor=north west][inner sep=0.75pt]  [font=\large]  {$\mathrm{exp}_{p}$};
\draw (286.75,109.9) node [anchor=north west][inner sep=0.75pt]  [font=\large]  {$\cong $};
\draw (392.75,198.4) node [anchor=north west][inner sep=0.75pt]  [font=\LARGE]  {$M$};
\draw (101.68,17.44) node [anchor=north west][inner sep=0.75pt]   [align=left] {timelike};
\draw (102.22,171.08) node [anchor=north west][inner sep=0.75pt]   [align=left] {timelike};
\draw (230.41,75.67) node [anchor=north west][inner sep=0.75pt]  [rotate=-90] [align=left] {spacelike};
\draw (20.62,133.67) node [anchor=north west][inner sep=0.75pt]  [rotate=-270] [align=left] {spacelike};
\draw (86.17,41.4) node [anchor=north west][inner sep=0.75pt]  [rotate=-45] [align=left] {null};
\draw (148.88,57.73) node [anchor=north west][inner sep=0.75pt]  [rotate=-315] [align=left] {null};
\draw (145.01,140.81) node [anchor=north west][inner sep=0.75pt]  [font=\large,color={rgb, 255:red, 25; green, 46; blue, 253 }  ,opacity=1 ]  {$\mathcal{D}_{E}( 0,\delta )$};
\draw (448.38,98.78) node [anchor=north west][inner sep=0.75pt]    {$p$};
\draw (475.24,130.4) node [anchor=north west][inner sep=0.75pt]  [color={rgb, 255:red, 25; green, 46; blue, 253 }  ,opacity=1 ]  {$D_{p}$};
\draw (492.82,58.62) node [anchor=north west][inner sep=0.75pt]  [font=\large,color={rgb, 255:red, 208; green, 2; blue, 146 }  ,opacity=1 ]  {$\mathcal{B}_{E}( p,\delta )$};
\draw (525.42,128.65) node [anchor=north west][inner sep=0.75pt]    {$\delta $};

\end{tikzpicture}

    \caption{Our assumptions yield a uniform lower bound for the null injectivity radius at any point $p\in\Sigma\times [0,T^*]$ which is controlled by the point-wise bound on the deformation tensor of $\mathbf{n}$ and the regular $L^2$ initial data for the curvature. $\mathcal{D}_E(0,\delta)$ is diffeomorphic to its image under the exponential map, the resulting cone $D_p$ is well-behaved in the sense that light rays emanating from the vertex will not have a common event within the region of existence (Euclidean length $\delta\ll 1$).}
    \label{fig:1}
\end{figure}

\noindent Again, we must eventually justify the above by obtaining a refined point-wise bound. We can now safely invoke Moncrief's integral equation which has a mantle term and a 2-sphere term in the following schematic form 
\begin{align}\nonumber
    \mathbf{Rm}_{\widehat{g}}(x)\sim \int_{C_p}(\,\cdots) + \int_{\sigma_{p}(t_p-\delta)}(\,\cdots), \quad x\in \mathcal{G}_p
\end{align}
\noindent here $p$ is a point in the $[0,T^*]$ slab and $t_p$ is its global time coordinate. The explicit expression is to be given in \hyperlink{section.5}{Sec. 5}. The $L^\infty$ squared norm of the curvature at the $t_p$ Cauchy slice will be shown to obey
\begin{align}
\label{eq:l_infty_schematic}
    ||\mathbf{Rm}_{\widehat{g}}(t_p)||^2_{L^\infty(\Sigma_{t_p})}\leq C(  E^{2}(t_p-\delta)+E^{1}(t_p-\delta)+\delta^{-1}E^{0}(t_p-\delta))
\end{align}
This bound is possible due to two fundamental reasons. Firstly, it will rely on the tetrad/SO(1,3) frame bundle formalism of GR which yields curvature-dependent formulas for the connection and (co-)frame fields in the normal neighborhood of $p$ (and so in $\mathcal{G}_p$) thanks to the defining properties of the bundles at play (exterior covariant derivatives, Cartan structure equations, etc) \cite{integral}\cite{moncrief2019could}. One will then be able to use such formulae in addition to (\ref{eq:schematic_0th})-(\ref{eq:schematic_2nd}) for an application of Grönwall's inequality. Secondly, Moncrief's equation utilizes a so-called \textit{null structure} present in the non-linear gauge-covariant wave equation for the curvature. Long-time existence or finite time blow-up of the quasi-linear hyperbolic Einstein equations is essentially determined by the relative strengths of the non-linearities and the geometric dispersion associated with the wave characteristics (or energy decay caused by a rapid expansion of the spacetime). Oftentimes, the special structure of the non-linearities makes them weak compared to the linear dispersive terms at the level of small data. A large number of studies exist in the literature that deal with this issue of the structure of the non-linearities. Klainerman \cite{klainerman} showed that if the non-linear terms satisfy the so-called \textit{null} condition in $3+1$ dimensions, then the global existence holds for small data limit contrary to a generic non-linearity for which global existence just fails in $3+1$ dimensions. In the current context, the null structure present in some of the $C_p$ non-linear terms is fined-tuned to prevent finite time singularities, namely, there will be no causal focusing of curvature energy due to the absence of Ricatti-type self-interaction (the dispersive effect dominates instead). The precise calculations will be given in \hyperlink{section.6}{Sec. 6}.

\noindent Recall that $p$ has been arbitrary so far, we now declare it to be the point where the supremum of $\mathbf{Rm}_{\widehat{g}}$ is attained over the slab $[t^*-\delta,t^*]\times \Sigma$ (this is possible since the Cauchy slices are closed). Proceed by substituting (\ref{eq:l_infty_schematic}) into the 1st and 2nd-order energy estimates over time intervals of length $\delta$ and iterate until we reach the initial data (see Fig. \ref{fig:2} for a pictorial view of this mechanism), we shall find a pivotal factor of $\delta^{-1}$ which will allow us to close the bootstrap. More precisely, we will obtain
\begin{align}
    \quad E^{2}(t^*)+E^{1}(t^*)+E^{0}(t^*)\lesssim  1 +\delta^{-1}C(^\mathbf{n}\pi,t^*,||\mathbf{Rm}_{\widehat{g}}||_{H^{2}(\Sigma_{t=0})})
\end{align}
where $C(^\mathbf{n}\pi,t^*,||\mathbf{Rm}_{\widehat{g}}||_{H^{2}(\Sigma_{t=0})})$ is a constant dependent \textit{only} on the initial hypersurface $H^2$ data of the curvature, the time $t^*$, and the bound for $^\mathbf{n}\pi$. The next step is to use the above estimate as well as (\ref{eq:l_infty_schematic}) to find the bootstrap refinement. This will then imply that the $L^\infty([0,T^*]\times\Sigma)$ norm of the Riemann tensor is actually bounded by $C(^\mathbf{n}\pi,T^*,||\mathbf{Rm}_{\widehat{g}}||_{H^{2}(\Sigma_{t=0})})$. Therefore, one can finally run the energy estimates (\ref{eq:schematic_1st})-(\ref{eq:schematic_2nd}) over the time interval $[0,t^*]$ and easily reach the data at $t=0$. The conclusive result
\begin{align}\nonumber
    ||\mathbf{Rm}_{\widehat{g}}||_{H^{2}(\Sigma_{t^*})}\leq C(^\mathbf{n}\pi,T^*,||\mathbf{Rm}_{\widehat{g}}||_{H^{2}(\Sigma_{t=0})})< \infty
\end{align}
gives the main theorem through standard elliptic arguments in CMCSH gauge.

\begin{figure}[t]
    \centering

\tikzset{every picture/.style={line width=0.75pt}} 

\begin{tikzpicture}[x=0.75pt,y=0.75pt,yscale=-1,xscale=1]

\draw [color={rgb, 255:red, 0; green, 182; blue, 102 }  ,draw opacity=1 ][line width=0.75]    (171.08,226.58) .. controls (255.85,247.17) and (327.41,216.92) .. (479.08,224.08) ;
\draw [color={rgb, 255:red, 0; green, 182; blue, 102 }  ,draw opacity=1 ][line width=0.75]    (190.33,169.83) .. controls (288.53,180.43) and (390.2,166.17) .. (459,172.17) ;
\draw    (144.67,56.01) .. controls (245.19,19.81) and (456.2,83.74) .. (507.33,56.67) ;
\draw    (150,343) .. controls (245.53,374.73) and (370.73,325.93) .. (499.42,343.06) ;
\draw  [dash pattern={on 5.25pt off 5.25pt}]  (173.73,101.33) .. controls (250.53,72.13) and (369,147.5) .. (477,108.5) ;
\draw [color={rgb, 255:red, 72; green, 19; blue, 254 }  ,draw opacity=1 ][line width=0.75]    (161.33,84.13) .. controls (260.04,45.17) and (382.53,128.4) .. (486.53,85.73) ;
\draw [color={rgb, 255:red, 72; green, 19; blue, 254 }  ,draw opacity=1 ][line width=0.75]    (180.13,136.93) .. controls (270.11,121.32) and (358.07,153.37) .. (467.67,138.17) ;
\draw  [color={rgb, 255:red, 72; green, 19; blue, 254 }  ,draw opacity=1 ][fill={rgb, 255:red, 72; green, 19; blue, 254 }  ,fill opacity=1 ][line width=1.5]  (282.36,123.05) .. controls (282.36,122.04) and (283.18,121.21) .. (284.19,121.21) .. controls (285.2,121.21) and (286.02,122.04) .. (286.02,123.05) .. controls (286.02,124.06) and (285.2,124.88) .. (284.19,124.88) .. controls (283.18,124.88) and (282.36,124.06) .. (282.36,123.05) -- cycle ;
\draw  [color={rgb, 255:red, 0; green, 182; blue, 102 }  ,draw opacity=1 ][fill={rgb, 255:red, 0; green, 182; blue, 102 }  ,fill opacity=1 ][line width=1.5]  (365.07,202.74) .. controls (365.07,201.73) and (365.89,200.91) .. (366.9,200.91) .. controls (367.91,200.91) and (368.73,201.73) .. (368.73,202.74) .. controls (368.73,203.76) and (367.91,204.58) .. (366.9,204.58) .. controls (365.89,204.58) and (365.07,203.76) .. (365.07,202.74) -- cycle ;
\draw  [color={rgb, 255:red, 0; green, 182; blue, 102 }  ,draw opacity=1 ][fill={rgb, 255:red, 0; green, 182; blue, 102 }  ,fill opacity=0.63 ][line width=1.5]  (341.83,252.83) .. controls (346.58,252.36) and (368.83,206.08) .. (366.9,202.74) .. controls (364.97,199.41) and (382.23,253.33) .. (391.48,255.33) .. controls (400.73,257.33) and (373.24,258.74) .. (366.33,258.19) .. controls (359.42,257.65) and (337.08,253.31) .. (341.83,252.83) -- cycle ;
\draw  [color={rgb, 255:red, 72; green, 19; blue, 254 }  ,draw opacity=1 ][fill={rgb, 255:red, 72; green, 19; blue, 254 }  ,fill opacity=0.61 ][line width=1.5]  (259,173) .. controls (265.46,172.38) and (286.12,126.39) .. (284.19,123.05) .. controls (282.26,119.71) and (300.75,170.33) .. (310,172.33) .. controls (319.25,174.33) and (302.87,173.95) .. (282.1,174.1) .. controls (261.33,174.26) and (252.54,173.62) .. (259,173) -- cycle ;
\draw [color={rgb, 255:red, 72; green, 19; blue, 254 }  ,draw opacity=1 ]   (115.33,117) .. controls (151.63,109.74) and (232.36,118.16) .. (277.69,123.74) ;
\draw [shift={(279.05,123.9)}, rotate = 187.05] [color={rgb, 255:red, 72; green, 19; blue, 254 }  ,draw opacity=1 ][line width=0.75]    (10.93,-3.29) .. controls (6.95,-1.4) and (3.31,-0.3) .. (0,0) .. controls (3.31,0.3) and (6.95,1.4) .. (10.93,3.29)   ;
\draw [color={rgb, 255:red, 0; green, 182; blue, 102 }  ,draw opacity=1 ]   (514,190.01) .. controls (434.03,180.99) and (399.52,193.61) .. (373.33,200.14) ;
\draw [shift={(371.73,200.53)}, rotate = 346.37] [color={rgb, 255:red, 0; green, 182; blue, 102 }  ,draw opacity=1 ][line width=0.75]    (10.93,-3.29) .. controls (6.95,-1.4) and (3.31,-0.3) .. (0,0) .. controls (3.31,0.3) and (6.95,1.4) .. (10.93,3.29)   ;
\draw [color={rgb, 255:red, 72; green, 19; blue, 254 }  ,draw opacity=1 ][line width=1.5]    (259,173) .. controls (280.54,167.62) and (282.59,166.34) .. (310,172.33) ;
\draw [color={rgb, 255:red, 0; green, 182; blue, 102 }  ,draw opacity=1 ][line width=1.5]    (341.83,252.83) .. controls (366.6,248.44) and (379.86,252.65) .. (392.49,256.23) ;
\draw  [color={rgb, 255:red, 233; green, 154; blue, 0 }  ,draw opacity=1 ][fill={rgb, 255:red, 233; green, 154; blue, 0 }  ,fill opacity=0.63 ][line width=1.5]  (182.33,349.89) .. controls (187.08,349.41) and (209.5,304.79) .. (207.57,301.46) .. controls (205.63,298.12) and (225.08,350.78) .. (234.33,352.78) .. controls (243.58,354.78) and (212.11,353.89) .. (206.33,353.67) .. controls (200.56,353.44) and (177.58,350.37) .. (182.33,349.89) -- cycle ;
\draw  [color={rgb, 255:red, 233; green, 154; blue, 0 }  ,draw opacity=1 ][fill={rgb, 255:red, 233; green, 154; blue, 0 }  ,fill opacity=1 ][line width=1.5]  (205.73,302.29) .. controls (205.73,301.28) and (206.55,300.46) .. (207.57,300.46) .. controls (208.58,300.46) and (209.4,301.28) .. (209.4,302.29) .. controls (209.4,303.3) and (208.58,304.12) .. (207.57,304.12) .. controls (206.55,304.12) and (205.73,303.3) .. (205.73,302.29) -- cycle ;
\draw [color={rgb, 255:red, 233; green, 154; blue, 0 }  ,draw opacity=1 ][line width=1.5]    (182.33,349.89) .. controls (205.13,346.23) and (214.73,347.89) .. (233.13,353.23) ;

\draw (500.67,335.35) node [anchor=north west][inner sep=0.75pt]  [font=\normalsize]  {$\Sigma _{t=0}$};
\draw (510.86,50.07) node [anchor=north west][inner sep=0.75pt]  [font=\normalsize]  {$\Sigma _{T^{*}}$};
\draw (474.05,100.64) node [anchor=north west][inner sep=0.75pt]  [font=\normalsize]  {$\Sigma _{T^{*} -\delta }$};
\draw (486.33,78.33) node [anchor=north west][inner sep=0.75pt]  [font=\normalsize,color={rgb, 255:red, 72; green, 19; blue, 254 }  ,opacity=1 ]  {$\Sigma _{t^{*}}$};
\draw (460.62,163.93) node [anchor=north west][inner sep=0.75pt]  [font=\normalsize,color={rgb, 255:red, 0; green, 182; blue, 102 }  ,opacity=1 ]  {$\Sigma _{t_{p} -\delta }$};
\draw (277.33,-1.27) node [anchor=north west][inner sep=0.75pt]  [font=\LARGE]  {$\Sigma \times [0,T^{*}]$};
\draw (468.9,129.83) node [anchor=north west][inner sep=0.75pt]  [font=\normalsize,color={rgb, 255:red, 72; green, 19; blue, 254 }  ,opacity=1 ]  {$\Sigma _{t^{*} -\delta }$};
\draw (480.28,215.87) node [anchor=north west][inner sep=0.75pt]  [font=\normalsize,color={rgb, 255:red, 0; green, 182; blue, 102 }  ,opacity=1 ]  {$\Sigma _{t_{p} -2\delta }$};
\draw (360,182.33) node [anchor=north west][inner sep=0.75pt]  [color={rgb, 255:red, 0; green, 182; blue, 102 }  ,opacity=1 ]  {$p'$};
\draw (334.8,205.13) node [anchor=north west][inner sep=0.75pt]  [color={rgb, 255:red, 0; green, 182; blue, 102 }  ,opacity=1 ]  {$D_{p'}$};
\draw (278.1,106.75) node [anchor=north west][inner sep=0.75pt]  [color={rgb, 255:red, 72; green, 19; blue, 254 }  ,opacity=1 ]  {$p$};
\draw (249.01,140.08) node [anchor=north west][inner sep=0.75pt]  [color={rgb, 255:red, 72; green, 19; blue, 254 }  ,opacity=1 ]  {$D_{p}$};
\draw (331.83,257.43) node [anchor=north west][inner sep=0.75pt]  [font=\large,rotate=-90]  {$\cdots $};
\draw (30.73,107.14) node [anchor=north west][inner sep=0.75pt]  [font=\scriptsize,color={rgb, 255:red, 72; green, 19; blue, 254 }  ,opacity=1 ] [align=left] {point where\\$\displaystyle \sup _{x\in \Sigma \times \left[ t^{*} -\delta ,t^{*}\right]} |\mathbf{Rm}_{\hat{g}}( x) |_{E}^{2}$ \\is achieved};
\draw (518.6,169.67) node [anchor=north west][inner sep=0.75pt]  [font=\scriptsize,color={rgb, 255:red, 0; green, 182; blue, 102 }  ,opacity=1 ] [align=left] {point where\\$\displaystyle \sup _{x\in \Sigma \times [ t_{p} -2\delta ,t_{p} -\delta ]} |\mathbf{Rm}_{\hat{g}}( x) |_{E}^{2}$ \\is achieved};
\draw (199.83,279.98) node [anchor=north west][inner sep=0.75pt]  [color={rgb, 255:red, 233; green, 154; blue, 0 }  ,opacity=1 ]  {$p^{\prime\cdots\prime}$};
\draw (159.33,311.48) node [anchor=north west][inner sep=0.75pt]  [color={rgb, 255:red, 233; green, 154; blue, 0 }  ,opacity=1 ]  {$D_{p^{\prime\cdots\prime}}$};
\draw (249.48,56.11) node [anchor=north west][inner sep=0.75pt]  [font=\footnotesize,rotate=-8.5]  {$E^{2}\left( t^{*}\right) +E^{1}\left( t^{*}\right) +E^{0}\left( t^{*}\right)$};
\draw (273.07,333.03) node [anchor=north west][inner sep=0.75pt]  [font=\footnotesize,rotate=-355.13]  {$E^{2}( 0) +E^{1}( 0) +E^{0}( 0)$};

\end{tikzpicture}

    \caption{The iteration mechanism bounds the energies at $t^*$ in terms of the initial data. Since the estimates involve integration over $[t^*-\delta,t^*]$, take out $\mathbf{Rm}_{\widehat{g}}$ as sup norm and let $p$ be the point where it is achieved. Bound the sup norm by means of the light cone mantle estimates (\ref{eq:l_infty_schematic}), they can only go down the length of $D_p$ which is $\delta$ so run more energy bounds starting at $t_p-\delta$. This now includes integration over $[t_p-2\delta,t_p-\delta]$, take out sup norm of Riemann again and let it happen at $p'$. Use light cone estimates with new vertex at $p'$ to bound sup norm by energies at $t_{p'}-\delta$. Repeat a finite number of times until we reach $t=0$.
    \label{fig:2}}
\end{figure}
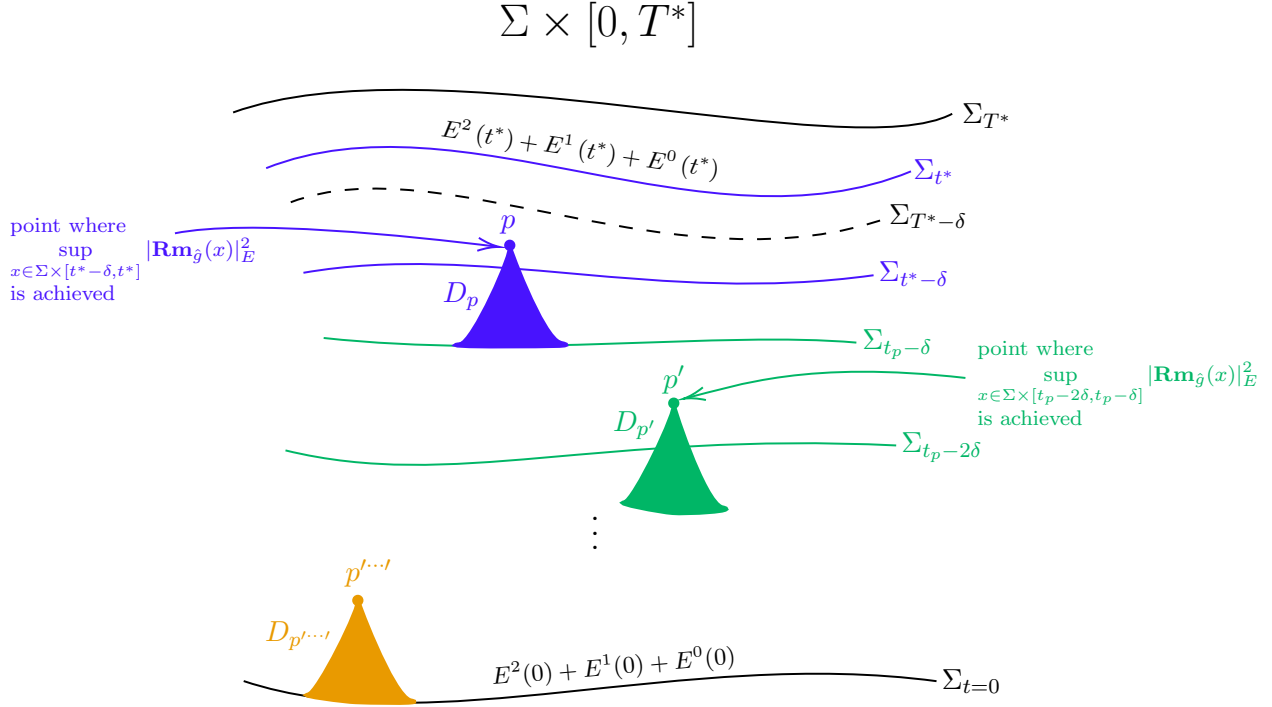

\subsection{Examples of globally hyperbolic spacetimes with compact Cauchy slices that do not satisfy the continuation criteria}

\noindent A first example of a vacuum spacetime for which the gauge-invariant sup norm of $^\mathbf{n}\pi$ is not point-wise bounded is the so-called Taub-NUT solution which has the topology $\mathbb{R}\times S^3$ and metric in Euler coordinates given by
\begin{align}
    \widehat{g}(t,\psi,\theta,\phi)=-U^{-1}(t)dt\otimes dt+&(2l)^2U(t)(d\psi+\cos\theta d\phi)\otimes (d\psi+\cos\theta d\phi) \nonumber \\
    &\qquad+(t^2+l^2)(d\theta\otimes d\theta+\sin^2\theta d\phi\otimes d\phi) \nonumber
\end{align}
where $U(t):=\frac{2mt+l^2-t^2}{t^2+l^2}$ with $m,l$ positive constants. One can immediately conclude that there is no shift $X=0$ and the lapse reads as $N(t)=U(t)^{-1/2}$. Thus, the timelike unit vector field orthogonal to the Cauchy foliation is $\mathbf{n}=U(t)^{1/2}\partial_t$. Direct calculations show that the only non-vanishing coordinate components of the deformation tensor of $\mathbf{n}$ are: $^\mathbf{n}\pi_{\psi\psi},^\mathbf{n}\pi_{\theta\theta},^\mathbf{n}\pi_{\phi\phi},^\mathbf{n}\pi_{\psi\phi}$. The $L^\infty$ norm-squared of $^\mathbf{n}\pi$ with respect to $E=\widehat{g}+2\mathbf{n}\otimes\mathbf{n}$ can be computed to be
\begin{align}
    ||^\mathbf{n}\pi(t)||^2_{L^\infty(S^3_t)}=\frac{U'(t)^2}{U(t)}+\frac{8t^2}{(t^2+l^2)^2}U(t)
\end{align}
which blows-up in finite time at $t_{\pm}=m\pm\sqrt{m^2+l^2}$. Nonetheless, the solution can be extended past these surfaces as shown by Newman, Tamburino, and Unti in 1963.

\noindent Another example is the FLRW metric
\begin{align}
    \widehat{g}=-dt\otimes dt+a^2(t)g_{ij}dx^i\otimes dx^j,
\end{align}
where the scale factor $a(t)$ has the property of going to zero as $t\rightarrow 0$ (the Big Bang). This spacetime has topology $\mathbb{R}\times \Sigma$ ($\Sigma\cong H^{3},\mathbb{E}^{3},$ or $S^{3}$ and their compact quotients) and satisfies Einstein's equations with a perfect fluid source. Let us consider the case $\Sigma\cong S^{3}$. Despite not being a vacuum solution, one would still expect non-blow up of the deformation tensor of $\mathbf{n}$ to give us information regarding the possibility of extending the solution (equally as important, we would also like to examine the strength of the fluid to fully characterize the breakdown condition, this is discussed in our final remark of \hyperlink{section.7}{Sec. 7}). The shift vector field is zero and the lapse equals to unity. Thus, $\mathbf{n}=-dt$ and the only non-vanishing components of the deformation tensor are the spatial ones $^\mathbf{n}\pi_{ij}=2a'(t)a(t)g_{ij}$, this yields
\begin{align}
    ||^\mathbf{n}\pi(t)||^2_{L^\infty(S^3_t)}=12\frac{a'(t)^2}{a(t)^2}
\end{align}
If we denote the pressure and density of the fluid by $P$ and $\rho$ respectively, then we can examine the early time behavior of the scale factor in the matter- and radiation-dominated regimes. The former is characterized by $P=0$ and the condition $a^3\rho=\;$const. in $t$. For $t$ close to 0 the Friedmann equations imply $a(t)\sim 1-\cos(t^{1/3})\sim t^{2/3}$ and $a'(t)\sim t^{-1/3}$, one then sees that the $L^\infty$ norm of $^\mathbf{n}\pi$ goes like $t^{-2}$ and becomes singular at $t=0$. The radiation-dominant regime is described by an equation of state $P=\frac{1}{3}\rho$ and $a^4\rho=\;$const. in $t$. The early time behavior of the scale factor is then $a(t)\sim \sin(t^{1/2})\sim t^{1/2}$ and $a'(t)\sim t^{-1/2}$. Once again, the $L^\infty$ norm of $^\mathbf{n}\pi$ blows up at $t=0$. 

\noindent As alluded to earlier, it is valuable to also examine the strength of the fluid. Concretely, study the gauge-invariant sup norm of the pressure and density in the early time approximation. Do so by revisiting the equations of state. For the matter-dominated universe, $P$ is no issue since it always vanishes but $\rho(t)\sim a(t)^{-3}\sim t^{-2}$ which blows up as one approaches the Big Bang. Likewise, the radiation-dominant case has $P\sim \rho\sim a(t)^{-4}\sim t^{-2}$. The total quantity $||^\mathbf{n}\pi||_{L^\infty}+||(P,\rho)||_{L^\infty}$ 
captures the non-vacuum continuation criteria (at least in the form which we posit at the end of \hyperlink{section.7}{Sec. 7}) and its finiteness is not met at $t=0$, hence no past extension beyond the Big-Bang is possible.

\section{Global energy estimates over the CMC Cauchy slices}

\noindent The ever-present issue of finding a suitable definition of gravitational energy makes its way into our analysis. The reason why there is no clear \textit{a priori} choice is attributed to the very nature of general relativity theory. Concretely, a universal definition of local energy density is not allowed by the equivalence principle (nevertheless, there is a good definition of energy in a quasi-local sense, see \cite{YauWang}). In \hyperlink{section.3}{Sec. 3}, we stated the requirements needed for our purposes, namely three energy definitions which in conjunction will control the $H^2$ squared-norm of the Riemann curvature at a given CMC slice.

\noindent In the absence of sources, the Riemann tensor is equal to the Weyl tensor $\mathbf{W}$ whose role is to capture pure gravity as it is projected out of the Einstein equations. With this in mind, we consider the Bel-Robinson tensor as our 0th order candidate
\begin{equation}
    \begin{split}
        \mathbf{Q}_{\alpha\beta\gamma\delta}&:=\mathbf{W}_{\alpha\mu\gamma\nu}\mathbf{W}_{\beta}~^{\mu}~_{\delta}~^{\nu}+* \mathbf{W}_{\alpha\mu\gamma\nu}~* \mathbf{W}_{\beta}~^{\mu}~_{\delta}~^{\nu}
        \\
        &=\mathbf{Rm}_{\alpha\mu\gamma\nu}\mathbf{Rm}_{\beta}~^{\mu}~_{\delta}~^{\nu}+* \mathbf{Rm}_{\alpha\mu\gamma\nu}~* \mathbf{Rm}_{\beta}~^{\mu}~_{\delta}~^{\nu},
    \end{split}
\end{equation}
where $* \mathbf{W}_{\alpha\beta\gamma\delta}=\frac{1}{2}\epsilon_{\alpha\beta\mu\nu} \mathbf{W}^{\mu\nu}~_{\gamma\delta}$ is the Hodge dual of $\mathbf{W}$. Note that it models a gauge-theoretic stress tensor as it is generally traceless and divergence-free in vacuum
\begin{equation}\nonumber
    \begin{gathered}
        \mathrm{tr}_{\widehat{g}} \mathbf{Q} =0,
        \\
        \mathrm{div}_{\nabla} \: \mathbf{Q}=0.
    \end{gathered}
\end{equation}
The total 0th order energy over $\Sigma_t$ is then defined as
\begin{align}
    E^{0}(t):=\int_{\Sigma_t}\mathbf{Q(n,n,n,n)}\:\mu_{g}
\end{align}
To see that it is positive definite and indeed controls the $L^2(\Sigma_t)$ norm of $\mathbf{Rm}_{\widehat{g}}$ we must set up a frame $(\widehat{L},\widehat{\underline{L}},e_1,e_2)$. $\widehat{L}$ and $\widehat{\underline{L}}$ are null future-directed and determined by the Eikonal equations in correspondence to a double null foliation of the spacetime ($\widehat{L}$ is the incoming direction and $\widehat{\underline{L}}$ is the outgoing, e.g. see \cite{christodoulou2012formation}). Moreover, $e_1$ and $e_2$ are tangent to the $2-$spheres that foliate the null cones. Performing a conformal transformation $L=a^{2}\widehat{L},~\bar{L}=a^{-2}\widehat{\underline{L}}$, $a:M\to\mathbb{R}$, demand the following to be satisfied 
\begin{equation}
    \begin{gathered}
        \widehat{g}(L,e_A)=\widehat{g}(\bar{L},e_A)=0, \quad 
        \widehat{g}(L,\bar{L})=-2,
        \quad
        \widehat{g}(e_A,e_B)=\delta_{AB}
    \end{gathered}
\end{equation}
for $A,B$ running from 1 to 2. The metric in this null frame basis can then be written as
\begin{equation}
\label{eq:null_metric}
    \widehat{g}=-\frac{1}{2}(L\otimes \bar{L}+\bar{L}\otimes L)+e_{1}\otimes e_{1}+e_{2}\otimes e_{2},
\end{equation}
The timelike unit vector field $\mathbf{n}$ orthogonal to the spacelike level sets has an expression in terms of $L$ and $\bar{L}$
\begin{equation}
    \begin{gathered}
        \mathbf{n}\approx \frac{1}{2}(\bar{L}+L),
    \end{gathered}
\end{equation}
where $\approx$ indicates equality modulo a positive function that is uniformly bounded by means of (\ref{eq:breakdown_criteria}). 
Let us explicitly write down the different null components of the Riemann curvature 
\begin{eqnarray}
\alpha_{AB}:=\mathbf{Rm}_{\widehat{g}}(e_{A},L,e_{B},L),~\bar{\alpha}_{AB}:=\mathbf{Rm}_{\widehat{g}}(e_{A},\overline{L},e_{B},\overline{L}),\nonumber\\
2\beta_{A}:=\mathbf{Rm}_{\widehat{g}}(L,\overline{L},L,e_{A}),~
2\bar{\beta}_{A}:=\mathbf{Rm}_{\widehat{g}}(\overline{L},L,\overline{L},e_{A}),\\
\rho:=\frac{1}{4}\mathbf{Rm}_{\widehat{g}}(L,\overline{L},L,\overline{L}),
~e:=\frac{1}{4}{*}\mathbf{Rm}_{\widehat{g}}(L,\overline{L},L,\overline{L}).\nonumber
\end{eqnarray}
We then obtain the following for the energy density 
\begin{eqnarray}
C^{-1}(|\alpha|^{2}+|\underline{\alpha}|^{2}+|\beta|^{2}+|\underline{\beta}|^{2}+|\rho|^{2}+|e|^{2})\leq \mathbf{Q(n,n,n,n)}\leq C (|\alpha|^{2}+|\underline{\alpha}|^{2}+|\beta|^{2}+|\underline{\beta}|^{2}+|\rho|^{2}+|e|^{2})
\end{eqnarray}
where $C$ is a uniform positive constant dependent only on the point-wise bound for $^\mathbf{n}\pi$. Now that we have established control of $||\mathbf{Rm}_{\widehat{g}}(t)||^2_{L^{2}(\Sigma_t)}$ via $E^{0}(t)$, we must do the same for the 1st and 2nd derivatives. It is at this point that we resort to the Cartan/tetrad/SO(1,3)-frame bundle formalism by considering a connection 1-form $\omega$ relative to the Levi-Civita connection $\nabla$ of $(M,\widehat{g})$, the defining local relation is
\begin{align}
    \omega^a~_{b\mu}=\langle\Theta^a,\nabla_\mu h_b\rangle,
\end{align}
where $\{h_a=h^\mu_a\partial_\mu\}_{a,\mu=0}^3$ and $\{\Theta^a=\Theta^a_\mu dx^\mu\}_{a,\mu=0}^3$ are choices of SO(1,3)-orthonormal frame and co-frame fields, respectively\footnote{The $a,b$ indices (known as gauge indices in the physics literature) reflect the fact that $\omega$ is an endomorphism-valued one-form acting on $\mathbb{R}^{4}\cong$ the fibers of the associated vector bundle to the SO(1,3)-frame bundle.}. The connection $\omega$ induces an exterior covariant derivative $d_\omega$ whose local action on the base is to be denoted by $D_\mu$. It acts on the Riemann tensor as
\begin{align}\nonumber
    D_{\alpha}\mathbf{Rm}^{a}~_{b\mu\nu}=\nabla_{\alpha}\mathbf{Rm}^{a}~_{b\mu\nu}+\omega^{a}~_{c\alpha}\mathbf{Rm}^{c}~_{b\mu\nu}-\omega^{c}_{b\alpha}\mathbf{Rm}^{a}~_{c\mu\nu}.
\end{align}
We demand higher order energies to specifically control the $||D^{2}\mathbf{Rm}_{\widehat{g}}(t)||^{2}_{L^{2}(\Sigma_{t})}+||D\mathbf{Rm}_{\widehat{g}}(t)||^{2}_{L^{2}(\Sigma_{t})}$ norm. This will be possible by defining the following ad-hoc stress tensors which emulate a massless scalar field theory
\begin{eqnarray}
\label{eq:stress1}
\mathfrak{T}^{1}_{\mu\nu}:=D_{\mu}\mathbf{Rm}_{\widehat{g}}\cdot D_{\nu}\mathbf{Rm}_{\widehat{g}}-\frac{1}{2}\widehat{g}_{\mu\nu}D_{\alpha}\mathbf{Rm}_{\widehat{g}}\cdot D^{\alpha}\mathbf{Rm}_{\widehat{g}},\\
\label{eq:stress2}
\mathfrak{T}^{2}_{\mu\nu}:=D_{\mu}D\mathbf{Rm}_{\widehat{g}}\cdot D_{\nu}D\mathbf{Rm}_{\widehat{g}}-\frac{1}{2}\widehat{g}_{\mu\nu}D_{\alpha}D\mathbf{Rm}_{\widehat{g}}\cdot D^{\alpha}D\mathbf{Rm}_{\widehat{g}},
\end{eqnarray}
where $\cdot$ denotes the inner product with respect to the Riemannian metric $E$ (\ref{eq:euclidean_metric}).
The energies over $\Sigma_t$ are given by 
\begin{align}
    E^{1,2}(t):=\int_{\Sigma_t}\mathfrak{T}^{1,2}(\mathbf{n,n})\:\mu_{g}
\end{align}
Direct calculation of the densities confirms that the above is positive definite and matches our demands, in particular, we find
\begin{align}
    &\mathfrak{T}^{1}(\mathbf{n,n})  \;\; \mathrm{controls}\;\; |D_{L}\mathbf{Rm}_{\widehat{g}}|^{2}+|D_{\bar{L}}\mathbf{Rm}_{\widehat{g}}|^{2}+\sum_{A=1,2}|D_{A}\mathbf{Rm}_{\widehat{g}}|^{2}
    \\
    &\mathfrak{T}^{2}(\mathbf{n,n}) \;\; \mathrm{controls}\;\; |D_{L}D\mathbf{Rm}_{\widehat{g}}|^{2}+|D_{\bar{L}}D\mathbf{Rm}_{\widehat{g}}|^{2}+\sum_{A=1,2}|D_{A}D\mathbf{Rm}_{\widehat{g}}|^{2}
\end{align}
Computation of the $\nabla$-divergences produces terms with the deformation tensor $^{\mathbf{n}}\pi$ which are unable to be canceled out due to the Riemannian metric $E$ not being compatible with $\nabla$. One gets the following schematic expressions
\begin{align}
\label{eq:div_stress1}
    \mathrm{div}_{\nabla}~\mathfrak{T}^{1}\sim D\mathbf{Rm}_{\widehat{g}}\cdot D\mathbf{Rm}_{\widehat{g}}\cdot \,^\mathbf{n}\pi+D\mathbf{Rm}_{\widehat{g}}\cdot \mathbf{Rm}_{\widehat{g}}\cdot \mathbf{Rm}_{\widehat{g}},\\
\label{eq:div_stress2}
    \mathrm{div}_{\nabla}~\mathfrak{T}^{2}\sim D^{2}\mathbf{Rm}_{\widehat{g}}\cdot D^{2}\mathbf{Rm}_{\widehat{g}}\cdot \,^\mathbf{n}\pi + D^{2}\mathbf{Rm}_{\widehat{g}}\cdot D\mathbf{Rm}_{\widehat{g}}\cdot \mathbf{Rm}_{\widehat{g}}
\end{align}
by means of the gauge wave equation for the curvature
\begin{equation}\nonumber
    D^2\mathbf{Rm}^{{a}}~_{{b}\mu\nu}=2\mathbf{Rm}^{{a}}~_{{c}\mu\beta}\mathbf{Rm}^{{c}}~_{{b}\nu}~^{\beta}-2\mathbf{Rm}^{{a}}~_{{c}\nu\beta}\mathbf{Rm}^{{c}}~_{{b}\mu}~^{\beta}-\mathbf{Rm}^{\gamma}~_{\beta\mu\nu}\mathbf{Rm}^{{a}}~_{{b}\gamma}~^{\beta}
\end{equation}
We re-derive the above in the next section. It is precisely these non-vanishing divergences that will account for the $L^\infty$ of the Riemann term found in the energy estimates from \hyperlink{section.3}{Sec. 3}. We now begin to deduce such estimates.

\begin{proposition}
    \label{prop:general_global_energy_estimates}
    Let $t_1\leq t_2$ be times in $[0,T^*]$ and $\mathfrak{J}$ be either $\mathbf{Q}(\mathbf{n},\mathbf{n},\mathbf{n},\cdot)$ or $ \mathfrak{T}^{1,2}(\mathbf{n},\cdot)$, then the energies at $t_1$ and $t_2$ are related as follows
    \begin{align}
        E(t_2)\leq E(t_1) +C(^\mathbf{n}\pi)\int_{t_1}^{t_2}\int_{\Sigma_t}|\mathrm{div}_{\nabla}\mathfrak{J}|\:\mu_{g} dt
    \end{align}
\end{proposition}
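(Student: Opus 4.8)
The plan is to read Proposition~\ref{prop:general_global_energy_estimates} as an integrated energy identity: each $E(t_i)$ is the flux, across the Cauchy slice $\Sigma_{t_i}$, of a spacetime current built from $\mathfrak{J}$, and the estimate will follow from the divergence theorem on the compact slab $\mathcal{R}:=\bigcup_{t\in[t_1,t_2]}\Sigma_t$. First I would turn $\mathfrak{J}$ — the one-form $\mathbf{Q}(\mathbf{n},\mathbf{n},\mathbf{n},\cdot)$ in the $0$th order case, and $\mathfrak{T}^{1,2}(\mathbf{n},\cdot)$ in the higher ones — into a vector field $\mathfrak{J}^{\sharp}$ by raising its free index with $\widehat{g}$. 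Then $\widehat{g}(\mathfrak{J}^{\sharp},\mathbf{n})=\mathfrak{J}(\mathbf{n})$ is exactly $\mathbf{Q}(\mathbf{n},\mathbf{n},\mathbf{n},\mathbf{n})$, respectively $\mathfrak{T}^{1,2}(\mathbf{n},\mathbf{n})$, so — the Riemannian volume of the induced metric $g(t)$ on $\Sigma_t$ being precisely $\mu_{g}$ — the flux of $\mathfrak{J}^{\sharp}$ through $\Sigma_t$ equals the corresponding energy $E(t)$.

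The core step is the divergence theorem for the Levi-Civita connection $\nabla$ of $\widehat{g}$ applied to $\mathfrak{J}^{\sharp}$ on $\mathcal{R}$. Since the slices are closed, $\partial\mathcal{R}=\Sigma_{t_1}\sqcup\Sigma_{t_2}$ with no timelike lateral boundary, and the outward unit normals are $-\mathbf{n}$ on $\Sigma_{t_1}$ and $+\mathbf{n}$ on $\Sigma_{t_2}$, each timelike and hence contributing the Lorentzian sign $\widehat{g}(\mathbf{n},\mathbf{n})=-1$. Recognising that $\mathrm{div}_{\nabla}\mathfrak{J}=\nabla_{\mu}(\mathfrak{J}^{\sharp})^{\mu}$ is the scalar $\nabla$-divergence of the current, this yields the identity
\begin{align}
    E(t_2)-E(t_1)=-\int_{\mathcal{R}}(\mathrm{div}_{\nabla}\mathfrak{J})\,\mu_{\widehat{g}}.
\end{align}
I would then foliate the bulk integral through $\mu_{\widehat{g}}=N\,dt\wedge\mu_{g}$ (see (\ref{eq:induced_volume})), turning the right-hand side into $-\int_{t_1}^{t_2}\!\int_{\Sigma_t}(\mathrm{div}_{\nabla}\mathfrak{J})\,N\,\mu_{g}\,dt$, and apply the triangle inequality to get
\begin{align}
    E(t_2)\leq E(t_1)+\Big(\sup_{\mathcal{R}}N\Big)\int_{t_1}^{t_2}\!\int_{\Sigma_t}|\mathrm{div}_{\nabla}\mathfrak{J}|\,\mu_{g}\,dt.
\end{align}
It remains to absorb $\sup_{\mathcal{R}}N$ into $C(^\mathbf{n}\pi)$. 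In CMC gauge the lapse solves $\Delta_{g}N+|k|^{2}N=\partial_t\mathrm{tr}_g k$; since the spatial part of the deformation tensor $^\mathbf{n}\pi$ is $2k$ (and its mixed part is $d\log N$), the continuation hypothesis (\ref{eq:breakdown_criteria}) controls $\|k\|_{L^{\infty}}$, and the maximum principle — at a maximum of $N$ one has $\Delta_g N\leq 0$, hence $N\leq(\partial_t\mathrm{tr}_g k)/|k|^2$, with $|k|^2\geq\frac{1}{3}(\mathrm{tr}_g k)^2>0$ on a CMC slice of negative Yamabe type — bounds $\sup_{\mathcal{R}}N$ by a constant of the admissible form $C(^\mathbf{n}\pi)$. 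Substituting gives the claim.

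I expect the genuine difficulties to be twofold. The lapse bound is the only place where the CMC gauge and the continuation hypothesis truly enter, so care is needed to see that the maximum-principle estimate depends only on the fixed data on $[0,T^*]$ (the $^\mathbf{n}\pi$ bound and the time normalisation), as required for a $C(^\mathbf{n}\pi)$. The second, more technical, issue is regularity: at the borderline value $s=4$ one only has $D^{2}\mathbf{Rm}_{\widehat{g}}\in L^{2}(\Sigma_t)$, so $\mathfrak{T}^{2}$ is merely $L^{1}$ and $\mathrm{div}_{\nabla}\mathfrak{T}^{2}$ — schematically $D^{2}\mathbf{Rm}_{\widehat{g}}\cdot D^{2}\mathbf{Rm}_{\widehat{g}}\cdot {}^\mathbf{n}\pi+\cdots$ by (\ref{eq:div_stress2}) — is again only $L^{1}$; the divergence theorem at this regularity must be justified by a density argument (prove the identity for smooth fields and pass to the limit, all the relevant integrals being finite by the definitions of the energies). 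Modulo these two points, the proof is just the Lorentzian Stokes theorem together with the gauge-dependent control of the lapse.
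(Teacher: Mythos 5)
Your proposal is correct and takes essentially the same route as the paper: the paper's proof is the per-slice identity $\int_{\Sigma_t}\mathrm{div}_{\nabla}\mathfrak{J}\,N\mu_{g}=\tfrac{d}{dt}\int_{\Sigma_t}\mathfrak{J}^{t}N\mu_{g}$ (Stokes on the closed slice, $N\mu_{g}=\mu_{\widehat{g}}$) integrated over $[t_1,t_2]$, which is precisely your divergence theorem on the slab, followed by pulling out $\sup N$ as a $C(^\mathbf{n}\pi)$ via the CMC lapse equation. The only quibble is a sign-convention slip in your maximum-principle step: with $\Delta_g=g^{ij}\nabla_i\nabla_j$ one has $\Delta_g N\le 0$ at a maximum and the equation then yields $|k|^{2}N\ge \partial_t\mathrm{tr}_g k$ there, so the stated upper bound $N\le(\partial_t\mathrm{tr}_g k)/|k|^{2}$ requires the opposite (positive-spectrum) convention for $\Delta_g$ --- the intended conclusion $\sup N\le C(^\mathbf{n}\pi)$, using $|k|^{2}\ge\tfrac{1}{3}(\mathrm{tr}_g k)^{2}>0$ on a negative Yamabe CMC slice, is of course correct.
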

\begin{proof}
    Pick any $t\in[t_1,t_2]$, Stokes' theorem and the assumption that each $\Sigma_t$ is closed implies
    \begin{align}
        \int_{\Sigma_t}\mathrm{div}_{\nabla}\mathfrak{J}\,N\:\mu_{g}dx^1\wedge dx^2\wedge dx^3&=\int_{\Sigma_t}\frac{1}{\mu_{\widehat{g}}}\partial_{\alpha}\big(\mu_{\widehat{g}}\,\mathfrak{J}^{\alpha}\big)N\:\mu_{g}dx^1\wedge dx^2\wedge dx^3 \nonumber \\
        &=\frac{d}{dt}\int_{\Sigma_t}\mathfrak{J}^{t}N\:\mu_{g}dx^1\wedge dx^2\wedge dx^3 \nonumber
    \end{align}
    Notice $\mathfrak{J}^{t}N=-\mathfrak{J}(\mathbf{n})$, therefore we arrive at the result by integrating both sides over $[t_1,t_2]$ and taking $N$ out of the divergence term as a sup norm, which is controlled by $^\mathbf{n}\pi$ in CMC gauge.
\end{proof}

\begin{proposition}
\label{prop:energy_estimates}
    Suppose $t, t_1, t_2$ with $t_1\leq t_2$ are times in the interval $[0,T^*]$, the 0th, 1st, and 2nd order energies verify the following
    \begin{align}
        \label{eq:0th_energy}
        &E^{0}(t)\leq  C(^{\mathbf{n}}\pi,t)E^{0}(0) \\
        \label{eq:1st_energy}
        &E^{1}(t_2)\leq  e^{C(^\mathbf{n}\pi)|t_2-t_1|}\Big(E^{1}(t_1)+C(^{\mathbf{n}}\pi,t_2,E^{0}(0))\int_{t_1}^{t_2}||\mathbf{Rm}_{\widehat{g}}(t)||^2_{L^{\infty}(\Sigma_t)}dt\Big) \\
        \label{eq:2nd_energy}
        &E^{2}(t_2)\leq  e^{C(^\mathbf{n}\pi)|t_2-t_1|}\Big(E^{2}(t_1)+C'(^{\mathbf{n}}\pi)\int_{t_1}^{t_2}E^{1}(t)||\mathbf{Rm}_{\widehat{g}}(t)||^2_{L^{\infty}(\Sigma_t)}dt\Big)
    \end{align}
    In particular, the $L^2$ norm of the curvature over any time slice is bounded by the initial Cauchy data while the higher order norms require point-wise control of $\mathbf{Rm}_{\widehat{g}}$.
\end{proposition}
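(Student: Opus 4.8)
My plan is to obtain all three estimates from the master monotonicity relation of Proposition~\ref{prop:general_global_energy_estimates} by bounding $\int_{\Sigma_t}|\mathrm{div}_{\nabla}\mathfrak{J}|\,\mu_g$ from above for the relevant current $\mathfrak{J}$, using the divergence identities, the coercivity of the $\mathbf{Q}$- and $\mathfrak{T}^{1,2}$-energies established just above, H\"older's inequality, and Gr\"onwall's lemma in integral form; I treat the orders in increasing sequence, since (\ref{eq:1st_energy}) uses (\ref{eq:0th_energy}) and (\ref{eq:2nd_energy}) uses the coercivity of $E^{1}$. For the zeroth order I take $\mathfrak{J}^{0}=\mathbf{Q}(\mathbf{n},\mathbf{n},\mathbf{n},\cdot)$; since $\mathbf{Q}$ is divergence-free in vacuum, $\mathrm{div}_{\nabla}\mathfrak{J}^{0}$ collects only the terms produced by differentiating the three $\mathbf{n}$-slots, i.e.\ schematically $\mathbf{Q}\cdot\nabla\mathbf{n}$. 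Because $\nabla\mathbf{n}$ is built from the second fundamental form $k$ and the acceleration $\nabla\log N$, both of which are $E$-components of ${}^{\mathbf{n}}\pi$, one has $|\nabla\mathbf{n}|_{E}\lesssim|{}^{\mathbf{n}}\pi|_{E}$ and hence $|\mathrm{div}_{\nabla}\mathfrak{J}^{0}|\le C({}^{\mathbf{n}}\pi)\,|\mathbf{Rm}_{\widehat{g}}|_{E}^{2}$. Since $E^{0}(t)$ controls $\|\mathbf{Rm}_{\widehat{g}}(t)\|_{L^{2}(\Sigma_{t})}^{2}$, Proposition~\ref{prop:general_global_energy_estimates} gives $E^{0}(t_{2})\le E^{0}(t_{1})+C({}^{\mathbf{n}}\pi)\int_{t_{1}}^{t_{2}}E^{0}(t)\,dt$, and Gr\"onwall with $t_{1}=0$ produces (\ref{eq:0th_energy}) with $C({}^{\mathbf{n}}\pi,t)=e^{C({}^{\mathbf{n}}\pi)t}$.

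For the higher orders I take $\mathfrak{J}^{1}=\mathfrak{T}^{1}(\mathbf{n},\cdot)$ and $\mathfrak{J}^{2}=\mathfrak{T}^{2}(\mathbf{n},\cdot)$; contracting one slot with $\mathbf{n}$ adds to the cubic terms of (\ref{eq:div_stress1})--(\ref{eq:div_stress2}) a further term $\mathfrak{T}^{1,2}\cdot\nabla\mathbf{n}$ of the same ${}^{\mathbf{n}}\pi$-weighted quadratic form, so $|\mathrm{div}_{\nabla}\mathfrak{J}^{1}|\lesssim|{}^{\mathbf{n}}\pi|\,|D\mathbf{Rm}_{\widehat{g}}|^{2}+|D\mathbf{Rm}_{\widehat{g}}|\,|\mathbf{Rm}_{\widehat{g}}|^{2}$ and $|\mathrm{div}_{\nabla}\mathfrak{J}^{2}|\lesssim|{}^{\mathbf{n}}\pi|\,|D^{2}\mathbf{Rm}_{\widehat{g}}|^{2}+|D^{2}\mathbf{Rm}_{\widehat{g}}|\,|D\mathbf{Rm}_{\widehat{g}}|\,|\mathbf{Rm}_{\widehat{g}}|$. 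The quadratic parts integrate to $C({}^{\mathbf{n}}\pi)E^{1}(t)$ and $C({}^{\mathbf{n}}\pi)E^{2}(t)$ by coercivity. For each cubic part I will pull out exactly one factor of $\|\mathbf{Rm}_{\widehat{g}}(t)\|_{L^{\infty}(\Sigma_{t})}$, bound the two remaining (highest-order) factors in $L^{2}(\Sigma_{t})$ by Cauchy--Schwarz, and split with Young's inequality, using the coercivity bounds $\|\mathbf{Rm}_{\widehat{g}}(t)\|_{L^{2}(\Sigma_{t})}^{2}\le C({}^{\mathbf{n}}\pi)E^{0}(t)$, $\|D\mathbf{Rm}_{\widehat{g}}(t)\|_{L^{2}(\Sigma_{t})}^{2}\le C({}^{\mathbf{n}}\pi)E^{1}(t)$, $\|D^{2}\mathbf{Rm}_{\widehat{g}}(t)\|_{L^{2}(\Sigma_{t})}^{2}\le C({}^{\mathbf{n}}\pi)E^{2}(t)$:
\[
\int_{\Sigma_{t}}|D\mathbf{Rm}_{\widehat{g}}|\,|\mathbf{Rm}_{\widehat{g}}|^{2}\,\mu_{g}\le C({}^{\mathbf{n}}\pi)\Big(E^{1}(t)+\|\mathbf{Rm}_{\widehat{g}}(t)\|_{L^{\infty}(\Sigma_{t})}^{2}E^{0}(t)\Big),
\]
\[
\int_{\Sigma_{t}}|D^{2}\mathbf{Rm}_{\widehat{g}}|\,|D\mathbf{Rm}_{\widehat{g}}|\,|\mathbf{Rm}_{\widehat{g}}|\,\mu_{g}\le C({}^{\mathbf{n}}\pi)\Big(E^{2}(t)+\|\mathbf{Rm}_{\widehat{g}}(t)\|_{L^{\infty}(\Sigma_{t})}^{2}E^{1}(t)\Big).
\]
In the second line the factor $|D\mathbf{Rm}_{\widehat{g}}|$ is deliberately estimated by $E^{1}$ rather than $E^{2}$, so that $E^{2}$ stays linear and the final source term comes out as $\int E^{1}\|\mathbf{Rm}_{\widehat{g}}\|_{L^{\infty}}^{2}$. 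Inserting (\ref{eq:0th_energy}) to replace $E^{0}(t)$ by $C({}^{\mathbf{n}}\pi,t_{2})E^{0}(0)$ and invoking Proposition~\ref{prop:general_global_energy_estimates}, one gets $E^{1}(t_{2})\le E^{1}(t_{1})+C({}^{\mathbf{n}}\pi)\int_{t_{1}}^{t_{2}}E^{1}(t)\,dt+C({}^{\mathbf{n}}\pi,t_{2},E^{0}(0))\int_{t_{1}}^{t_{2}}\|\mathbf{Rm}_{\widehat{g}}(t)\|_{L^{\infty}(\Sigma_{t})}^{2}\,dt$; applying the integral Gr\"onwall lemma to the right-hand side, whose $t_{2}$-derivative is at most $C({}^{\mathbf{n}}\pi)$ times itself plus the source, yields (\ref{eq:1st_energy}). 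The identical argument with $E^{1}$ in place of $E^{0}$ gives (\ref{eq:2nd_energy}).

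The Gr\"onwall bookkeeping within this proposition is mechanical; the genuine content, deferred to the later sections, is twofold. First, one must derive (\ref{eq:div_stress1})--(\ref{eq:div_stress2}) with honest (not merely schematic) coefficients from the gauge wave equation for $\mathbf{Rm}_{\widehat{g}}$, carefully tracking the terms that fail to cancel precisely because the Riemannian metric $E$ of (\ref{eq:euclidean_metric}) is not parallel with respect to $\nabla$. Second, and this is the crux, one must verify that every surviving term is of the benign type used above: only the undifferentiated curvature $\mathbf{Rm}_{\widehat{g}}$---never $D\mathbf{Rm}_{\widehat{g}}$ or $D^{2}\mathbf{Rm}_{\widehat{g}}$---ever needs to be placed in $L^{\infty}$, a single such factor suffices, and the two highest-order factors remain in $L^{2}$, so that after Young's inequality the quadratic $L^{\infty}$ weight always multiplies a strictly lower-order energy ($E^{0}$ for the $E^{1}$ bound, $E^{1}$ for the $E^{2}$ bound). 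This hierarchy is exactly what the light-cone iteration of the later sections needs in order to descend to the initial data in finitely many steps; a $\|D\mathbf{Rm}_{\widehat{g}}\|_{L^{\infty}}$ appearing here would break the whole scheme. A minor point is that the Stokes'/fundamental-theorem-of-calculus manipulation underlying Proposition~\ref{prop:general_global_energy_estimates} needs enough regularity to be legitimate, which is guaranteed by the $\mathcal{C}([0,t^{*}];H^{s}\times H^{s-1})$, $s\ge4$, well-posedness asserted in the Main Theorem together with a standard density argument.
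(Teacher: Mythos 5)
Your proposal is correct and follows essentially the same route as the paper: both arguments feed the currents $\mathbf{Q}(\mathbf{n},\mathbf{n},\mathbf{n},\cdot)$ and $\mathfrak{T}^{1,2}(\mathbf{n},\cdot)$ into Proposition~\ref{prop:general_global_energy_estimates}, split the divergence into a quadratic piece weighted by $^{\mathbf{n}}\pi$ (absorbed by Gr\"onwall) and a cubic piece from (\ref{eq:div_stress1})--(\ref{eq:div_stress2}), peel off a single $L^{\infty}$ factor of the undifferentiated curvature via Cauchy--Schwarz and Young so that the $L^{\infty}$ weight multiplies the lower-order energy, and then close with Gr\"onwall, taking $E^{0}$ out of the integral via (\ref{eq:0th_energy}) in the first-order case. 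The only cosmetic difference is that you write the current's divergence schematically as $\mathfrak{T}\cdot\nabla\mathbf{n}+\langle\mathrm{div}_{\nabla}\mathfrak{T},\mathbf{n}\rangle$ where the paper writes $\tfrac12\langle\mathfrak{T}^{1,2},{}^{\mathbf{n}}\pi\rangle+\langle\mathrm{div}_{\nabla}\mathfrak{T}^{1,2},\mathbf{n}\rangle$, which is the same identity.
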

\begin{proof}
    Begin with the 0th-order estimates, these are quite simple since the Bel-Robinson tensor is divergence-free. Direct computation shows $\mathrm{div}_{\nabla}\mathbf{Q}(\mathbf{n,n,n,}\cdot)=\frac{3}{2}\langle \mathbf{Q}(\mathbf{n,n,}\cdot,\cdot),\,^\mathbf{n}\pi \rangle$, thus application of the last Proposition \ref{prop:general_global_energy_estimates} leads to 
    \begin{align}
        E^{0}(t_2)\leq  E^{0}(t_1)+C(^{\mathbf{n}}\pi)\int_{t_1}^{t_2}E^{0}(t)\:dt
    \end{align}
    Grönwall's inequality implies 
    \begin{align}
        E^{0}(t_2)\leq  e^{C(^{\mathbf{n}}\pi)|t_2-t_1|} E^{0}(t_1)
    \end{align}
    In particular, for $t_2=t$ we directly estimate $L^{2}$ energy at time $t$ in terms of the initial energy $E^{0}(0)$ up to a constant that solely depends on $t$ as well as the estimate for $^{\mathbf{n}}\pi$.
    \begin{align}
        E^{0}(t)\leq  e^{C(^{\mathbf{n}}\pi)t} E^{0}(0)
    \end{align}

    \noindent Proceed with 1st and 2nd orders. Here we find a bit more difficulty as the non-zero divergences of the stress tensors must be controlled. The divergence of the current density is evaluated to be $\mathrm{div}_{\nabla}\mathfrak{J}^{1,2}=\frac{1}{2}\langle \mathfrak{T}^{1,2},\, ^{\mathbf{n}}\pi \rangle+\langle \mathrm{div}_{\nabla}\mathfrak{T}^{1,2}, \mathbf{n} \rangle$. Invoking the expressions (\ref{eq:div_stress1})-(\ref{eq:div_stress2}) yields the following inequalities for the energies
    \begin{align}
        E^{1,2}(t_2)&\leq  E^{1,2}(t_1)+C(^{\mathbf{n}}\pi)\int_{t_1}^{t_2}E^{1,2}(t)\:dt +C'(^{\mathbf{n}}\pi)\int_{t_1}^{t_2}\int_{\Sigma_{t}}|D^{1,2}\mathbf{Rm}_{\widehat{g}}(t,\Vec{x})|^{2}\cdot|^{\mathbf{n}}\pi(t,\Vec{x})|\:\mu_{g}dt \nonumber\\
        &\qquad\qquad+C'(^{\mathbf{n}}\pi)\int_{t_1}^{t_2}\int_{\Sigma_{t}}|D^{1,2}\mathbf{Rm}_{\widehat{g}}(t,\Vec{x})|\cdot|D^{0,1}\mathbf{Rm}_{\widehat{g}}(t,\Vec{x})|\cdot|\mathbf{Rm}_{\widehat{g}}(t,\Vec{x})|\:\mu_{g}dt \\
        &\leq  E^{1,2}(t_1)+C(^{\mathbf{n}}\pi)\int_{t_1}^{t_2}E^{1,2}(t)\:dt+C'(^{\mathbf{n}}\pi)\int_{t_1}^{t_2}E^{0,1}(t)||\mathbf{Rm}_{\widehat{g}}(t)||^{2}_{L^{\infty}(\Sigma_t)}\:dt\nonumber
    \end{align}
    Application of Grönwall's inequality yields the result
    \begin{align}
        E^{1,2}(t_2)\leq  e^{C(^{\mathbf{n}}\pi)|t_2-t_1|}\Big(E^{1,2}(t_1)+C'(^{\mathbf{n}}\pi)\int_{t_1}^{t_2}E^{0,1}(t)||\mathbf{Rm}_{\widehat{g}}(t)||^{2}_{L^{\infty}(\Sigma_t)}\:dt\Big)
    \end{align}
    For the 1st order estimate, the $E^{0}$ present in the integral can be taken out since it is uniformly bounded by $E^{0}(0)$, $t_2$, and $^{\mathbf{n}}\pi$.
\end{proof}

\section{Local integral equations for the connection, co-frame fields, and curvature}

\noindent We now present formulae that will serve a crucial role in estimating the $||\mathbf{Rm}_{\widehat{g}}(t)||^{2}_{L^{\infty}(\Sigma_t)}$ term in the 1st and 2nd order energy estimates. 

\noindent Recall that the connection 1-form $\omega$ depends on the choice of SO(1,3)-orthonormal frame and co-frame. If we let $h_a$ and $\Theta^a$ be initially defined at $p\in M$ and parallel propagate them along the radial geodesics in the normal neighborhood of $p$ (denoted $\mathcal{N}_p$), then this process does not destroy the duality $\langle h_a,\Theta^b\rangle=\delta_a^b$ and orthonormality $\widehat{g}(h_a,h_b)=\eta_{ab}$. The metric thus has the following form in all of $\mathcal{N}_p$
\begin{equation}
\label{eq:metric_SO_frame}
    \widehat{g}=\eta_{ab}\Theta^a\otimes\Theta^b
\end{equation}
Denoting the normal coordinates of a point $x\in \mathcal{N}_p$ by $x^\mu$, the unique geodesic connecting $p$ to $x$ is radial and of the form $\gamma_x(\lambda)^\mu=\lambda x^\mu$ with $\lambda\in [0,1]$. The parallel transport equations for the frame and co-frame are then $\nabla_{\dot{\gamma}_x}h_{a}=0$ and $\nabla_{\dot{\gamma}_x}\Theta^{a}=0$. In particular, we see
\begin{align}
\label{eq:cronstrom}
    0=\langle \Theta^a , \nabla_{\dot{\gamma}_x}h_{b} \rangle=x^\mu\langle \Theta^a , \nabla_{\mu}h_{b} \rangle = x^\mu \omega^{a}~_{b\mu}=\langle x,\omega^{a}~_{b} \rangle.
\end{align}
which emulates the Cronström gauge condition in Yang-Mills theory. Contracting the coordinate of $x$ with the curvature
\begin{align}
    x^\mu \mathbf{Rm}_{\mu\nu}=x^\mu(d\omega +\omega\wedge\omega)_{\mu\nu}
\end{align}
and using (\ref{eq:cronstrom}) along the radial geodesic results in
\begin{align}
    \lambda x^\nu \mathbf{Rm}_{\mu\nu} (\lambda x)=-\frac{d}{d\lambda}\Big(\lambda \omega_\mu (\lambda x)\Big)
\end{align}
Integration over the entire curve parameter yields
\begin{equation}
\label{eq:connection}
    \begin{gathered}
        \omega^{a}~_{b\mu}(x)=-\int_0^1\lambda x^\nu \mathbf{Rm}^{a}~_{b\mu\nu}(\lambda x)\:d\lambda, \quad \forall x\in \mathcal{N}_p
    \end{gathered}
\end{equation}
As for the co-frame, note that because the tangent vectors of the radial geodesics are constant along the curve then their pairing with the parallel propagated co-frame must also be independent of $\lambda$
\begin{align}
\label{eq:co-frame_gauge}
    \langle \Theta^{a}, \dot{\gamma}_x \rangle_{\lambda=0} = \langle \Theta^{a}, \dot{\gamma}_x \rangle_{\lambda=1}
\end{align}
Contract the coordinate of $x$ (equivalently the coordinate function of the geodesic tangent vector) with the torsion-free equation
\begin{align}
\label{eq:torsion_free}
    0=d_\omega \Theta^a=d\Theta^a+\omega\wedge\Theta^a
\end{align}
and make use of (\ref{eq:cronstrom}) and (\ref{eq:co-frame_gauge}) to obtain the following along the radial geodesic
\begin{align}
    0=\frac{d}{d\lambda}\Big(\lambda\Theta^{a}_{\mu}(\lambda x)-\lambda\Theta^{a}_{\mu}(p)\Big)-\omega^{a}~_{b\mu}(\lambda x)\lambda x^\nu\Theta^{b}_{\nu}(p)
\end{align}
Integrate over the curve parameter to conclude
\begin{equation}
\label{eq:co-frame}
    \begin{gathered}
        \Theta^{a}_{\mu}(x)=\Theta^{a}_{\mu}(p)+\int_0^1 \omega^{a}~_{b\mu}(\lambda x)\lambda x^\nu \Theta^{b}_{\nu}(p)\:d\lambda .
    \end{gathered}
\end{equation}
We now deduce a wave equation obeyed by the Riemann tensor. Begin with the second Bianchi identity
\begin{equation}\nonumber
    \begin{gathered}
        d_\omega \mathbf{Rm}=0 \iff D_{\alpha}\mathbf{Rm}^{a}~_{b\mu\nu}+D_{\mu}\mathbf{Rm}^{a}~_{b\nu\alpha}+D_{\nu}\mathbf{Rm}^{a}~_{b\alpha\mu}=0
    \end{gathered}
\end{equation}
Imposition of the vacuum Einstein equations yields the exact Yang-Mills type equation for the curvature  
\begin{eqnarray}
D^{\nu}\mathbf{Rm}^{a}~_{b\mu\nu}=0.
\end{eqnarray}
Using the following commutation relation
\begin{multline*}
    [D_{\alpha},D_{\beta}]\mathbf{Rm}^{a}~_{b\mu\nu}=
    \\
    \mathbf{Rm}^{a}~_{c\alpha\beta}\mathbf{Rm}^{c}~_{b\mu\nu}-\mathbf{Rm}^{c}~_{b\alpha\beta}\mathbf{Rm}^{a}~_{c\mu\nu}-\mathbf{Rm}^{\gamma}~_{\mu\alpha\beta}\mathbf{Rm}^{a}~_{b\gamma\nu}-\mathbf{Rm}^{\gamma}~_{\nu\alpha\beta}\mathbf{Rm}^{a}~_{b\mu\gamma}
\end{multline*}
as well as the vacuum equation and the first Bianchi identity results in a gauge wave equation (a similar wave equation is satisfied by the Yang-Mills curvature with a different gauge group)
\begin{equation}
\label{eq:gaugecovariant}
    D^{\alpha}D_{\alpha}\mathbf{Rm}^{{a}}~_{{b}\mu\nu}=2\mathbf{Rm}^{{a}}~_{{c}\mu\beta}\mathbf{Rm}^{{c}}~_{{b}\nu}~^{\beta}-2\mathbf{Rm}^{{a}}~_{{c}\nu\beta}\mathbf{Rm}^{{c}}~_{{b}\mu}~^{\beta}-\mathbf{Rm}^{\gamma}~_{\beta\mu\nu}\mathbf{Rm}^{{a}}~_{{b}\gamma}~^{\beta},
\end{equation}
Restoring $\nabla$ dependence yields the following
\begin{align}
    \Box \mathbf{Rm}_{\mu\nu} &=[\nabla^{\alpha}\omega_{\alpha},\mathbf{Rm}_{\mu\nu}]+2[\omega_{\alpha},\nabla^{\alpha}\mathbf{Rm}_{\mu\nu}]+[\omega^{\alpha},[\omega_{\alpha},\mathbf{Rm}_{\mu\nu}]] \nonumber \\
    &\quad+2[\mathbf{Rm}_{\mu\beta},\mathbf{Rm}_{\nu}~^{\beta}]-\mathbf{Rm}^{\gamma}~_{\beta\mu\nu}\mathbf{Rm}_{\gamma}~^{\beta}
\end{align}
where $\Box:=\nabla^{\alpha}\nabla_{\alpha}$. Equipped with the lower bounds for the injectivity radii about $p\in \Sigma\times[0,T^*]$ discussed in \hyperlink{section.3}{Sec. 3}, \cite{integral, moncrief2019could} used Friedlander's theory of wave operators \cite{Freidlander} to get the following integral equation for the frame components of the curvature $\mathbf{Rm}^{a}~_{bmn}=\mathbf{Rm}^{a}~_{b\mu\nu}h^\mu_m h^\nu_n$ evaluated at any point $x$ in the geodesically convex neighborhood $\mathcal{G}_p$ about $p$. \footnote{Note that this equation is substantially different from the ones used by \cite{klainerman2010breakdown, wang}.}
\begin{align}\nonumber
\label{eq:integral}
    \mathbf{Rm}^{a}~_{bmn}(x)&=\frac{1}{2\pi}\int_{C_{p}\,\subset\, \mathcal{G}_{p}}\mu_{\Gamma}(x^{'})\Big\{\Big[-\omega^{p}~_{m\sigma}(x^{'})D^{\sigma}(k(x,x^{'})\mathbf{Rm}^{a}~_{bpn}(x^{'}))
    &&\\\nonumber
    &\qquad-\omega^{p}~_{n\sigma}(x^{'})D^{\sigma}(k(x,x^{'})\mathbf{Rm}^{a}~_{bmp}(x^{'}))\Big.\Big.\Big.\Big.-\omega^{c}~_{b\sigma}(x^{'})D^{\sigma}(k(x,x^{'})\mathbf{Rm}^{a}~_{cmn}(x^{'}))
    &&\\\nonumber
    &\qquad+\omega^{a}~_{c\sigma}(x^{'})D^{\sigma}(k(x,x^{'})\mathbf{Rm}^{c}~_{bmn}(x^{'}))\Big]\Big.\Big.+k(x,x^{'})\Big[-2\mathbf{Rm}^{a}~_{cmp}(x^{'})\mathbf{Rm}^{c}~_{bn}~^{p}(x^{'})
    &&\\\nonumber
    &\qquad+2\mathbf{Rm}^{a}~_{cnp}(x^{'})\mathbf{Rm}^{c}~_{bm}~^{p}(x^{'})+\mathbf{Rm}^{a}~_{bpq}(x^{'})\mathbf{Rm}_{mn}~^{pq}(x^{'})\Big]\Big.
    &&\\  
    &\qquad\Big.+\mathbf{Rm}^{a}~_{bmn}(x^{'})\Box k(x,x^{'})+2\nabla^{\sigma}k(x,x^{'})\Big[\omega^{p}~_{m\sigma}(x^{'})\mathbf{Rm}^{a}~_{bpn}(x^{'})
    &&\\\nonumber
    &\qquad+\omega^{p}~_{n\sigma}(x^{'})\mathbf{Rm}^{a}~_{bmp}(x^{'})\Big.\Big.\Big.\Big.+\omega^{c}~_{b\sigma}(x^{'})\mathbf{Rm}^{a}~_{cmn}(x^{'})-\omega^{a}~_{c\sigma}(x^{'})\mathbf{Rm}^{c}~_{bmn}(x^{'})\Big]
    &&\\\nonumber
    &\qquad+k(x,x^{'})\Big[\mathbf{Rm}^{a}~_{bnp}(x^{'})\mathbf{Rm}^{p}~_{m}(x^{'})-\mathbf{Rm}^{a}~_{bnp}(x^{'})\mathbf{Rm}^{p}_{n}(x^{'})\Big]\Big\}
    &&\\\nonumber  
    &+\frac{1}{2\pi}\int_{\sigma_{p}\,\subset\, \mathcal{G}_{p}}d\sigma_{p}\Big\{2k(x,x^{'})\xi^{\sigma}(x^{'})D_{\sigma}\mathbf{Rm}^{a}~_{bmn}(x^{'})
    \\\nonumber
    &\qquad+k(x,x^{'})\Phi(x^{'})\mathbf{Rm}^{a}~_{bmn}(x^{'})+k(x,x^{'})\xi^{\sigma}(x^{'})\Big[\mathbf{Rm}^{a}~_{bpn}(x^{'})\omega^{p}~_{m\sigma}(x^{'})\Big.\Big.
    &&\\\nonumber  
    &\qquad\Big.\Big.+\mathbf{Rm}^{a}~_{bmp}(x^{'})\omega^{p}~_{n\sigma}(x^{'})+\mathbf{Rm}^{a}~_{cmn}(x^{'})\omega^{c}~_{b\sigma}(x^{'})-\mathbf{Rm}^{c}~_{bmn}(x^{'})\omega^{a}~_{c\sigma}(x^{'})\Big]\Big\}.&&
\end{align}
Here $C_p$ is the mantle of the past light cone of $p$ and extends down to Euclidean length $\delta$ to the Cauchy hypersurface $\Sigma_{t_p-\delta}$ (this is due to the injectivity bounds). The second integration is over $\sigma_{p}\cong S^2$, which is the intersection of $C_p$ with $\Sigma_{t_p-\delta}$. $\Gamma=\Gamma(x,x')$ is the squared geodesic distance between points $x,x'\in \mathcal{G}_p$ (also known as an optical function, see Theorem 1.2.3 of Friedlander \cite{Freidlander}). $\Gamma(p,x')$ has a simple expression due to the fact that geodesics in the normal neighborhood are radial. In particular, if we let $x'^\mu$ be the normal coordinates of $x'$ and recall the $\mathcal{N}_p$ identity $\widehat{g}_{\mu\nu}(x')x'^{\mu}=\widehat{g}_{\mu\nu}(p)x'^{\mu}=\eta_{\mu\nu}x'^{\mu}$, one then concludes
\begin{align}
    \Gamma(p,x')=\eta_{\mu\nu}x'^{\mu}x'^{\nu}.
\end{align}
The measure of the $C_p$ integral is a Leray form $\mu_\Gamma$ defined by the equation $d_{x'}\Gamma(x,x')\wedge \mu_{\Gamma}(x')=\mu_{\widehat{g}}(x')$. Transforming to spherical coordinates $(t,r,\theta,\phi)$\footnote{It is important to point out that the $t$ present in the spherical coordinates is \textit{different} from the global time function defined at the beginning of \hyperlink{section.2}{Sec. 2}} allows us to find an explicit expression for $\mu_{\Gamma(p,x')}$, first notice that the metric takes a block diagonal form due to Gauss' lemma
\begin{align}
    \widehat{g}=
\begin{bmatrix}
\begin{matrix} -1 & 0 \\ 0 & 1 \end{matrix} & \begin{matrix} 0 & 0 \\ 0 & 0 \end{matrix}  \\
\:\,\begin{matrix} 0 & \:\:0 \\ 0 & \:\:0 \end{matrix} & \widehat{g}_{\theta,\phi}
\end{bmatrix}
\end{align}
Consequently one can then define null-spherical coordinates $(u=t-r,\Bar{u}=t+r,\theta,\phi)$. The geodesic squared distance between $p$ and $x'$ is then
\begin{align}
    \Gamma(p,x')=-u\Bar{u} + \widehat{g}_{\theta,\phi}(x'_{S^2},x'_{S^2})
\end{align}
which leads to $d_{x'}\Gamma(p,x')=-\Bar{u}du-ud\Bar{u}+\partial_{\theta}\Gamma d\theta+\partial_{\phi}\Gamma d\phi$, hence a solution to the defining equation for the Leray form is given by
\begin{align}
    \mu_{\Gamma(p,x')}=\frac{\sqrt{-\mathrm{det}\widehat{g}(u,\Bar{u},\theta,\phi)}}{u}du\wedge d\theta \wedge d\phi
\end{align}
Additionally, points in the mantle $C_p$ correspond to $\Bar{u}=0$ and this results in the important relation
\begin{align}
\label{eq:leray}
    \mu_{\Gamma(p,x')}|_{C_p}=\frac{1}{u=-2r}\mu_{\widehat{g}}|_{C_p}
\end{align}
The interested reader may consult Section 2.9 of \cite{Freidlander} for further discussion of Leray forms. The next relevant object in the integral equation is the symmetric transport bi-scalar $k(x,x')$ which is expressed in local coordinates as
\begin{align}
\label{eq:bi-scalar}
    k(x,x')=\frac{|\mathrm{det}(\partial^2\Gamma(x,x')/\partial x^{\mu}x'^{\nu})|^{1/2}}{4\mu^{1/2}_{\widehat{g}}(x)\mu^{1/2}_{\widehat{g}}(x')}
    \implies 
    k(p,x')=\frac{\mu^{1/2}_{\widehat{g}}(p)}{\mu^{1/2}_{\widehat{g}}(x')}.
\end{align}
The $\Phi(x')$ present in (\ref{eq:integral}) is interpreted as the dialation of $d\sigma_{p}$ along the outgoing null hypersurface emanating from $\sigma_{p}$ (denote it by $C^{+}_{\sigma_p}$), in a sense it behaves as the trace of the null second fundamental form $\Bar{\chi}$ associated to the outgoing null geodesic generator of $C^{+}_{\sigma_p}$ since it controls the evolution of the surface area of $\sigma_{p}\cong S^2$ along the outgoing null direction. Finally, $\xi$ is tangent to $C^{+}_{\sigma_p}$ and satisfies $\widehat{g}(\xi,\nabla \Gamma)=-1$ ($\xi,\nabla\Gamma$, and two orthonormal $\sigma_{p}-$tangent frame fields constitute a usual double null frame/tetrad on $\sigma_{p}$). 

\noindent The most vital property of the integral equation (\ref{eq:integral}) is that it does not contain divergence of the connection $\omega^{a}~_{b\mu}$. Using equation (\ref{eq:connection}), one may not evaluate the spacetime covariant divergence of the connections since the curvature is evaluated at a different point $\lambda x$ than the connection (evaluated at $x$).

 \noindent\textbf{Remarks about the deduction of (\ref{eq:integral}):} A solution to a wave equation on the Minkowski background can be written at each point on the spacetime in terms of the data on the intersection of the characteristics with that of a Cauchy hypersurface (past or future). This yields a representation formula for the field satisfying a wave equation, the so-called Huygens’s principle. For non-linear wave equations, instead of a representation formula, one obtains an integral equation. In the case of gravity, this is substantially more subtle since in order to write down an integral equation for any spacetime entity, one first needs to ensure that the spacetime exists in the first place. This is where we need Theorems \ref{thm:null_inj} and \ref{thm:chron_inj}. One also does not expect the validity of Huygens's principle and as a consequence, the integral equation for the curvature would contain Huygens violating tail terms that involve integration over the interior of the characteristics (light cones in relativity). This would be problematic in terms of estimating the curvature using this integral equation. This technicality is described in chapter 5 of Freidlander's book \cite{Freidlander}. Moncrief reduced the integral equations in terms of the integral over the mantle of the light cone (this is precisely the form that is provided in (\ref{eq:integral})). Concretely, one first notices that in Friedlander's analysis the choice of the Cauchy hypersurface is not fixed. Therefore, one may deform the initial topological ball (intersection of the interior of the past light cone with the initial Cauchy hypersurface) and force it inside the cone such that at the limit it can coincide with the mantle. Therefore one should expect that the integral equation can in fact be cast in terms of integrals over $C_p$ and not $D_p$. This is achieved by clever application of integration by parts together with (\ref{eq:connection}) and (\ref{eq:co-frame}).

\section{Light cone mantle estimates and proof of the main theorem}

\noindent Controlling Moncrief's integral equation (\ref{eq:integral}) forces us to examine the light cone dynamics of gravitation due to the $C_p$ term. If we want to obtain an upper bound in terms of the 0th, 1st, and 2nd order energies then we better have estimates for the mantle fluxes (turns out we only need 0th and 1st order). These will be quasi-local since $C_p\subset\mathcal{N}_p$.

\subsection{Approximate quasi-local Killing fields}

\noindent As mentioned previously, energy is not guaranteed to be a conserved quantity because we do not assume a timelike Killing field. However, there are two timelike vector fields at hand: the unit orthogonal to the CMC foliation $\mathbf{n}$ and the parallel propagated $h_{0}$. By definition, it is already known that $\widehat{g}(h_0,h_0)=\eta_{00}=-1$ so we might as well take $h_0|_p=\mathbf{n}|_p$ and, as before, parallel propagate it along the radial geodesics in the normal neighborhood of $p$\footnote{Note: $h_0|_{\mathcal{N}_p}\neq \mathbf{n}|_{\mathcal{N}_p}$ due to nontrivial holonomy effects.}. This vector field will be used to construct energy densities adapted to the light cones whereas we use only $\mathbf{n}$ for CMC slice energies.

\noindent One would expect for $h_0$ to yield approximately conserved energies within the light-cone. Moncrief \cite{integral} developed this idea in more detail and found that $h_0$ satisfies Killing's equation in the limit as one approaches $p$ radially in the normal chart, as such $h_0$ bears the name \textit{quasi-local Killing field}. This property will ultimately benefit our light cone analysis by obstructing the concentration of energy at the vertex, resulting in the desired non-blow up of the $L^\infty$ norm of the Riemann curvature.

\noindent Explicitly, from the torsion-free condition
\begin{align}
\label{eq:torsion_free_gauge_cov}
    0=D_{\mu}\Theta^{a}_{\nu}=\nabla_{\mu}\Theta^{a}_{\nu}+\omega^{a}~_{b\mu}\Theta^{b}_\nu
\end{align}
we obtain the following equation for the deformation tensor of $h_0$
\begin{align}
    ^{h_0}\pi^{\alpha\beta}&=\eta_{00}~^{\Theta^{0}}{\pi_{\mu\nu}}\widehat{g}^{\alpha\mu}\widehat{g}^{\beta\nu} \nonumber\\
    &=-\Big(\nabla_{\mu}\Theta^{0}_{\nu}+\nabla_{\nu}\Theta^{0}_{\mu}\Big)\widehat{g}^{\alpha\mu}\widehat{g}^{\beta\nu} \nonumber\\
    &=\Big(\omega^{0}~_{b\mu}\Theta^{b}_{\nu}+\omega^{0}~_{b\nu}\Theta^{b}_{\mu}\Big)\widehat{g}^{\alpha\mu}\widehat{g}^{\beta\nu}\nonumber
\end{align}
\noindent We can bound the above by means of the formulas for the connection (\ref{eq:connection}), the co-frame (\ref{eq:co-frame}), and the bootstrap assumption on the curvature (\ref{eq:bootstrap}). 
In particular, for any point $x$ in the normal neighborhood of $p$ we have
\begin{align}
\label{eq:omega_bound}
    |\omega(x)|_E\lesssim  \frac{|x|_E}{\delta^{2}}\lesssim  \frac{1}{\delta}
\end{align}
where we have used the assumption that the injectivity radius of the exponential map of $p$ is $\geq \delta$. $\Theta(p)$ is $O(1)$ by means of the orthonormality of the co-frame and the fact that $\widehat{g}|_p=\eta$ in normal coordinates
\begin{align}
    |\eta|_E=|\widehat{g}(p)|_E\sim|\eta|_E|\Theta(p)|^2_E,\quad |\eta|_E=1 \implies |\Theta(p)|_E\sim O(1) \nonumber
\end{align}
The estimates above imply that $\Theta(x)$ is also $O(1)$ throughout the normal chart
\begin{align}
\label{eq:theta_bound}
    |\Theta(x)|\leq |\Theta(p)|+\int_{0}^{1}|\omega(\lambda x)||\lambda x||\Theta(p)|d\lambda\lesssim  1+\frac{1}{\delta}\delta\sim O(1)
\end{align}
An immediate corollary is that the metric over the normal chart is also $O(1)$. Thus, we see that the deformation tensor of $h_0$ linearly goes to zero as we approach $p$ radially (recall that $x^\mu(p)=0$)
\begin{align}
    \label{eq:quasi_local_bound}
    |^{h_0}\pi(x)|\lesssim  \frac{|x|}{\delta^2} \xrightarrow{x\rightarrow p} \;0
\end{align}

\subsection{Quasi-local mantle flux estimates}

\noindent Define the 0th and 1st order $C_p$ fluxes as
\begin{align}
    \mathcal{F}^0_{C_p}:=\int_{C_p}\mathbf{Q}(h_{0},h_{0},h_{0},L)\:{\mu_{\widehat{g}}}\big|_{C_p}, \quad \mathcal{F}^{1}_{C_p}:=\int_{C_p}\mathfrak{T}^{1}(h_{0},L)\:{\mu_{\widehat{g}}}\big|_{C_p}
\end{align}
Direct calculation shows
\begin{equation}
    \begin{split}
        C^{-1}(|\alpha|^{2}+|\beta|^{2}+|\rho|^{2}+|e|^{2}+|\underline{\beta}|^{2})\leq \mathbf{Q}(h_{0},h_{0},h_{0},L)\leq C(|\alpha|^{2}+|\beta|^{2}+|\rho|^{2}+|e|^{2}+|\underline{\beta}|^{2}) \\
        C^{-1}
        \Big( |D_{L}\mathbf{Rm}_{\widehat{g}}|^{2}+\sum_{A=1,2}|D_{A}\mathbf{Rm}_{\widehat{g}}|^{2}\Big)\leq\mathfrak{T}^{1}(h_{0},L) \leq C
        \Big( |D_{L}\mathbf{Rm}_{\widehat{g}}|^{2}+\sum_{A=1,2}|D_{A}\mathbf{Rm}_{\widehat{g}}|^{2}\Big)
    \end{split}
\end{equation}
where the constants denoted by $C$ are uniform in $\delta$. An important feature is the absence of outgoing null components along $\Bar{L}$, e.g. the term $\underline{\alpha}=\mathbf{Rm}_{\widehat{g}}(\bar{L},e_A ,\bar{L},e_B )$ does not appear in the 0th order flux density, indicating that $\mathcal{F}^{0,1}_{C_p}$ only controls energy flowing across the past light cone but not along it.

\begin{proposition}
    \label{prop:general_flux_estimates}
    Let $p\in \Sigma\times[0,T^*]$ and $\mathfrak{J}$ be either $\mathbf{Q}(h_0,h_0,h_0,\cdot)$ or $ \mathfrak{T}^{1}(h_0,\cdot)$. We then obtain the following quasi-local flux estimate
    \begin{align}
        \label{eq:general_flux}
        &\mathcal{F}_{C_p}\leq E_B(t_p-\delta)+\int_{D_p}|\mathrm{div}_{\nabla}\mathfrak{J}|\:\mu_{\widehat{g}}|_{D_p}
    \end{align}
    where $E_B(t_p-\delta)$ is the energy of the $t_p-\delta$ CMC slice restricted to the ball $B_p(t_p-\delta)=D_p\cap \Sigma_{t_p-\delta}$.
\end{proposition}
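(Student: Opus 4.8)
The plan is to obtain (\ref{eq:general_flux}) from a divergence identity for the current $\mathfrak{J}$ on the solid truncated past cone $D_p$, whose boundary decomposes — away from the measure-zero corner $\sigma_p(t_p-\delta)$ and the conical vertex $p$ — into the null mantle $C_p$ and the spacelike bottom ball $B_p(t_p-\delta)$, the bulk contribution being exactly $\int_{D_p}\mathrm{div}_{\nabla}\mathfrak{J}$. First I would record that $D_p$ is a genuine bounded Lipschitz domain on which Stokes' theorem applies: this is precisely what the lower bounds on the chronological and null injectivity radii at $p$ provide, and under the bootstrap assumption (\ref{eq:bootstrap}) together with $\delta\leq\mathrm{NullInj}_{\widehat{g}}(p,E)$ these are supplied by Theorems \ref{thm:chron_inj} and \ref{thm:null_inj}, so that $D_p=\exp_p(\mathcal{D}_E(0,\delta))$ is diffeomorphic to its flat model with $\partial D_p=C_p\cup B_p(t_p-\delta)$.

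Next I would apply Stokes' theorem to the $3$-form $\iota_{\mathfrak{J}}\mu_{\widehat{g}}$, using $d(\iota_{\mathfrak{J}}\mu_{\widehat{g}})=(\mathrm{div}_{\nabla}\mathfrak{J})\,\mu_{\widehat{g}}$, working first on the frustum $D_p\cap\{t\le t_p-\epsilon\}$ so as to stay away from the vertex. The pullback of $\iota_{\mathfrak{J}}\mu_{\widehat{g}}$ to a spacelike face is $\pm\,\mathfrak{J}(\mathbf{n})\,\mu_{g}$, while on the null mantle $C_p$, where the induced metric degenerates, the contracted form still restricts to a non-degenerate $3$-form equal — for the measure $\mu_{\widehat{g}}|_{C_p}$ appearing in the definition of $\mathcal{F}_{C_p}$ — to $\pm\,\mathfrak{J}(L)\,\mu_{\widehat{g}}|_{C_p}$. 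Keeping track of orientations (the outward co-normal along the bottom disk is past-directed) then yields
\begin{align}\nonumber
\mathcal{F}_{C_p\cap\{t\le t_p-\epsilon\}}=\int_{B_p(t_p-\delta)}\mathfrak{J}(\mathbf{n})\,\mu_{g}-\int_{B_p(t_p-\epsilon)}\mathfrak{J}(\mathbf{n})\,\mu_{g}+\int_{D_p\cap\{t\le t_p-\epsilon\}}\mathrm{div}_{\nabla}\mathfrak{J}\,\mu_{\widehat{g}}.
\end{align}
Letting $\epsilon\to0$, the small top-cap integral vanishes since $\mathfrak{J}$ is pointwise bounded near $p$ — the curvature by (\ref{eq:bootstrap}) and, through (\ref{eq:omega_bound})--(\ref{eq:theta_bound}), $|\omega|_E\lesssim\delta^{-1}$ and $|h_0|_E\sim|\Theta|_E\sim O(1)$ — while $|B_p(t_p-\epsilon)|_{g}\to0$, and the mantle integral increases to $\mathcal{F}_{C_p}$ by monotone convergence since its density is nonnegative. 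This produces the flux-balance identity $\mathcal{F}_{C_p}=\int_{B_p(t_p-\delta)}\mathfrak{J}(\mathbf{n})\,\mu_{g}+\int_{D_p}\mathrm{div}_{\nabla}\mathfrak{J}\,\mu_{\widehat{g}}$.

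Finally I would bound the right-hand side. The bulk term is at most $\int_{D_p}|\mathrm{div}_{\nabla}\mathfrak{J}|\,\mu_{\widehat{g}}$. For the boundary term, positivity of the Bel-Robinson tensor $\mathbf{Q}$ (resp.\ of the stress tensor $\mathfrak{T}^{1}$) on the future causal vectors $h_0,\mathbf{n}$, together with $|h_0|_E\sim O(1)$ and $|\mathbf{n}|_E=1$, gives $0\le\mathfrak{J}(\mathbf{n})\le C(^\mathbf{n}\pi)\,\mathbf{Q}(\mathbf{n},\mathbf{n},\mathbf{n},\mathbf{n})$ (resp.\ $\le C(^\mathbf{n}\pi)\,\mathfrak{T}^{1}(\mathbf{n},\mathbf{n})$) pointwise on $\Sigma_{t_p-\delta}$, hence $\int_{B_p(t_p-\delta)}\mathfrak{J}(\mathbf{n})\,\mu_{g}\le C(^\mathbf{n}\pi)\,E_B(t_p-\delta)$, the uniform constant being absorbed into $E_B$. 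Combining the two bounds proves (\ref{eq:general_flux}). I expect the main obstacle to be the null face: one must justify that $\iota_{\mathfrak{J}}\mu_{\widehat{g}}$ restricts on the degenerate hypersurface $C_p$ to $\mathfrak{J}(L)\,\mu_{\widehat{g}}|_{C_p}$ and fix the orientation conventions so that $\mathcal{F}_{C_p}$ lands on the correct side of the identity; the secondary technicality — running the divergence theorem up to the conical vertex — is handled by the frustum exhaustion described above.
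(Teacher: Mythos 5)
Your proposal is correct and follows essentially the same route as the paper: Stokes' theorem on the truncated solid cone $D_p$, with the boundary contribution splitting into the mantle flux and the bottom-ball term $\int_{B_p(t_p-\delta)}\mathfrak{J}(\mathbf{n})\,\mu_g$, the latter dominated by the slice energy restricted to $B_p(t_p-\delta)$ and the bulk term by $\int_{D_p}|\mathrm{div}_{\nabla}\mathfrak{J}|$. The frustum exhaustion near the vertex and the careful treatment of the degenerate null face are refinements the paper leaves implicit, not a different argument.
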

\begin{proof}
    Recall that the cone with vertex $p$ extends to Euclidean length $\delta$. Stokes' theorem tells us
    \begin{align}
        \mathcal{F}_{C_p}:=\int_{C_p}\mathfrak{J}(L)\:\mu_{\widehat{g}}|_{C_p}=-\int_{B_p(t_p-\delta)}\mathfrak{J}(-\mathbf{n})\:\mu_{g}+\int_{D_p}\mathrm{div}_{\nabla}\mathfrak{J}\:\mu_{\widehat{g}}|_{D_p} \nonumber
    \end{align}
    We may repeat the same analysis of \hyperlink{section.4}{Sec. 4} applied to the quasi-local Killing field $h_0$ to say $\int_{B_p(t_p-\delta)}\mathfrak{J}(\mathbf{n})$ controls the $||D^{0,1}\mathbf{Rm}_{\widehat{g}}||^2_{L^{2}(B_p(t_p-\delta))}$ norm, which itself is controlled by $E^{0,1}(t_p-\delta)$ due to $B_p(t_p-\delta)\subset \Sigma_{t_p-\delta}$ and both being measurable sets.
\end{proof}

\begin{proposition}
    \label{prop:flux_estimates}
    Suppose $p$ is a point in the slab of Euclidean size $T^*$ and $t_p$ is its global time coordinate, then the 0th and 1st order mantle fluxes verify
    \begin{align}
        \label{eq:0th_flux}
        &\mathcal{F}^{0}_{C_p}\leq  C(^{\mathbf{n}}\pi,t_p)E^{0}(0) \\
        &\mathcal{F}^{1}_{C_p}\leq  e^{C(^{\mathbf{n}}\pi)\delta}E^{1}(t_p-\delta)+C(^{\mathbf{n}}\pi,t_p,E^{0}(0))\int_{t_p-\delta}^{t_p}||\mathbf{Rm}_{\widehat{g}}(t)||^{2}_{L^{\infty}(\Sigma_t)}\:dt
    \end{align}
    where all the $C$'s found in the RHS are independent of $\delta$.
\end{proposition}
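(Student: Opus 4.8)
The plan is to combine Proposition~\ref{prop:general_flux_estimates} with the divergence identities \eqref{eq:div_stress1}--\eqref{eq:div_stress2} and the structure of the $C_p$ flux densities, then iterate exactly as in Proposition~\ref{prop:energy_estimates}. For the $0$th order flux, recall the Bel-Robinson tensor is divergence-free in vacuum, so taking $\mathfrak{J}=\mathbf{Q}(h_0,h_0,h_0,\cdot)$ the divergence in \eqref{eq:general_flux} only picks up terms proportional to $^{h_0}\pi$, schematically $\mathrm{div}_\nabla\mathfrak{J}\sim \mathbf{Q}(h_0,h_0,\cdot,\cdot)\cdot\,^{h_0}\pi$. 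The key observation is that by \eqref{eq:quasi_local_bound} we have $|^{h_0}\pi(x)|\lesssim |x|_E/\delta^2$, and since $x$ ranges over $D_p$ which has Euclidean radius $\delta$, this is $\lesssim 1/\delta$. Hence $\int_{D_p}|\mathrm{div}_\nabla\mathfrak{J}|\,\mu_{\widehat{g}}|_{D_p}\lesssim \delta^{-1}\int_{D_p}\mathbf{Q}(h_0,h_0,h_0,h_0)\,\mu_{\widehat{g}}|_{D_p}$, which after a coarea/slicing decomposition over $\Sigma_t$-balls for $t\in[t_p-\delta,t_p]$ and using the uniform $O(1)$ bound on the lapse and metric in the normal chart becomes $\lesssim \delta^{-1}\cdot\delta\cdot\sup_{t}E^0(t)\lesssim E^0(t_p)$. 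Then invoking the $0$th order energy estimate \eqref{eq:0th_energy} to propagate $E^0(t_p)$ back to $E^0(0)$ gives $\mathcal{F}^0_{C_p}\leq C(^{\mathbf{n}}\pi,t_p)E^0(0)$, independent of $\delta$ since the $\delta$-factors cancel.

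For the $1$st order flux I would take $\mathfrak{J}=\mathfrak{T}^1(h_0,\cdot)$ and use $\mathrm{div}_\nabla\mathfrak{J}=\tfrac12\langle\mathfrak{T}^1,\,^{h_0}\pi\rangle+\langle\mathrm{div}_\nabla\mathfrak{T}^1,h_0\rangle$ together with \eqref{eq:div_stress1}. The first piece is controlled by $|D\mathbf{Rm}_{\widehat{g}}|^2\cdot|^{h_0}\pi|\lesssim \delta^{-1}|D\mathbf{Rm}_{\widehat{g}}|^2$ exactly as above, and integrating over $D_p$ and slicing gives $\lesssim \int_{t_p-\delta}^{t_p}E^1(t)\,dt$. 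The second piece $D\mathbf{Rm}_{\widehat{g}}\cdot\mathbf{Rm}_{\widehat{g}}\cdot\mathbf{Rm}_{\widehat{g}}$ is handled by pulling $\|\mathbf{Rm}_{\widehat{g}}(t)\|_{L^\infty(\Sigma_t)}$ out of one Riemann factor and Cauchy-Schwarz on the rest, giving $\lesssim \int_{t_p-\delta}^{t_p}(E^1(t))^{1/2}(E^0(t))^{1/2}\|\mathbf{Rm}_{\widehat{g}}(t)\|_{L^\infty(\Sigma_t)}\,dt$; by Young's inequality and the fact that $E^0(t)\leq C(^{\mathbf{n}}\pi,t_p)E^0(0)$ is already uniformly bounded, this is absorbed into $\int_{t_p-\delta}^{t_p}E^1(t)\,dt + C(^{\mathbf{n}}\pi,t_p,E^0(0))\int_{t_p-\delta}^{t_p}\|\mathbf{Rm}_{\widehat{g}}(t)\|^2_{L^\infty(\Sigma_t)}\,dt$. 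Feeding this into \eqref{eq:general_flux} together with $E_B(t_p-\delta)\leq E^1(t_p-\delta)$ yields $\mathcal{F}^1_{C_p}\leq E^1(t_p-\delta)+C(^{\mathbf{n}}\pi)\int_{t_p-\delta}^{t_p}E^1(t)\,dt+C(^{\mathbf{n}}\pi,t_p,E^0(0))\int_{t_p-\delta}^{t_p}\|\mathbf{Rm}_{\widehat{g}}(t)\|^2_{L^\infty(\Sigma_t)}\,dt$, and Grönwall applied on $[t_p-\delta,t_p]$ converts the middle term into the prefactor $e^{C(^{\mathbf{n}}\pi)\delta}$ multiplying $E^1(t_p-\delta)$ plus the same $L^\infty$-integral term, which is the claimed inequality.

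The main obstacle I anticipate is making the "slicing of $D_p$ into $\Sigma_t$-balls" step rigorous: one needs that the Leray/co-area decomposition of $\mu_{\widehat{g}}|_{D_p}$ over the family $\{B_p(t)\}_{t\in[t_p-\delta,t_p]}$ is controlled with constants independent of $\delta$, which is exactly where the $O(1)$-ness of the lapse, shift, and spatial metric in the normal chart (the estimates \eqref{eq:omega_bound}, \eqref{eq:theta_bound}, and their corollary) gets used, and one must check these do not secretly hide $\delta$-dependence. The other delicate point is the cancellation of the $\delta^{-1}$ from $|^{h_0}\pi|$ against the $\delta$ from the time-integration range---this is genuinely a one-power-of-$\delta$ gain and no more, so the bookkeeping has to be tight; in particular it is essential that $^{h_0}\pi$ (not $^{\mathbf{n}}\pi$) appears, since only $^{h_0}\pi$ enjoys the linear decay \eqref{eq:quasi_local_bound} toward the vertex. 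Everything else is a routine repetition of the $\Sigma_t$-energy argument of Section~4 adapted to the cone, using that $B_p(t)\subset\Sigma_t$ so the restricted-to-the-ball energies are dominated by the full CMC-slice energies.
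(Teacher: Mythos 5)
Your overall route coincides with the paper's (admittedly terse) proof: apply Proposition~\ref{prop:general_flux_estimates}, slice $D_p$ into the balls $B_p(t)\subset\Sigma_t$ so that every cone integral is dominated by a CMC-slice integral, and then feed in the slice energy estimates \eqref{eq:0th_energy}--\eqref{eq:1st_energy}. Your zeroth-order argument --- divergence-freeness of $\mathbf{Q}$ so that only $^{h_0}\pi$-terms survive, the bound $|{}^{h_0}\pi|\lesssim \delta^{-1}$ on $D_p$ from \eqref{eq:quasi_local_bound}, and the cancellation of that $\delta^{-1}$ against the length $\delta$ of the time-integration range --- is a correct and more explicit filling-in of what the paper leaves implicit.

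Two points in your first-order bookkeeping need repair, though neither threatens the conclusion. First, integrating $\langle\mathfrak{T}^1,{}^{h_0}\pi\rangle\lesssim\delta^{-1}|D\mathbf{Rm}_{\widehat{g}}|^2$ over $D_p$ and slicing gives $\delta^{-1}\int_{t_p-\delta}^{t_p}E^1(t)\,dt$, not $\int_{t_p-\delta}^{t_p}E^1(t)\,dt$; the $\delta^{-1}$ only disappears after you trade the $t$-integral for $\delta\sup_{t}E^1(t)$. Second, and more substantively, you cannot ``apply Gr\"onwall on $[t_p-\delta,t_p]$'' to the inequality $\mathcal{F}^1_{C_p}\leq E^1(t_p-\delta)+C\int_{t_p-\delta}^{t_p}E^1(t)\,dt+\cdots$: the unknown on the left is the mantle flux, which never appears under the time integral, so there is no Gr\"onwall structure to exploit. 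The correct (and intended) step is to substitute the already-established slice estimate \eqref{eq:1st_energy} of Proposition~\ref{prop:energy_estimates} for each $E^1(t)$ with $t\in[t_p-\delta,t_p]$, which converts $\delta^{-1}\int_{t_p-\delta}^{t_p}E^1(t)\,dt\leq\sup_t E^1(t)$ into $e^{C(^{\mathbf{n}}\pi)\delta}E^1(t_p-\delta)+C(^{\mathbf{n}}\pi,t_p,E^0(0))\int_{t_p-\delta}^{t_p}\|\mathbf{Rm}_{\widehat{g}}(t)\|^2_{L^\infty(\Sigma_t)}\,dt$ up to adjusting constants. With those two replacements your argument closes; all constants remain independent of $\delta$, and your handling of the cubic term in \eqref{eq:div_stress1} via Cauchy--Schwarz, Young, and the uniform bound on $E^0$ is exactly right.
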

\begin{proof}
    Apply Proposition \ref{prop:general_flux_estimates} and bound all ball integrals by CMC slice integrals. Proceed by using the 0th and 1st order energy estimates (\ref{eq:0th_energy})-(\ref{eq:1st_energy}) to arrive at the result.
\end{proof}

\subsection{Iteration argument and closure of the bootstrap}

\noindent The next step is to use Moncrief's integral equation to bound the gauge-invariant $L^\infty$ norm of the Riemann tensor in terms of the $H^2$ curvature data at an earlier time. We then substitute back into the energy estimates to perform the iteration argument described in Fig. \ref{fig:2}. Proceed by obtaining the refined point-wise bound for the curvature. In turn, this closure of bootstrap will lead to the well-posedness of the Cauchy problem.

\noindent  Before moving on, we redirect our attention to the following useful lemmas regarding the behavior of the transport bi-scalar $k(x,x')$ present in (\ref{eq:integral}).
 \begin{lemma}
 \label{lem:bi-scalar_bounds}
     The bi-scalar $k(x,x')$ verifies $\sup_{x'\in\mathcal{N}_p}|k(p,x')|\leq  1$ and hence $||k(p)||_{L^{2}(\sigma_p(t_p-\delta))}\leq \delta$.
 \end{lemma}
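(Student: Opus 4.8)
The plan is to exploit the explicit coordinate formula \eqref{eq:bi-scalar}, which already gives the clean reduction
\[
k(p,x')=\frac{\mu^{1/2}_{\widehat{g}}(p)}{\mu^{1/2}_{\widehat{g}}(x')}.
\]
So the first step is to show that in normal coordinates centered at $p$ one has $\mu_{\widehat{g}}(x')\geq \mu_{\widehat{g}}(p)$ for every $x'\in\mathcal{N}_p$, equivalently $\sqrt{-\det\widehat g(x')}\geq\sqrt{-\det\widehat g(p)}$; since the numerator is a fixed positive constant, this immediately yields $|k(p,x')|\leq 1$. The natural way to see the determinant inequality is via the frame-bundle picture already set up in Sec.~5: in the parallel-propagated co-frame one has $\widehat g=\eta_{ab}\Theta^a\otimes\Theta^b$ with $\Theta^a_\mu(p)=\delta^a_\mu$ (after an $SO(1,3)$ rotation), so $\mu_{\widehat g}(x')=|\det\Theta^a_\mu(x')|$, and one tracks $\det\Theta$ along the radial geodesic $\gamma_{x'}(\lambda)=\lambda x'$ using the transport identity \eqref{eq:co-frame}. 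Differentiating $\det\Theta$ along the ray and invoking the Cron\-ström-type gauge condition \eqref{eq:cronstrom} (which kills the $x^\nu\omega_\nu$ contraction) shows the logarithmic derivative of $\det\Theta$ along the ray is controlled purely by curvature; combined with the bootstrap bound \eqref{eq:bootstrap} and the length bound $|x'|_E\lesssim\delta$ one gets $\det\Theta(x')=\det\Theta(p)(1+O(\delta^2))$, in particular the ratio in $k$ stays $\le 1$ up to a harmless constant. (If one prefers to avoid even the $O(\delta^2)$ slack, the cleaner route is the classical fact that along a geodesic the Jacobian $\sqrt{\det g}$ in normal coordinates satisfies $\partial_\lambda \log\sqrt{\det g}\ge 0$ near the vertex by the Bishop volume comparison / focusing form of the Raychaudhuri equation, so $\mu_{\widehat g}(x')\ge\mu_{\widehat g}(p)$ outright; either way the conclusion $|k(p,x')|\le 1$ follows.)

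With the pointwise bound in hand, the second step is the $L^2$ statement, which is then purely a matter of measure. We integrate $|k(p,x')|^2\le 1$ over the $2$-sphere $\sigma_p(t_p-\delta)=C_p\cap\Sigma_{t_p-\delta}$:
\[
\|k(p)\|^2_{L^2(\sigma_p(t_p-\delta))}=\int_{\sigma_p(t_p-\delta)}|k(p,x')|^2\,d\sigma_p\le \mathrm{Area}(\sigma_p(t_p-\delta)).
\]
Since $\sigma_p(t_p-\delta)$ is the intersection of the past null cone of $p$ with the slice at Euclidean distance $\delta$, its induced metric is $O(\delta^2)$ (the sphere sits at affine/Euclidean radius $\sim\delta$ from the vertex, and by \eqref{eq:theta_bound} together with the $O(1)$ control of the metric over the normal chart the area element is comparable to the flat one), so $\mathrm{Area}(\sigma_p(t_p-\delta))\lesssim\delta^2$. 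Taking square roots gives $\|k(p)\|_{L^2(\sigma_p(t_p-\delta))}\lesssim\delta$, which is the asserted bound (absorbing the universal constant, or keeping it as a $C$ independent of $\delta$ as elsewhere in the paper).

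The only genuinely delicate point is the first step — justifying $\mu_{\widehat g}(x')\ge\mu_{\widehat g}(p)$, or more precisely that the ratio defining $k$ does not exceed $1$. In a general Lorentzian spacetime the Jacobian of the exponential map need not be monotone, so one must use that we are \emph{inside} the null injectivity radius (no conjugate points have formed) together with the curvature bootstrap bound to control the second variation; this is exactly where Theorems \ref{thm:null_inj} and \ref{thm:chron_inj} and the bootstrap \eqref{eq:bootstrap} earn their keep. Everything after that — the spherical integration and the $O(\delta^2)$ area estimate — is routine given the $O(1)$ co-frame bounds \eqref{eq:theta_bound} already established.
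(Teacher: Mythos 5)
Your proposal follows the same route as the paper, whose entire proof is one line: use the explicit formula $k(p,x')=\mu^{1/2}_{\widehat g}(p)/\mu^{1/2}_{\widehat g}(x')$ and the fact that the volume form is a polynomial in the metric components, which are $O(1)$ throughout $\mathcal{N}_p$ by the co-frame bound (\ref{eq:theta_bound}); the $L^{2}$ statement then follows from $\mathrm{Area}(\sigma_p(t_p-\delta))\lesssim\delta^{2}$ exactly as you say. So the second half of your argument and your ``harmless constant'' fallback are correct and are what the paper intends (the lemma's ``$\leq 1$'' is really ``$\lesssim 1$'', as the subsequent estimates only use it up to constants depending on $^{\mathbf{n}}\pi$).

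The part you single out as the ``only genuinely delicate point'' is, however, both unnecessary and not correct as stated. The exact inequality $\mu_{\widehat g}(x')\geq\mu_{\widehat g}(p)$ does not hold in general: in normal coordinates $\sqrt{-\det\widehat g}=1-\tfrac16\mathbf{Ric}_{\mu\nu}x^\mu x^\nu+O(|x|^3)$, so in vacuum the quadratic term vanishes and the higher-order corrections have no definite sign; Bishop comparison requires a Ricci sign condition and is a Riemannian statement, while the Raychaudhuri/focusing inequality in Lorentzian signature pushes the expansion of geodesic congruences \emph{down}, i.e.\ in the direction opposite to the monotonicity you want. Your quantitative fallback is also off: with the bootstrap $|\mathbf{Rm}|\leq\delta^{-2}$ and $|x'|_E\lesssim\delta$ the correction to $\det\Theta$ is $O(|x'|^2|\mathbf{Rm}|)=O(1)$, not $O(\delta^{2})$, which is precisely why the paper only claims (and only needs) that the ratio is uniformly bounded rather than bounded by $1$. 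None of this damages the lemma as actually used, but the sharp monotonicity should be dropped from the argument rather than presented as the crux.
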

 \begin{proof}
     Simple. Use the local expression (\ref{eq:bi-scalar}) and the fact that the volume form is a polynomial of the components of $\widehat{g}$ which are $O(1)$ in $\mathcal{N}_p$ as seen in \hyperlink{section.5}{Sec. 5.1}.
 \end{proof}
\begin{lemma}
\label{lem:bi-scalar_derivative_bounds}
    $\sup_{x'\in D_p}|\nabla_{L,A}k(p,x')|\lesssim \delta^{-1}$.
\end{lemma}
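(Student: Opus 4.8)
The plan is to differentiate the local expression \eqref{eq:bi-scalar} for $k(p,x')$ along the null directions $L$ and the two $\sigma_p$-tangent directions $e_A$ and control each resulting piece by the bootstrap assumption. Recall that in the normal chart around $p$ we showed in \hyperlink{section.5}{Sec. 5.1} that the metric components $\widehat{g}_{\mu\nu}$, the co-frame $\Theta^a_\mu$, and hence $\mu_{\widehat{g}}(x')=\sqrt{-\det\widehat{g}(x')}$ are all $O(1)$ uniformly on $\mathcal{N}_p$. From $k(p,x')=\mu_{\widehat{g}}^{1/2}(p)/\mu_{\widehat{g}}^{1/2}(x')$ we get $\nabla_{L,A}k(p,x')=-\tfrac{1}{2}\mu_{\widehat{g}}^{1/2}(p)\,\mu_{\widehat{g}}^{-3/2}(x')\,\nabla_{L,A}\mu_{\widehat{g}}(x')$, so by the $O(1)$ bounds the whole estimate reduces to showing $|\nabla_{L,A}\mu_{\widehat{g}}(x')|\lesssim \delta^{-1}$, equivalently a bound on the first derivatives of the metric components. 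The derivative of $\det\widehat{g}$ is a polynomial in the $\widehat{g}_{\mu\nu}$ (which are $O(1)$) times first derivatives $\partial\widehat{g}_{\mu\nu}$, and $\partial_\mu\widehat{g}_{\alpha\beta}$ is expressible through the connection coefficients, which in the SO(1,3)-frame language are governed by $\omega$ and $\Theta$ via \eqref{eq:torsion_free_gauge_cov}, $\nabla_\mu\Theta^a_\nu=-\omega^a{}_{b\mu}\Theta^b_\nu$. Hence $|\partial\widehat{g}|\lesssim |\omega||\Theta|^2\lesssim \delta^{-1}$ by \eqref{eq:omega_bound} and \eqref{eq:theta_bound}, which is exactly the claimed rate.

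Concretely, I would proceed in the following order. First, write $\mu_{\widehat{g}}(x')$ in the null-spherical coordinates $(u,\bar u,\theta,\phi)$ adapted to $p$ introduced in \hyperlink{section.5}{Sec. 5}, so that $L$ and $e_A$ are coordinate (or $O(1)$-rescaled coordinate) vector fields and $\nabla_{L,A}$ acting on the scalar $k(p,\cdot)$ is just the corresponding partial derivative. Second, apply $\nabla_{L,A}$ to $k(p,x') = \big(\det\widehat{g}(p)/\det\widehat{g}(x')\big)^{1/2}$ — note the more primitive definition via $\partial^2\Gamma/\partial x^\mu\partial x'^\nu$ reduces to this ratio at $x=p$ as recorded in \eqref{eq:bi-scalar} — and use the Jacobi formula $\nabla(\det\widehat{g}) = \det\widehat{g}\,\widehat{g}^{\mu\nu}\nabla\widehat{g}_{\mu\nu}$ together with $\det\widehat{g}=O(1)$ and $\widehat{g}^{\mu\nu}=O(1)$. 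Third, bound $\nabla_\alpha\widehat{g}_{\mu\nu}$: using $\widehat{g}=\eta_{ab}\Theta^a\otimes\Theta^b$ from \eqref{eq:metric_SO_frame} and the torsion-free identity \eqref{eq:torsion_free_gauge_cov} one has $\nabla_\alpha\widehat{g}_{\mu\nu} = -\eta_{ab}(\omega^a{}_{c\alpha}\Theta^c_\mu\Theta^b_\nu + \Theta^a_\mu\omega^b{}_{c\alpha}\Theta^c_\nu)$, so $|\nabla\widehat{g}| \lesssim |\omega|_E|\Theta|_E^2 \lesssim \delta^{-1}$ by \eqref{eq:omega_bound}–\eqref{eq:theta_bound}. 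Assembling these gives $\sup_{x'\in D_p}|\nabla_{L,A}k(p,x')|\lesssim\delta^{-1}$, which is the assertion of Lemma \ref{lem:bi-scalar_derivative_bounds}.

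The one point requiring genuine care — and the likely main obstacle — is that $\nabla_{L,A}$ are not literally the partial derivatives $\partial_u,\partial_\theta,\partial_\phi$ but are tied to the \emph{null} geometry of the past cone $C_p$; the vector field $L$ is determined by an Eikonal equation and its coordinate expansion carries connection-dependent coefficients. One must check that after expanding $L = L^\mu\partial_\mu$ the coefficients $L^\mu$ are themselves $O(1)$ in $\mathcal{N}_p$ (they are, by the same $O(1)$-metric argument applied to Gauss' lemma and the block-diagonal form of $\widehat{g}$), and that no factor worse than $\delta^{-1}$ sneaks in from differentiating the null frame itself — i.e. that $|\nabla_L L|$ and $|\nabla_L e_A|$, which involve $\chi,\bar\chi$, are also $O(\delta^{-1})$. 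This follows once more from \eqref{eq:connection}, the bootstrap assumption \eqref{eq:bootstrap}, and the fact that a geodesic ball of Euclidean radius $\delta$ has the null second fundamental forms controlled by $\delta^{-1}$ there; but it is the step where one genuinely uses that $\delta$ is a lower bound for the null injectivity radius rather than just Lemma \ref{lem:bi-scalar_bounds}. Everything else is the routine chain-rule bookkeeping already anticipated by the $O(1)$ estimates of \hyperlink{section.5}{Sec. 5.1}.
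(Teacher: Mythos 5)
Your overall reduction---differentiate $k(p,x')=\mu_{\widehat g}^{1/2}(p)/\mu_{\widehat g}^{1/2}(x')$, use Jacobi's formula, and reduce everything to a bound on first derivatives of the metric components in the normal chart---is the right shape, and your first two steps match the paper. But the third step, which carries all the weight, does not work as written. The identity you propose, $\nabla_\alpha\widehat g_{\mu\nu}=-\eta_{ab}\left(\omega^a{}_{c\alpha}\Theta^c_\mu\Theta^b_\nu+\Theta^a_\mu\omega^b{}_{c\alpha}\Theta^c_\nu\right)$, is just the statement of metric compatibility: since $\omega$ is $\mathfrak{so}(1,3)$-valued, $\omega_{ab}=-\omega_{ba}$ and the right-hand side vanishes identically, consistent with $\nabla\widehat g\equiv 0$. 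It therefore gives no information about the quantity you actually need, namely the coordinate partial derivatives $\partial_\beta\widehat g_{\mu\nu}$ entering $\partial_\beta\mu_{\widehat g}/\mu_{\widehat g}=\tfrac12\widehat g^{\mu\nu}\partial_\beta\widehat g_{\mu\nu}$ (a trace of Christoffel symbols, which is exactly the non-tensorial object that covariant constancy of $\widehat g$ says nothing about). If you instead try to run the argument with partial derivatives, the torsion-free equation reads $\partial_\mu\Theta^a_\nu=-\omega^a{}_{b\mu}\Theta^b_\nu+\Gamma^\lambda_{\mu\nu}\Theta^a_\lambda$, and the Christoffels you are trying to estimate reappear on the right-hand side---the argument is circular.

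The paper escapes this by exploiting that on the mantle $C_p$ the generator $L$ is \emph{radial}: in normal coordinates $x'^\beta=rNL^\beta$, so $\nabla_Lk$ is (up to the factor $rN$) the radial derivative $x'^\beta\partial_\beta k$, and there is a special normal-chart identity, inherited from the Cronstr\"om condition $x^\mu\omega_\mu=0$ and the co-frame formula \eqref{eq:co-frame}, expressing $x^\beta\partial_\beta\widehat g_{\mu\nu}$ purely in terms of $\omega$ and $\Theta$ with no Christoffels. That identity shows the radial derivative of $\widehat g$ vanishes like $|x|^2/\delta^2$ near the vertex, which is what compensates the $1/r$ from converting $L$ to $x'^\beta\partial_\beta$ and yields the net $\delta^{-1}$. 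This mechanism is tied to the radial direction and cannot be reproduced for a generic coordinate direction, so your plan of bounding all of $\partial\widehat g$ by $|\omega||\Theta|^2$ has no available substitute. Finally, your worry about $\nabla_LL$, $\chi$, $\bar\chi$ is moot: $k(p,\cdot)$ is a scalar, so $\nabla_Lk=L^\mu\partial_\mu k$ and no derivatives of the null frame ever enter.
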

\begin{proof}
    Do only for $\nabla_L$ as the case of $\nabla_A$ is similar. Computation in the normal coordinate system yields $x'^\beta=rNL^\beta\leq \delta NL^\beta$ (cone is of size $\delta$). We know the lapse $N$ is controlled by the deformation tensor of $\mathbf{n}$, meaning
    \begin{align}
        |\nabla_{L}k(p,x')|\leq C \delta^{-1}|x'^\beta\partial_{\beta}k(p,x')|=C\delta^{-1}|x'^\beta \frac{1}{2k(p,x')}\frac{1}{2}\widehat{g}^{\mu\nu}\partial_{\beta}\widehat{g}_{\mu\nu}|\nonumber
    \end{align}
    where (\ref{eq:bi-scalar}) and the identity $\partial_{\beta}\mu_{\widehat{g}}(x)/{\mu_{\widehat{g}}(x)}=1/2\:\widehat{g}^{\mu\nu}\partial_{\beta}\widehat{g}_{\mu\nu}$ are used to obtain the last equality. By the previous lemma, we have $|\nabla_{L}k(p,x')|\leq  C\delta^{-1}|x'^\beta \partial_\beta \widehat{g}_{\mu\nu}|$. Now invoke the following equation satisfied in normal coordinates which is actually obtained from the connection and co-frame formulas (see \cite{integral})
    \begin{align}
        x^{\beta}\partial_{\beta}\widehat{g}_{\mu\nu}&=\eta_{ab}\Bigg\{\Theta^{b}_{\nu}(x)\left(\omega^{a}_{\mu f}(x)(x^{\gamma}\Theta^{f}_{\gamma}(0))-\int_{0}^{1}\omega^{a}_{\mu f}(\lambda x)(\lambda x^{\gamma}\Theta^{f}_{\gamma}(0))\:d\lambda\right) + (a\leftrightarrow b,\mu\leftrightarrow \nu)\Bigg\}\nonumber
    \end{align}
    The normal chart estimates (\ref{eq:omega_bound}) and (\ref{eq:theta_bound}) imply the result.
\end{proof}
\begin{lemma}
\label{lem:bi-scalar_laplacian_bound}
    The bootstrap assumption for the curvature (\ref{eq:bootstrap}) implies $\sup_{x'\in D_p}|\Box k(p,x')|\lesssim  \delta^{-2}$ and hence $||\Box k(p)||^{2}_{L^2(C_p)}\lesssim  \delta^{-1}$.
\end{lemma}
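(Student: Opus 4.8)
The plan is to reduce $\Box_{x'}k(p,x')$ to an algebraic expression in the coordinate derivatives of $\widehat{g}$ on the normal chart $\mathcal{N}_p$, using the explicit formula $k(p,x')=\mu^{1/2}_{\widehat{g}}(p)/\mu^{1/2}_{\widehat{g}}(x')$ from (\ref{eq:bi-scalar}), and then to bound each building block via the normal-chart estimates of Sec.~5.1 and Lemmas~\ref{lem:bi-scalar_bounds}--\ref{lem:bi-scalar_derivative_bounds}. Writing the scalar wave operator in coordinates, $\Box_{x'}k(p,\cdot)=\mu^{-1}_{\widehat{g}}\,\partial_\alpha\big(\mu_{\widehat{g}}\,\widehat{g}^{\alpha\beta}\partial_\beta k(p,\cdot)\big)$, and using $\partial_\beta k(p,x')=-\tfrac14\,k(p,x')\,\widehat{g}^{\mu\nu}\partial_\beta\widehat{g}_{\mu\nu}(x')$ (which follows from $\partial_\beta\mu_{\widehat{g}}/\mu_{\widehat{g}}=\tfrac12\widehat{g}^{\mu\nu}\partial_\beta\widehat{g}_{\mu\nu}$, the same identity used in Lemma~\ref{lem:bi-scalar_derivative_bounds}), one gets schematically $\Box_{x'}k(p,x')\sim k\,\big(\widehat{g}^{-1}\widehat{g}^{-1}\,\partial^2\widehat{g}+\widehat{g}^{-1}\widehat{g}^{-1}\widehat{g}^{-1}\,(\partial\widehat{g})^2\big)$, all indices contracted with $\widehat{g}^{\pm1}$. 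By Sec.~5.1 the metric and its inverse are $O(1)$ on $\mathcal{N}_p$, and by Lemma~\ref{lem:bi-scalar_bounds} $k(p,\cdot)$ is $O(1)$, so everything comes down to controlling $|\partial\widehat{g}|$ and $|\partial^2\widehat{g}|$ in normal coordinates.

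For the first derivatives I would invoke the identity for $x^\beta\partial_\beta\widehat{g}_{\mu\nu}$ established in the proof of Lemma~\ref{lem:bi-scalar_derivative_bounds}, which expresses the radial derivative of $\widehat{g}$ through the connection and co-frame; combined with the bootstrap-driven bounds $|\omega(x')|\lesssim|x'|\delta^{-2}$ and $|\Theta|\sim O(1)$ from (\ref{eq:omega_bound})--(\ref{eq:theta_bound}), this yields $|\partial\widehat{g}(x')|\lesssim|x'|\delta^{-2}\le\delta^{-1}$ throughout $D_p\subset\mathcal{N}_p$. Together with the second-derivative bound below and the fact that $|x'|\le\delta$ on $D_p$, this gives $|\Box_{x'}k(p,x')|\lesssim\delta^{-2}+(\delta^{-1})^2\lesssim\delta^{-2}$, the first assertion of the lemma.

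The second-derivative estimate $|\partial^2\widehat{g}(x')|\lesssim\delta^{-2}$ on $\mathcal{N}_p$ is the crux and I expect it to be the main obstacle. At $p$ itself it is immediate from the normal-coordinate expansion $\widehat{g}_{\mu\nu}=\eta_{\mu\nu}-\tfrac13\mathbf{Rm}_{\mu\alpha\nu\beta}(p)x^\alpha x^\beta+\cdots$ and the bootstrap (\ref{eq:bootstrap}); to propagate it through $\mathcal{N}_p$ one differentiates the Cartan identities (\ref{eq:connection}) and (\ref{eq:co-frame}) once more, which amounts to bounding $|\partial\omega(x')|\lesssim\delta^{-2}$ (since $\partial^2\widehat{g}\sim\partial\omega+\omega^2$ up to $O(1)$ factors). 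Naively differentiating (\ref{eq:connection}) pulls out a $\nabla\mathbf{Rm}$ term, which is \emph{not} controlled by the bootstrap; here the vacuum structure must be used: via the second Bianchi identity one re-expresses the dangerous piece as a radial covariant derivative of $\mathbf{Rm}$, which after integration by parts along the radial geodesic reduces to $\mathbf{Rm}$ itself plus quadratic curvature terms (using the vacuum wave equation (\ref{eq:gaugecovariant})) of size $\lesssim|x'|^2\sup|\mathbf{Rm}|^2\lesssim\delta^2\cdot\delta^{-4}=\delta^{-2}$ — this is essentially Moncrief's observation in \cite{integral}. An alternative route that sidesteps $\partial^2\widehat{g}$ entirely is to use the transport equation satisfied by the Hadamard amplitude $k(p,\cdot)$ (Friedlander \cite{Freidlander}, Ch.~4), which trades $\Box_{x'}k$ for first-order data of $k$ plus a curvature term, again $O(\delta^{-2})$ by the estimates above.

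Finally, for the $L^2$ bound, $\lVert\Box k(p)\rVert^2_{L^2(C_p)}=\int_{C_p}|\Box_{x'}k(p,x')|^2\,\mu_{\widehat{g}}\big|_{C_p}\lesssim\delta^{-4}\,\mathrm{Vol}_{\widehat{g}}(C_p)$. Since $C_p$ is the mantle of a past light cone of Euclidean length $\delta$ and $\widehat{g}$ is uniformly comparable to the flat metric on $\mathcal{N}_p$ (Sec.~5.1), passing to the null-spherical coordinates $(r,\theta,\phi)$ of Sec.~5 gives $\mu_{\widehat{g}}\big|_{C_p}\sim r^2\,dr\,d\theta\,d\phi$ up to $O(1)$ factors, so $\mathrm{Vol}_{\widehat{g}}(C_p)\lesssim\int_0^\delta r^2\,dr\lesssim\delta^3$, whence $\lVert\Box k(p)\rVert^2_{L^2(C_p)}\lesssim\delta^{-4}\cdot\delta^3=\delta^{-1}$, as claimed.
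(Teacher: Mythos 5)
Your reduction of $\Box_{x'}k(p,x')$ to coordinate derivatives of $\widehat{g}$ via $k(p,x')=\mu_{\widehat{g}}^{1/2}(p)/\mu_{\widehat{g}}^{1/2}(x')$ and the identity $\partial_\beta\mu_{\widehat{g}}/\mu_{\widehat{g}}=\tfrac12\widehat{g}^{\mu\nu}\partial_\beta\widehat{g}_{\mu\nu}$ is exactly the paper's starting point, and your final volume count $\mathrm{Vol}_{\widehat{g}}(C_p)\lesssim\delta^3$ for the $L^2$ statement is the intended (implicit) computation. The divergence, and the genuine gap, is in how you handle the second-derivative term. You correctly identify that a uniform bound $|\partial^2\widehat{g}|\lesssim\delta^{-2}$ on $\mathcal{N}_p$ is the crux, but the bootstrap only controls $\mathbf{Rm}$ pointwise, and a pointwise curvature bound does \emph{not} control all second coordinate derivatives of the metric (curvature is a particular antisymmetrized combination; even in normal or harmonic coordinates one gets at best $C^{1,\alpha}$ regularity of $\widehat{g}$ from an $L^\infty$ curvature bound). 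Your proposed repair — differentiate (\ref{eq:connection}), meet a $\nabla\mathbf{Rm}$ term, and eliminate it by Bianchi plus integration by parts along the radial geodesic — only works for the \emph{radial} derivative $x^\sigma\partial_\sigma\omega_\mu$, where $\lambda x^\sigma\partial_\sigma\mathbf{Rm}(\lambda x)=\tfrac{d}{d\lambda}\mathbf{Rm}(\lambda x)$ and the $\lambda$-integration by parts applies. For a transverse direction $\partial_\alpha\omega_\mu$ the integrand $\partial_\alpha\mathbf{Rm}(\lambda x)$ is not a total $\lambda$-derivative and cannot be integrated away, so the claimed bound $|\partial\omega|\lesssim\delta^{-2}$, hence $|\partial^2\widehat{g}|\lesssim\delta^{-2}$, does not follow as written. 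Your alternative route via the Hadamard transport equations is the right instinct but is only named, not executed; note also that the first transport equation determines $k$ itself, and it is the recursion for the \emph{next} Hadamard coefficient that would express $\Box k$.

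The paper avoids this entirely: it never asserts a uniform $\partial^2\widehat{g}$ bound. It evaluates $\Box k(p,x')$ at the coincidence point $x'=p$, where the first-derivative terms vanish in normal coordinates and the remaining term $-\tfrac14\widehat{g}^{\mu\nu}\widehat{g}^{\alpha\beta}\partial_\alpha\partial_\beta\widehat{g}_{\mu\nu}$ reduces, via the normal-coordinate symmetries $\partial_\beta\partial_\nu\widehat{g}_{\mu\alpha}(p)=-\tfrac13(\mathbf{Rm}_{\mu\nu\alpha\beta}+\mathbf{Rm}_{\alpha\nu\mu\beta})(p)$, to $\tfrac16\mathbf{Scal}_{\widehat{g}}(p)=0$ by the vacuum equation; the bound away from the vertex is then obtained by a scaling argument. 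This vanishing of the coincidence limit through the vacuum equation is the actual content of the lemma and is absent from your argument — you never use $\mathbf{Ric}_{\widehat{g}}=0$. If you want to complete your route rigorously you would need either the Hadamard recursion carried out in full, or to restructure so that only the trace $\widehat{g}^{\alpha\beta}\partial_\alpha\partial_\beta\widehat{g}_{\mu\nu}$ (which the vacuum equation and curvature bound do control near the vertex) ever appears, rather than the full Hessian of $\widehat{g}$.
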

\begin{proof}
    The covariant Laplacian of $k(p,x')$ is explicitly given by
    \begin{equation}\nonumber
    \begin{split}
        \Box{k(p,x')}&=\frac{1}{\mu_{\widehat{g}}(x')}\partial_{\alpha}(\mu_{\widehat{g}}(x')\widehat{g}^{\alpha\beta}\partial_{\beta}k(p,x'))
        \\
        &=-\frac{{\mu^{1/2}_{\widehat{g}}(p)}}{4\mu^{1/2}_{\widehat{g}}(x')}\widehat{g}^{\alpha\beta}\widehat{g}^{\mu\nu}\partial_{\alpha}\partial_{\beta}\widehat{g}_{\mu\nu}-\frac{{\mu^{1/2}_{\widehat{g}}(p)}}{16\mu^{1/2}_{\widehat{g}}(x')}\widehat{g}^{\alpha\beta}\widehat{g}^{\mu\nu}\widehat{g}^{ab}\partial_{\alpha}\widehat{g}_{ab}\partial_{\beta}\widehat{g}_{\mu\nu}
        \\
        &\quad\,-\frac{{\mu^{1/2}_{\widehat{g}}(p)}}{4\mu^{1/2}_{\widehat{g}}(x')}\partial_{\alpha}\widehat{g}^{\alpha\beta}\widehat{g}^{\mu\nu}\partial_{\beta}\widehat{g}_{\mu\nu}-\frac{{\mu^{1/2}_{\widehat{g}}(p)}}{4\mu^{1/2}_{\widehat{g}}(x')}\widehat{g}^{\alpha\beta}\partial_{\alpha}\widehat{g}^{\mu\nu}\partial_{\beta}\widehat{g}_{\mu\nu}.
    \end{split}
\end{equation}
The most dangerous point is the vertex i.e. $p$, where the first derivatives of the metric vanish in the normal coordinate system
\begin{equation}\nonumber
    \Box{k(p,p)}=-\frac{1}{4}\widehat{g}^{\mu\nu}\widehat{g}^{\alpha\beta}\partial_{\alpha}\partial_{\beta}\widehat{g}_{\mu\nu}.
\end{equation}
Another property of normal coordinates is that the second derivatives of the metric capture the spacetime curvature, consequently the following index symmetries hold at the origin
\begin{equation}\nonumber
    \begin{gathered}
        \partial_{c}\partial_{d}\widehat{g}_{ab}=\partial_{a}\partial_{b}\widehat{g}_{cd},
        \\
        \partial_{c}\partial_{d}\widehat{g}_{ab}+\partial_{d}\partial_{b}\widehat{g}_{ac}+\partial_{b}\partial_{c}\widehat{g}_{ad}=0.
    \end{gathered}
\end{equation}
The second derivative at the origin then satisfies
\begin{equation}\nonumber
    \partial_{\beta}\partial_{\nu}\widehat{g}_{\mu\alpha}(p)=-\frac{1}{3}(\mathbf{Rm}_{\mu\nu\alpha\beta}(p)+\mathbf{Rm}_{\alpha\nu\mu\beta}(p)),
\end{equation}
Therefore
\begin{equation}
    \Box{k(p,p)}=\frac{1}{6}\mathbf{Scal}_{\widehat{g}}(p)=0.
\end{equation}
Here $\mathbf{Scal}_{\widehat{g}}$ is the scalar curvature and it vanishes as a result of the vacuum gravity equation. The claim follows since the origin is the only possible blow up point and scaling leads to $|\Box k|\lesssim \delta^{-2}$. 
\end{proof}

\begin{proposition}
\label{prop:l_infty_bound}
    Let $p\in\Sigma\times[0,T^*]$ and $t_p$ correspond to its global time coordinate, then the gauge-invariant $L^\infty$ norm of $\mathbf{Rm}_{\widehat{g}}$ over the slice $\Sigma_{t_p}$ is bounded above by the 0th, 1st, and 2nd order energies at $t_p-\delta$. More precisely,
    \begin{align}
        ||\mathbf{Rm}_{\widehat{g}}(t_p)||^{2}_{L^{\infty}(\Sigma_{t_p})}\leq  C(^{\mathbf{n}}\pi,t_p,E^{0}(0))\Big( E^{2}(t_p-\delta)+E^{1}(t_p-\delta)+\delta^{-1}E^{0}(t_p-\delta) \Big)
    \end{align}
\end{proposition}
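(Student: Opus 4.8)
The plan is to evaluate Moncrief's integral equation (\ref{eq:integral}) at the vertex $x=p$ itself and to estimate the resulting right-hand side term by term. Evaluating at $x=p$ is legitimate because the representation is valid throughout $\mathcal{G}_p$, and it is convenient because the parallel‑propagated frame $\{h_a\}$ is $E$‑orthonormal \emph{at} $p$ (since $h_0|_p=\mathbf{n}|_p$ and $E=\widehat{g}+2\mathbf{n}\otimes\mathbf{n}$), so that $|\mathbf{Rm}_{\widehat{g}}(p)|_E$ equals the Euclidean norm of the frame components $\mathbf{Rm}^{a}{}_{bmn}(p)$. Since neither $\delta$ nor $C({}^{\mathbf n}\pi,t_p,E^0(0))$ depends on which point of $\Sigma_{t_p}$ we pick, a bound at an arbitrary $p$ immediately upgrades to a bound on $\sup_{\vec x\in\Sigma_{t_p}}|\mathbf{Rm}_{\widehat{g}}(t_p,\vec x)|_E$ by choosing $p$ to be the (attained, as $\Sigma_{t_p}$ is closed) maximizer. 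The problem thus reduces to estimating the mantle integral over $C_p$ and the $2$-sphere integral over $\sigma_p(t_p-\delta)$.

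For the mantle integral I would first insert the Leray-form identity (\ref{eq:leray}), $\mu_{\Gamma(p,x')}|_{C_p}=\tfrac{1}{2r}\mu_{\widehat{g}}|_{C_p}$, which exhibits the only genuine danger: a $1/r$ weight at the vertex. The organizing principle is that this weight is always compensated. In the terms linear in the connection, the Cronström formula (\ref{eq:connection}) together with the bootstrap assumption (\ref{eq:bootstrap}) gives $|\omega(x')|_E\lesssim |x'|_E/\delta^{2}\lesssim r/\delta^{2}$ throughout $\mathcal{N}_p$, so that $|\omega|\,\mu_\Gamma\lesssim \delta^{-2}\mu_{\widehat{g}}|_{C_p}$ is regular; expanding $D^\sigma\big(k\,\mathbf{Rm}_{\widehat{g}}\big)$ and invoking Lemma \ref{lem:bi-scalar_bounds} ($|k|\le 1$) and Lemma \ref{lem:bi-scalar_derivative_bounds} ($|\nabla_{L,A}k|\lesssim\delta^{-1}$) reduces these contributions to $\delta^{-2}\int_{C_p}\big(\delta^{-1}|\mathbf{Rm}_{\widehat{g}}|+|D\mathbf{Rm}_{\widehat{g}}|\big)\,\mu_{\widehat{g}}|_{C_p}$, which after Cauchy--Schwarz and Proposition \ref{prop:flux_estimates} is controlled by $(\mathcal{F}^0_{C_p})^{1/2}$ and $(\mathcal{F}^1_{C_p})^{1/2}$, hence by $E^0(t_p-\delta)$ and $E^1(t_p-\delta)$ (the lower-order $L^\infty$-in-time remainder in Proposition \ref{prop:flux_estimates} being reabsorbed once this estimate is fed into the later iteration). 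The $\nabla^\sigma k\cdot\omega\cdot\mathbf{Rm}_{\widehat{g}}$ terms use the same two lemmas. For the $\mathbf{Rm}_{\widehat{g}}\,\Box k$ term I would invoke Lemma \ref{lem:bi-scalar_laplacian_bound}, whose content is precisely that the would-be $\delta^{-2}$ blow-up of $\Box k$ at the vertex is killed by vacuum ($\mathbf{Scal}_{\widehat{g}}(p)=0$), leaving $\|\Box k(p)\|_{L^2(C_p)}^2\lesssim\delta^{-1}$, so that Cauchy--Schwarz against $\|\mathbf{Rm}_{\widehat{g}}\|_{L^2(C_p)}\lesssim(\mathcal{F}^0_{C_p})^{1/2}$ produces exactly the $\delta^{-1}E^0(t_p-\delta)$ term. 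The quadratic terms $k\,\mathbf{Rm}_{\widehat{g}}\!\cdot\!\mathbf{Rm}_{\widehat{g}}$ (the Ricci terms drop by vacuum) are where the \emph{null structure} is essential: the combination appearing in (\ref{eq:integral}) carries no Riccati-type $\underline{\alpha}\!\cdot\!\underline{\alpha}$ self-interaction, so against the weight $\tfrac{1}{r}\mu_{\widehat{g}}|_{C_p}\sim r\,dr\,d\Omega$, which \emph{vanishes} at the vertex, the quadratic flux is integrable and again dominated by $\mathcal{F}^0_{C_p}$ and $\mathcal{F}^1_{C_p}$.

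For the $\sigma_p(t_p-\delta)$ boundary integral I would use $\|k(p)\|_{L^2(\sigma_p(t_p-\delta))}\le\delta$ from Lemma \ref{lem:bi-scalar_bounds}, together with the fact that $\Phi$ behaves as the trace of a null second fundamental form and $|\omega|\lesssim\delta^{-1}$ on $\sigma_p$. The only delicate term is $k\,\xi^\sigma D_\sigma\mathbf{Rm}_{\widehat{g}}$: here I would bound $\|D\mathbf{Rm}_{\widehat{g}}\|_{L^2(\sigma_p(t_p-\delta))}$ by the trace inequality for the hypersurface $\Sigma_{t_p-\delta}$, namely $\lesssim\|D\mathbf{Rm}_{\widehat{g}}\|_{H^1(\Sigma_{t_p-\delta})}\lesssim (E^1(t_p-\delta))^{1/2}+(E^2(t_p-\delta))^{1/2}$ --- the unique place where the second-order energy $E^2$ enters --- while $\|k\|_{L^2(\sigma_p)}\le\delta$ absorbs the smallness of the sphere. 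Summing all contributions and squaring yields the claimed inequality, with $C({}^{\mathbf n}\pi,t_p,E^0(0))$ collecting the Grönwall constants supplied by Proposition \ref{prop:flux_estimates}.

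The main obstacle, and the technical heart of the argument, is the near-vertex behaviour of the $C_p$ integral: taken naively, the $1/r$ Leray weight combined with the quadratic self-interaction of the curvature would force focusing of curvature energy at the vertex, and one must deploy three independent structural facts to defeat it --- the linear vanishing $|\omega|\lesssim r/\delta^{2}$ of the connection (the quasi-local Cronström/Killing gauge of Sections 5.1 and 6.1) for the connection-linear terms, the absence of a Riccati term in Moncrief's reduced equation for the curvature-quadratic terms, and $\mathbf{Scal}_{\widehat{g}}(p)=0$ for the $\Box k$ term. Arranging the powers of $\delta^{-1}$ so that they land on $E^0$ alone and not on $E^1$ or $E^2$ is exactly what these compensations buy, and tracking them carefully through the Cauchy--Schwarz steps and the flux estimates of Proposition \ref{prop:flux_estimates} is the bookkeeping the full proof must carry out.
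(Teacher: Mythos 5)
Your overall architecture matches the paper's proof (evaluate (\ref{eq:integral}) at the vertex, use the Leray identity (\ref{eq:leray}), exploit the Cronstr\"om vanishing of $\omega$, the null structure of the quadratic terms, Lemma \ref{lem:bi-scalar_laplacian_bound} for $\Box k$, and a trace inequality on $\sigma_p(t_p-\delta)$ as the sole entry point of $E^2$). But there is a genuine gap in how you estimate the mantle terms, rooted in one fact you repeatedly overlook: the fluxes $\mathcal{F}^{0}_{C_p}$ and $\mathcal{F}^{1}_{C_p}$ do \emph{not} control the full $L^2(C_p)$ norms of $\mathbf{Rm}_{\widehat{g}}$ and $D\mathbf{Rm}_{\widehat{g}}$. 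The flux densities $\mathbf{Q}(h_0,h_0,h_0,L)$ and $\mathfrak{T}^1(h_0,L)$ omit precisely the components $\underline{\alpha}$ and $D_{\bar L}\mathbf{Rm}_{\widehat{g}}$ (energy flowing \emph{along} the cone rather than across it), as the paper stresses. Consequently your treatment of the connection-linear terms --- replacing $|\omega|$ by its absolute bound $\lesssim r/\delta^2\lesssim\delta^{-1}$ and then Cauchy--Schwarzing the full gradient against the flux --- fails twice: $\|D\mathbf{Rm}_{\widehat{g}}\|_{L^2(C_p)}$ is not bounded by $(\mathcal{F}^1_{C_p})^{1/2}$, and even if it were, the power count gives $\delta^{-2}\cdot\mathrm{Vol}(C_p)^{1/2}\cdot(\mathcal{F}^1)^{1/2}\sim\delta^{-1/2}(\mathcal{F}^1)^{1/2}$, i.e.\ $\delta^{-1}E^1(t_p-\delta)$ after squaring, which would destroy the iteration of Proposition \ref{prop:h^2_bound} (that argument needs $E^1$ to enter with a factor $\delta$, not $O(1)$, after the time integration). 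The paper's $I_1$, $I_3$ estimates avoid both problems by \emph{not} discarding the structure of (\ref{eq:connection}): writing $\omega_{\sigma}(x')=-\int_0^1\lambda x'^{\nu}\mathbf{Rm}_{\sigma\nu}(\lambda x')\,d\lambda$ with $x'^{\nu}=rNL^{\nu}$, the factor $r$ cancels the Leray weight $1/(2r)$ exactly (no leftover $\delta^{-1}$), and the contraction $\mathcal{R}_{\sigma L}D^{\sigma}$ expanded via (\ref{eq:null_metric}) contains no $D_{\bar L}$ term because $\mathbf{Rm}_{LL}=0$; what survives is a pairing of flux-controlled curvature components with flux-controlled derivatives, giving $|I_1|^2\lesssim\mathcal{F}^0_{C_p}\mathcal{F}^1_{C_p}$ cleanly.

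The same oversight invalidates your claim that $\|\mathbf{Rm}_{\widehat{g}}\|_{L^2(C_p)}\lesssim(\mathcal{F}^0_{C_p})^{1/2}$, which you use both for the $\Box k$ term and for the curvature-quadratic terms. In the paper, after the null structure cancels the $\underline{\alpha}\cdot\underline{\alpha}$ self-interaction, one factor of curvature in each quadratic term (and the lone factor in the $\Box k$ term) must still be extracted as $\sup_{\vec x}|\mathbf{Rm}_{\widehat{g}}(t,\vec x)|$, leaving $\int_{t_p-\delta}^{t_p}\|\mathbf{Rm}_{\widehat{g}}(t)\|^2_{L^\infty(\Sigma_t)}dt$ on the right-hand side, which is then eliminated by a final Gr\"onwall inequality in $t_p$ over the interval of length $\delta$ (the coefficient $1+\delta^{-1}$ being harmless since it is multiplied by $\delta$ in the exponent). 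You instead defer this remainder to ``the later iteration,'' but the proposition as stated has no such term, so the Gr\"onwall closure must happen inside this proof. Relatedly, the $\delta^{-1}E^0(t_p-\delta)$ in the statement comes from the $\delta^{-1/2}\|\mathbf{Rm}_{\widehat{g}}\|_{L^2(B(t_p-\delta))}$ term of the sphere trace inequality, not from the $\Box k$ term as you assert; $I_5$ contributes only to the Gr\"onwall-absorbed remainder.
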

\begin{proof}
    The goal is to estimate (\ref{eq:integral}) evaluated at $p$ and this will suffice for the $L^\infty(\Sigma_{t_p})$ norm due to the least upper bound property. Notice that each potentially dangerous term in the $C_p$ integral belongs to one out of the following classes
    \begin{align}
        &I_{1}(p):=\int_{C_{p}}k(p,x')\omega^{d}_{e\sigma}(x')D^{\sigma}\mathbf{Rm}^{a}~_{bdf}(x')\:\mu_{\Gamma}, \nonumber
        \\
        &I_{2}(p):=\int_{C_{p}}k(p,x')\Big(\mathbf{Rm}_{fd}(x')\mathbf{Rm}_{e}~^{d}(x')-\mathbf{Rm}_{ed}(x')\mathbf{Rm}_{f}~^{d}(x')\Big)\:\mu_{\Gamma}, \nonumber
        \\
        &I_{3}(p):=\int_{C_{p}}\nabla^{\sigma}k(p,x^{'})\omega^{d}_{e\sigma}(x')\mathbf{Rm}^{a}~_{bdf}(x')\:\mu_{\Gamma},
        \\
        &I_{4}(p):=\int_{C_{p}}k(p,x^{'})\mathbf{Rm}^{a}~_{bcd}(x')\mathbf{Rm}_{ef}~^{cd}(x')\:\mu_{\Gamma}, \nonumber
        \\
        &I_{5}(p):=\int_{C_{p}}\Box{k(p,x')}\mathbf{Rm}^{a}~_{bef}(x')\:\mu_{\Gamma}. \nonumber
    \end{align}
    Furthermore, we make use of (\ref{eq:leray}) to replace the Leray forms with the canonical volume form restricted to $C_p$. Begin with $I_1(p)$. Invoke the connection formula (\ref{eq:connection}) and the fact that the normal coordinates obey $x'^\beta=rNL^\beta\leq  \delta C(^{\mathbf{n}}\pi)L^\beta$ in the cone,
    \begin{align}
        |I_1(p)|&\leq  C(^{\mathbf{n}}\pi)||k(p)||_{L^{\infty}(C_p)}\bigg| \int_{C_p}\mathcal{R}^d~_{e\sigma L}(x')D^{\sigma}\mathbf{Rm}^a~_{bdf}(x')\:\mu_{\widehat{g}}|_{C_p} \bigg| \nonumber\\
        &\leq  C(^{\mathbf{n}}\pi)\bigg| \int_{C_p}\Big(\mathcal{R}^d~_{e\Bar{L}L}(x')D_{L}\mathbf{Rm}^a~_{bdf}(x')+\sum_{A=1,2}\mathcal{R}^d~_{eAL}(x')D_{A}\mathbf{Rm}^a~_{bdf}(x')\Big)\:\mu_{\widehat{g}}|_{C_p} \bigg|\nonumber
    \end{align}
    where $\mathcal{R}^{d}~_{e\sigma L}(x'):=\int_{0}^{1}\lambda \mathbf{Rm}^{d}~_{e\sigma L}(\lambda x')\:d\lambda$ and the metric in the null basis (\ref{eq:null_metric}) has been used to obtain the last inequality as well as Lemma \ref{lem:bi-scalar_bounds}. Proceed by switching the order of integration and applying Cauchy–Schwarz,
    \begin{align}
        |I_1(p)|&\leq  C(^{\mathbf{n}}\pi)\int_{0}^{1}\lambda \Big( ||\mathbf{Rm}^d~_{e\Bar{L}L}(\lambda)||_{L^2(C_p)}||D_{L}\mathbf{Rm}^a~_{bdf}||_{L^2(C_p)} \nonumber\\
        &\qquad\qquad\qquad+\sum_{A=1,2}||\mathbf{Rm}^d~_{eAL}(\lambda)||_{L^2(C_p)}||D_{A}\mathbf{Rm}^a~_{bdf}||_{L^2(C_p)} \Big) \:d\lambda \nonumber\\
        &=C(^{\mathbf{n}}\pi)\int_{0}^{1}\lambda\cdot\lambda^{-3/2}\:d\lambda\,\Big(||\mathbf{Rm}^d~_{e\Bar{L}L}||_{L^2(C_p)}||D_{L}\mathbf{Rm}^a~_{bdf}||_{L^2(C_p)} \nonumber\\
        &\qquad\qquad\qquad\qquad\qquad\quad+\sum_{A=1,2}||\mathbf{Rm}^d~_{eAL}||_{L^2(C_p)}||D_{A}\mathbf{Rm}^a~_{bdf}||_{L^2(C_p)}\Big)\nonumber
    \end{align}
    Notice that the factor of $\lambda^{-3/2}$ is due to scaling and allows the $\lambda$ integral to be bounded. For spatial dimensions $n>3$, we would instead get $\lambda^{-n/2}$ and the $\lambda$ integral would diverge. Moreover, the components of the curvature above are precisely controlled by the 0th and 1st order fluxes. Therefore, our final estimate for $I_1(p)$ is
    \begin{align}
        |I_1(p)|^2&\leq  C(^{\mathbf{n}}\pi)\mathcal{F}^{0}_{C_p}\mathcal{F}^{1}_{C_p} \nonumber\\
        &\leq  C(^{\mathbf{n}}\pi,t_p,E^{0}(0))\Big( e^{C(^{\mathbf{n}}\pi)\delta}E^{1}(t_p-\delta)+\int_{t_p-\delta}^{t_p}||\mathbf{Rm}_{\widehat{g}}(t)||^2_{L^\infty(\Sigma_t)}dt \Big)
    \end{align}
    Moving onto $I_2(p)$, the integrand contains the antisymmetric combination $\mathbf{Rm}_{fd}\mathbf{Rm}_{e}~^{d}-\mathbf{Rm}_{ed}\mathbf{Rm}_{f}~^{d}$ which can be expanded using the metric in the null basis. The first term reads as
    \begin{equation}
        \begin{split}
            \mathbf{Rm}_{fd}\mathbf{Rm}_{e\mu}\widehat{g}^{\mu d}&=-\frac{1}{2}\Big(\mathbf{Rm}_{f\bar{L}}\mathbf{Rm}_{eL}+\mathbf{Rm}_{fL}\mathbf{Rm}_{e\bar{L}}\Big)+\sum_{A=1,2}\mathbf{Rm}_{fA}\mathbf{Rm}_{eA}
            \\
            &\sim\sum_{A=1,2}|\mathbf{Rm}_{\widehat{g}}(\cdot,\cdot,\bar{L},e_{A})|^2+(\mathrm{terms\;controlled\;by\;} \mathcal{F}^{0}_{C_p})(\mathrm{terms\;\textit{not}\;controlled\;by\;} \mathcal{F}^{0}_{C_p}) \nonumber
        \end{split}
    \end{equation}
    The expansion of $|\mathbf{Rm}_{\widehat{g}}(\cdot,\cdot,\bar{L},e_{A})|^2$ includes terms such as $|\mathbf{Rm}_{\widehat{g}}(\bar{L},e_{B},\bar{L},e_{A})|^{2}$ which do not appear in the flux density $\mathbf{Q}(h_{0},h_{0},h_{0},L)$. Fortunately, the $(f\leftrightarrow e)$ antisymmetry of the $I_2$ integrand results in a point-wise cancellation of this problematic term. This is a manifestation of the \textit{null structure}, i.e. the non-linearity responsible for the concentration of energy along the null cone (and
    a potential blow up at the vertex) turns out to be weak---Ricatti type self-interaction is absent. In the end, every time $\mathbf{Rm}_{\bar{L}A}$ appears it will be multiplied by a curvature component that is controlled by the 0th order flux $\mathcal{F}^{0}_{C_p}$  ($\mathbf{Rm}_{\bar{L}A}$ is then taken out as sup norm), therefore $I_2$ is estimated as follows
    \begin{align}
        |I_2(p)|^2&\leq ||k(p)||^2_{L^{\infty}(C_p)}\delta^{-2}\mathcal{F}^{0}_{C_p}\int_{C_p}|\mathbf{Rm}_{\widehat{g}}(x')|^2\:\mu_{\widehat{g}}|_{C_p} \nonumber\\
        &\leq  C(^{\mathbf{n}}\pi,t_p,E^{0}(0))\int_{t_p-\delta}^{t_p}||\mathbf{Rm}_{\widehat{g}}(t)||^2_{L^\infty(\Sigma_t)}dt
    \end{align}
    Apply the connection formula and Lemma \ref{lem:bi-scalar_derivative_bounds} to estimate $I_3(p)$,
    \begin{align}
        |I_3(p)|^2&\leq  C(^{\mathbf{n}}\pi)\bigg| \int_{C_p}\nabla^\sigma k(p,x')\mathcal{R}^d~_{e\sigma L}(x')\mathbf{Rm}^a~_{bdf}(x') \:\mu_{\widehat{g}}|_{C_p} \bigg|^2\nonumber \\
        &\leq  C(^{\mathbf{n}}\pi)\bigg(\int_{0}^{1}\lambda\cdot\lambda^{-3/2}\:d\lambda\bigg)^2\,\Big( ||\mathbf{Rm}^d~_{e\Bar{L}L}||^2_{L^2(C_p)}||\nabla_{L}k(p)\mathbf{Rm}^a~_{bdf}||^2_{L^2(C_p)} \nonumber\\
        &\qquad\qquad\qquad\qquad\qquad\quad +\sum_{A=1,2}||\mathbf{Rm}^d~_{eAL}||^2_{L^2(C_p)}||\nabla_{A}k(p)\mathbf{Rm}^a~_{bdf}||^2_{L^2(C_p)}\Big) \\
        &\leq  C(^{\mathbf{n}}\pi)\mathcal{F}^{0}_{C_p}||\nabla_{L,A}k(p)||^2_{L^\infty(C_p)}\delta^2\int_{t_p-\delta}^{t_p}||\mathbf{Rm}_{\widehat{g}}(t)||^2_{L^\infty(\Sigma_t)}dt \nonumber \\
        &\leq  C(^{\mathbf{n}}\pi,t_p,E^{0}(0))\int_{t_p-\delta}^{t_p}||\mathbf{Rm}_{\widehat{g}}(t)||^2_{L^\infty(\Sigma_t)}dt \nonumber
    \end{align}
    Continuing with $I_4(p)$, one can explicitly compute the quadratic curvature factor appearing on the integrand
    \begin{equation}
        \begin{split}
            \mathbf{Rm}^{a}~_{bcd}\mathbf{Rm}_{ef}~^{cd}&=\sum_{A,B=1,2}\mathbf{Rm}^{a}~_{bL {B}}\mathbf{Rm}_{ef\bar{L} {B}}+\mathbf{Rm}^{a}~_{b\bar{L}L}\mathbf{Rm}_{efL\bar{L}}+\mathbf{Rm}^{a}~_{b\bar{L} {B}}\mathbf{Rm}_{efL {B}}
            \\
            &\quad+\mathbf{Rm}^{a}~_{b {A} {B}}\mathbf{Rm}_{ef {A} {B}}+\mathbf{Rm}^{a}~_{b {A}L}\mathbf{Rm}_{ef {A}\bar{L}}+\mathbf{Rm}^{a}~_{b {A}\bar{L}}\mathbf{Rm}_{ef {A}L}+\mathbf{Rm}^{a}~_{bL\bar{L}}\mathbf{Rm}_{ef\bar{L}L} \nonumber
        \end{split}
    \end{equation}
    As in the case of $I_2$, once each term is expanded it will include at least one curvature component controlled by the light cone mantle flux and the other factor needs to be taken out as a sup norm. This means $I_{4}$ shares the same bound as $I_{2}$
    \begin{align}
        |I_4(p)|^2\leq  C(^{\mathbf{n}}\pi,t_p,E^{0}(0))\int_{t_p-\delta}^{t_p}||\mathbf{Rm}_{\widehat{g}}(t)||^2_{L^\infty(\Sigma_t)}dt
    \end{align}
    $I_5(p)$ is bounded by application of Lemma \ref{lem:bi-scalar_laplacian_bound} and Cauchy–Schwarz inequality
    \begin{align}
        |I_5(p)|^2&\leq \delta^{-2}||\Box k(p)||^2_{L^2(C_p)}\delta^{2}\int_{t_p-\delta}^{t_p}||\mathbf{Rm}_{\widehat{g}}(t)||^2_{L^\infty(\Sigma_t)}dt \nonumber\\
        &\leq  \delta^{-1}\int_{t_p-\delta}^{t_p}||\mathbf{Rm}_{\widehat{g}}(t)||^2_{L^\infty(\Sigma_t)}dt
    \end{align}
    \indent The integral over the topological 2-sphere $\sigma_p(t_p-\delta)=:\sigma$ located at the bottom of the cone is controlled by means of trace inequality. Specifically, we have
    \begin{align}
        \bigg|\frac{1}{2\pi}\int_{\sigma_p(t_p-\delta)}(\:\cdots)\;\bigg|&\leq  ||\xi||_{L^\infty(\sigma)}||k(p)||_{L^2(\sigma)}||D\mathbf{Rm}_{\widehat{g}}||_{L^2(\sigma)} \nonumber\\
        &\quad+||k(p)||_{L^\infty(\sigma)}||\Phi||_{L^2(\sigma)}||\mathbf{Rm}_{\widehat{g}}||_{L^2(\sigma)} \nonumber\\
        &\quad+||k(p)||_{L^\infty(\sigma)}||\xi||_{L^\infty(\sigma)}||\omega||_{L^2(\sigma)}||\mathbf{Rm}_{\widehat{g}}||_{L^2(\sigma)}\nonumber
    \end{align}
    Note $|\xi(x')|\sim|\Bar{L}(x')|\sim O(1)$ in the normal neighborhood and has dimension $[length]^{-1}$, whereas $\Phi(x')$ behaves like the trace of the second fundamental form associated to the outgoing null hypersurface emanating from $\sigma_p(t_p-\delta)$ thus $|\Phi(x')|\leq  C\delta^{-1}$ in the normal chart where $C$ is a uniform constant with dimension $[length]^{-1}$. Invoke (\ref{eq:omega_bound}) and Lemma \ref{lem:bi-scalar_bounds}, the estimate then reads
    \begin{align}
        \bigg|\frac{1}{2\pi}\int_{\sigma_p(t_p-\delta)}(\:\cdots)\;\bigg|&\leq  \delta ||D\mathbf{Rm}_{\widehat{g}}||_{L^2(\sigma)}+C||\mathbf{Rm}_{\widehat{g}}||_{L^2(\sigma)}+||\mathbf{Rm}_{\widehat{g}}||_{L^2(\sigma)} \nonumber\\
        &\leq  \delta^{\frac{3}{2}}||D^2\mathbf{Rm}_{\widehat{g}}||_{L^2(B(t_p-\delta))}+\delta^{\frac{1}{2}}||D\mathbf{Rm}_{\widehat{g}}||_{L^2(B(t_p-\delta))} \\
        &\quad\,+||\xi||_{L^\infty(\sigma)}\delta^{-\frac{1}{2}}||\mathbf{Rm}_{\widehat{g}}||_{L^2(B(t_p-\delta))} \nonumber
    \end{align}
    The last inequality follows from Stokes' theorem and scaling (one must check that both sides of the inequality share the same dimension, $||\xi||_{L^{\infty}}\lesssim 1$). The presence of $\delta^{-\frac{1}{2}}$ could potentially be problematic. However, we will see this dangerous term appears multiplied by $\delta$ in the later estimates.

    \noindent Combining all the bounds yields
    \begin{align}
        ||\mathbf{Rm}_{\widehat{g}}(t_p)||^2_{L^\infty(\Sigma_{t_p})}&\leq  \delta^3E^{2}(t_p-\delta)+\delta E^{1}(t_p-\delta)+\delta^{-1} E^{0}(t_p-\delta) \nonumber \\
        &\quad+C(^{\mathbf{n}}\pi,t_p,E^{0}(0))\Big( e^{C(^{\mathbf{n}}\pi)\delta}E^{1}(t_p-\delta)+(1+\delta^{-1})\int_{t_p-\delta}^{t_p}||\mathbf{Rm}_{\widehat{g}}(t)||^2_{L^\infty(\Sigma_t)}dt \Big) \nonumber
    \end{align}
    Apply Grönwall's inequality for the last time and safely replace all positive exponents of $\delta$ with $\leq  1$
    \begin{align}
        ||\mathbf{Rm}_{\widehat{g}}(t_p)||^2_{L^\infty(\Sigma_{t_p})}&\leq  C(^{\mathbf{n}}\pi,t_p,E^{0}(0))\Big( E^{2}(t_p-\delta)+E^{1}(t_p-\delta)+\delta^{-1}E^{0}(t_p-\delta) \Big)e^{C'(^{\mathbf{n}}\pi,t_p,E^{0}(0))(1+\delta^{-1})\delta} \nonumber \\
        &\leq  C(^{\mathbf{n}}\pi,t_p,E^{0}(0))\Big( E^{2}(t_p-\delta)+E^{1}(t_p-\delta)+\delta^{-1}E^{0}(t_p-\delta) \Big)
    \end{align}
    which is the result we wanted to show.
\end{proof}
\noindent We are now in place to apply the iteration argument to bound the CMC energies at $t^*$ in terms of the initial $H^2$ data. Again, the idea is to run the global energy estimates and quasi-local light cone estimates until we reach $t=0$.

\begin{proposition}
\label{prop:h^2_bound}
    Suppose $t^*$ is a time close to $T^*$, then
    \begin{align}
        E^{2}(t^*)+E^{1}(t^*)\leq  1+\frac{C(^{\mathbf{n}}\pi,t^*,||\mathbf{Rm}_{\widehat{g}}||_{H^{2}(\Sigma_{t=0})})}{\delta}
    \end{align}
    here $C$ does not depend on $\delta$ and depends only on the bound for $^{\mathbf{n}}\pi$ (\ref{eq:breakdown_criteria}), the time $t^*$, and the initial hypersurface $H^2$ data of the curvature.
\end{proposition}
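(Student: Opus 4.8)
The plan is to run a finite ``staircase'' iteration down the time axis in steps of length $\delta$, exactly the mechanism depicted in Figure~\ref{fig:2}: at each step invoke Proposition~\ref{prop:energy_estimates} on a slab of height $\delta$, pull the point-wise curvature factor out of the resulting time integrals as a spatial supremum, and then replace that supremum by energies one step lower using Proposition~\ref{prop:l_infty_bound}. Concretely, I would set $N:=\lceil t^{*}/\delta\rceil$ and $\sigma_{k}:=\min(k\delta,t^{*})$, partitioning $[0,t^{*}]$ into blocks $J_{k}=[\sigma_{k},\sigma_{k+1}]$ of length $\le\delta$; abbreviate $\mathcal{E}(t):=E^{1}(t)+E^{2}(t)$; and, from Proposition~\ref{prop:energy_estimates}, record once and for all that $E^{0}(t)\le C(^{\mathbf{n}}\pi,t^{*})E^{0}(0)=:C_{0}$ for all $t\in[0,t^{*}]$, with $C_{0}$ controlled by the initial $L^{2}$ curvature data and independent of $\delta$. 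For the base case I would also shrink $\delta$ if necessary (this only strengthens the bootstrap assumption~(\ref{eq:bootstrap})) so that $J_{0}=[0,\delta]$ lies inside the interval of short-time existence supplied by the first part of the Main Theorem; on $J_{0}$ the energies $\mathcal{E}$ are then bounded by a $\delta$-independent constant depending only on $\|g_{0}\|_{H^{s}}$ and $\|k_{0}\|_{H^{s-1}}$.

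For the inductive step on a block $J_{k}$ with $k\ge1$, Proposition~\ref{prop:energy_estimates} gives
\begin{align*}
E^{1}(\sigma_{k+1})&\le e^{C(^{\mathbf{n}}\pi)\delta}\Big(E^{1}(\sigma_{k})+C(^{\mathbf{n}}\pi,t^{*},C_{0})\int_{\sigma_{k}}^{\sigma_{k+1}}\|\mathbf{Rm}_{\widehat{g}}(t)\|^{2}_{L^{\infty}(\Sigma_{t})}\,dt\Big),\\
E^{2}(\sigma_{k+1})&\le e^{C(^{\mathbf{n}}\pi)\delta}\Big(E^{2}(\sigma_{k})+C'(^{\mathbf{n}}\pi)\int_{\sigma_{k}}^{\sigma_{k+1}}E^{1}(t)\,\|\mathbf{Rm}_{\widehat{g}}(t)\|^{2}_{L^{\infty}(\Sigma_{t})}\,dt\Big).
\end{align*}
Because $\Sigma$ is closed, $t\mapsto\|\mathbf{Rm}_{\widehat{g}}(t)\|^{2}_{L^{\infty}(\Sigma_{t})}$ attains its maximum over $J_{k}$ on some slice $\Sigma_{t_{p}}$ with $t_{p}\in J_{k}$, and since $k\ge1$ we have $t_{p}-\delta\ge0$, so Proposition~\ref{prop:l_infty_bound} applies at that maximizing point and yields $\sup_{t\in J_{k}}\|\mathbf{Rm}_{\widehat{g}}(t)\|^{2}_{L^{\infty}(\Sigma_{t})}\le C(^{\mathbf{n}}\pi,t^{*},C_{0})\big(\mathcal{E}(t_{p}-\delta)+\delta^{-1}C_{0}\big)$ with $t_{p}-\delta\in J_{k-1}$. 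Inserting this back, and controlling the within-block growth of $E^{1}(t)$ by reusing the $E^{1}$ estimate on $[\sigma_{k},t]$ together with Proposition~\ref{prop:flux_estimates}, produces a closed recursion of schematic form $\mathcal{E}(\sigma_{k+1})\le e^{C(^{\mathbf{n}}\pi)\delta}\,\mathcal{E}(\sigma_{k})+\Psi_{k}$, where $\Psi_{k}$ is assembled from $\mathcal{E}(\sigma_{k-1})$, $C_{0}$ and $\int_{J_{k}}\|\mathbf{Rm}_{\widehat{g}}\|^{2}_{L^{\infty}}dt$. The crucial accounting point is that the only term in $\Psi_{k}$ carrying a negative power of $\delta$ is the $\delta^{-1}C_{0}$ coming from the Bel--Robinson flux, and there are $N\sim t^{*}/\delta$ blocks, so summing these residues is precisely what leaves behind a single factor $\delta^{-1}$.

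I would then iterate the recursion from $k=1$ down to $k=N-1$, i.e.\ to the initial slice, and sum the contributions: the factors $e^{C(^{\mathbf{n}}\pi)\delta}$ multiply to $e^{C(^{\mathbf{n}}\pi)N\delta}\le e^{C(^{\mathbf{n}}\pi)(t^{*}+\delta)}=C(^{\mathbf{n}}\pi,t^{*})$, the base-case energies on $J_{0}$ give the $\|g_{0}\|_{H^{s}},\|k_{0}\|_{H^{s-1}}$-dependence, the accumulated flux residues give the $\delta^{-1}$, and collecting everything (and adding a harmless additive $1$) into a single constant $C(^{\mathbf{n}}\pi,t^{*},\|\mathbf{Rm}_{\widehat{g}}\|_{H^{2}(\Sigma_{t=0})})$ yields $E^{2}(t^{*})+E^{1}(t^{*})\le 1+C(\cdots)/\delta$.

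The step I expect to be the main obstacle is showing that this recursion genuinely closes with at most the stated power of $\delta^{-1}$. The source $\Psi_{k}$ couples $E^{1}$ (the source in the $E^{2}$ estimate) with $\|\mathbf{Rm}_{\widehat{g}}\|^{2}_{L^{\infty}}$ (which sources both), so a naive estimate produces a Riccati-type quadratic self-interaction that, compounded over $O(t^{*}/\delta)$ steps, would destroy the bound. Avoiding this requires (i) running a secondary continuity/bootstrap argument on $\sup_{[0,t^{*}]}\mathcal{E}$ at the level of the target size $\lesssim\delta^{-1}$, so that quadratic contributions of the type $\delta\,\mathcal{E}^{2}$ can be reabsorbed linearly; and (ii) crucially, the null structure already exploited in the proof of Proposition~\ref{prop:l_infty_bound} --- the point-wise cancellations in the terms $I_{2},I_{4}$, which guarantee that the mantle fluxes $\mathcal{F}^{0,1}_{C_{p}}$ only measure curvature energy crossing $C_{p}$ and never focus it at the vertex (no Ricatti-type self-interaction along the cone). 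It is this absence of focusing that makes $\Psi_{k}$ effectively linear in the energies and lets the iteration terminate at $\Sigma_{t=0}$ with a bound of the advertised magnitude. Feeding this estimate back into Proposition~\ref{prop:l_infty_bound} then refines the curvature bootstrap~(\ref{eq:bootstrap}), which is the next step of the overall argument.
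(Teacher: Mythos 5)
Your proposal is correct and follows essentially the same route as the paper's proof: the same staircase iteration over $\mathbf{K}(t^*)\sim t^*/\delta$ blocks of length $\delta$, alternating Proposition~\ref{prop:energy_estimates} with Proposition~\ref{prop:l_infty_bound} applied at the slice maximizing $\|\mathbf{Rm}_{\widehat{g}}\|^2_{L^\infty}$ in each block, with the factor $\delta^{-1}$ produced exactly as you describe by summing the $O(1)$ residues $\delta\cdot\delta^{-1}E^0\lesssim C_0$ over the $\mathbf{K}(t^*)$ steps. The one point of divergence is the quadratic coupling you flag as the main obstacle: the paper absorbs the products $E^1\cdot\|\mathbf{Rm}_{\widehat{g}}\|^2_{L^\infty}$ via Young's inequality into squared terms of the schematic form $\sup\big(\delta E^2+\delta E^1+E^0\big)^2$ and asserts these accumulate to $(C\delta)^{\mathbf{K}(t^*)}$, rather than running your secondary bootstrap on $\sup\mathcal{E}$ --- and note that the reabsorption as you wrote it ($\delta\,\mathcal{E}^2\le C\mathcal{E}$ under $\mathcal{E}\lesssim\delta^{-1}$) gives an $O(1)$ multiplicative increment per step that would compound over $t^*/\delta$ steps, so the quadratic term one actually needs to control is $\delta^2\mathcal{E}^2$, which does contribute only $O(1)$ per step and closes.
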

\begin{proof}
    Combine the results from Propositions \ref{prop:energy_estimates} and \ref{prop:l_infty_bound} to begin bounding the energies at $t^*$ in terms of the data at $t^*-\delta$ up to the $L^\infty$ of Riemann which is taken out of the $[t^*-\delta,t^*]$ integral as a sup norm. We gain a factor of $\delta$ which annihilates the $\delta^{-1}$ in front of the 0th order energy
    \begin{align}
        E^{2}(t^*)+E^{1}(t^*)&\leq  E^{2}(t^*-\delta)+E^{1}(t^*-\delta)+\sup_{t\in[t^*-\delta,t^*]}\Big(\delta E^{2}(t-\delta)+\delta E^{1}(t-\delta)+E^{0}(t-\delta) \Big) \nonumber\\
        &\quad+\sup_{t\in[t^*-\delta,t^*]}\Big(\delta E^{2}(t-\delta)+\delta E^{1}(t-\delta)+E^{0}(t-\delta) \Big)^2+\Big(E^{1}(t^*-\delta)\Big)^2 \nonumber
    \end{align}
    Since the interval $[0,t]$ is finite for any $t\leq T^*$, we may always cover it with a finite number of copies of $\delta$ intervals, say $\mathbf{K}(t)$ of them. In other words, $\exists \mathbf{K}(t)$ such that $t-\mathbf{K}(t)\delta=0$. Furthermore, the 0th order energy is uniformly bounded in $\delta$ in terms of $E^{0}(0)$, $C(^{\mathbf{n}}\pi)$, and $t^*$
    \begin{align}
        E^{2}(t^*)+E^{1}(t^*)&\leq  E^{2}(t^*-\delta)+E^{1}(t^*-\delta)+\sup_{t\in[t^*-\delta,t^*]}\Big(\delta E^{2}(t-\delta)+\delta E^{1}(t-\delta) \Big) \nonumber\\
        &\qquad+\sup_{t\in[t^*-\delta,t^*]}\Big(\delta E^{2}(t-\delta)+\delta E^{1}(t-\delta) \Big)^2 + C\nonumber \\
        &\leq  e^{C(^{\mathbf{n}}\pi)\delta}E^{2}(t^*-2\delta)+e^{C(^{\mathbf{n}}\pi)\delta}E^{1}(t^*-2\delta)+\sup_{t\in[t^*-2\delta,t^*-\delta]}\Big(\delta E^{2}(t-\delta)+\delta E^{1}(t-\delta) \Big) \nonumber\\
        &\qquad+\sup_{t\in[t^*-2\delta,t^*-\delta]}\Big(\delta E^{2}(t-\delta)+\delta E^{1}(t-\delta) \Big)^2 +\sup_{t\in[t^*-\delta,t^*]}\Big(\delta E^{2}(t-\delta)+\delta E^{1}(t-\delta) \Big) \nonumber\\
        &\qquad+\sup_{t\in[t^*-\delta,t^*]}\Big(\delta E^{2}(t-\delta)+\delta E^{1}(t-\delta) \Big)^2 + 2C\nonumber \\
        &\leq e^{C(^{\mathbf{n}}\pi)\mathbf{K}(t^{*})\delta} E^{2}(0)+e^{C(^{\mathbf{n}}\pi)\mathbf{K}(t^{*})\delta}E^{1}(0)+\mathbf{K}(t^*)C+\sup_{t\in[t^*-\delta,t^*]}\Big(\delta E^{2}(t-\delta)+\delta E^{1}(t-\delta)\Big) \nonumber\\
        &\qquad\qquad+\;\cdots\;+\sup_{t\in[t^*-\mathbf{K}(t^*)\delta,t^*-(\mathbf{K}(t^*)-1)\delta]}\Big(\delta E^{2}(t-\delta)+\delta E^{1}(t-\delta)\Big) \nonumber \\
        &\qquad\qquad+\sup_{t\in[t^*-\delta,t^*]}\Big(\delta E^{2}(t-\delta)+\delta E^{1}(t-\delta)\Big)^2 \nonumber\\
        &\qquad\qquad+\;\cdots\;+\sup_{t\in[t^*-\mathbf{K}(t^*)\delta,t^*-(\mathbf{K}(t^*)-1)\delta]}\Big(\delta E^{2}(t-\delta)+\delta E^{1}(t-\delta)\Big)^2 \nonumber
    \end{align}
    Re-run estimates on the energies inside the supremums and note that it takes up to $\mathbf{K}(t^*)$ steps to reach initial energies. The $\sup$ terms contain positive powers of $\delta$. In the end, we obtain the following
    \begin{align}
        E^{2}(t^*)+E^{1}(t^*)&\leq  \Big( E^{2}(0)+E^{1}(0)\Big)e^{C'\mathbf{K}(t^*)\delta}++\mathbf{K}(t^*)C+(C''\delta)^{\mathbf{K}(t^*)} \nonumber \\
        &\leq  Ce^{C\mathbf{K}(t^*)\delta}+\mathbf{K}(t^*)C+(C\delta)^{\mathbf{K}(t^*)}\leq  1+\frac{C(^{\mathbf{n}}\pi,t^*,||\mathbf{Rm}_{\widehat{g}}||_{H^{2}(\Sigma_{t=0})})}{\delta}
    \end{align}
    where all the constants involved depend on the deformation tensor bound, the time $t^*$, and the initial slice curvature data.
\end{proof}
\noindent Now we complete the bootstrap argument. 
\begin{theorem}
\label{thm:bootstrap_closure}
    The gauge-invariant $L^\infty$ norm of $\mathbf{Rm}_{\widehat{g}}$ over the slab $\Sigma\times[0,T^*]$ has a refined point-wise upper bound dependent only on the initial data. Specifically,
    \begin{align}
        \sup_{t\in[0,T^*]}\lvert\lvert{\mathbf{Rm}_{\widehat{g}}(t)}\rvert\rvert^{2}_{L^{\infty}(\Sigma_t)}\leq  C(^{\mathbf{n}}\pi,T^*,||\mathbf{Rm}_{\widehat{g}}||_{H^{2}(\Sigma_{t=0})})
    \end{align}
    where the RHS does not depend on $\delta$ at all.
\end{theorem}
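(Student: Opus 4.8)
The plan is a continuity (bootstrap) argument that promotes the \emph{assumed} curvature bound \eqref{eq:bootstrap} to an a posteriori one whose constant no longer sees $\delta$. The first task is to pin down $\delta$. Under \eqref{eq:bootstrap}, the CMC elliptic identity $\Delta_{g}N+|k|^{2}N=\partial_{t}\,\mathrm{tr}_{g}k$ together with the $L^{\infty}$ bound \eqref{eq:breakdown_criteria} on $^\mathbf{n}\pi$ controls the lapse pointwise; inserting this into Theorem \ref{thm:null_inj}, with the regular initial $L^{2}$ curvature data and the volume lower bound of \eqref{eq:bootstrap} (and Theorem \ref{thm:chron_inj} for the chronological radius), yields $\inf_{p\in\Sigma\times[0,T^{*}]}\mathrm{NullInj}_{\widehat{g}}(p,E)\ge\delta_{0}$ for some $\delta_{0}=\delta_{0}(^\mathbf{n}\pi,T^{*},||\mathbf{Rm}_{\widehat{g}}||_{H^{2}(\Sigma_{t=0})})>0$. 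Hence $\delta$ may be chosen as any number in $(0,\delta_{0}]$ that is in addition small enough for the closure step below; in particular the chosen $\delta$ is a function of the data alone.

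Next I would chain the estimates already established. Proposition \ref{prop:energy_estimates} provides the uniform control $E^{0}(t)\le C(^\mathbf{n}\pi,t)\,E^{0}(0)$, and $E^{0}(0)$ is bounded, up to a constant depending only on $^\mathbf{n}\pi$, by $||\mathbf{Rm}_{\widehat{g}}||^{2}_{L^{2}(\Sigma_{t=0})}\le||\mathbf{Rm}_{\widehat{g}}||^{2}_{H^{2}(\Sigma_{t=0})}$. Proposition \ref{prop:h^2_bound}, whose iteration reaches $t=0$ in $O(T^{*}/\delta)$ steps and applies at every $t^{*}\in[0,T^{*}]$, gives $E^{2}(t^{*})+E^{1}(t^{*})\le 1+\delta^{-1}C(^\mathbf{n}\pi,t^{*},||\mathbf{Rm}_{\widehat{g}}||_{H^{2}(\Sigma_{t=0})})$. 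Feeding both into Proposition \ref{prop:l_infty_bound} at an arbitrary $p$ with $t_{p}\in[\delta,T^{*}]$, and noting that every constant there is $\delta$-independent, I obtain
\begin{align*}
    ||\mathbf{Rm}_{\widehat{g}}(t_{p})||^{2}_{L^{\infty}(\Sigma_{t_{p}})}\le C\Big(E^{2}(t_{p}-\delta)+E^{1}(t_{p}-\delta)+\delta^{-1}E^{0}(t_{p}-\delta)\Big)\le\frac{\widetilde{C}\big(^\mathbf{n}\pi,T^{*},||\mathbf{Rm}_{\widehat{g}}||_{H^{2}(\Sigma_{t=0})}\big)}{\delta}.
\end{align*}
The initial layer $t_{p}\in[0,\delta)$, not reached by the light-cone estimate, is handled by local well-posedness: the solution map is continuous into $H^{s}\times H^{s-1}$ with $s\ge 4$, so $||\mathbf{Rm}_{\widehat{g}}(t)||_{L^{\infty}(\Sigma_{t})}\lesssim||\mathbf{Rm}_{\widehat{g}}(t)||_{H^{s-2}(\Sigma_{t})}$ stays finite there. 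Thus $\sup_{t\in[0,T^{*}]}||\mathbf{Rm}_{\widehat{g}}(t)||^{2}_{L^{\infty}(\Sigma_{t})}\le\delta^{-1}\widetilde{C}$.

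Finally I would close the bootstrap. Fixing $\delta:=\min\{\delta_{0},(2\widetilde{C})^{-1/3}\}$, still determined solely by the data, makes the derived bound satisfy $\delta^{-1}\widetilde{C}\le\tfrac{1}{2}\delta^{-4}$, a \emph{strict} improvement of the assumed $||\mathbf{Rm}_{\widehat{g}}||^{2}_{L^{\infty}}\le\delta^{-4}$; the volume lower bound in \eqref{eq:bootstrap} improves in the same stroke, because the normal-chart estimates behind \eqref{eq:theta_bound} force $\widehat{g}$, hence $\exp_{p}^{*}\widehat{g}$, to be $O(1)$ on $B_{E}(0,C\delta)$, so $\mathrm{Vol}_{\widehat{g}}(\mathcal{B}_{E}(p,C\delta))\sim(C\delta)^{4}$ with a uniform constant. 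Then the usual continuity argument applies: the set of $\tau\in[0,T^{*}]$ on which \eqref{eq:bootstrap} holds is nonempty (it contains $0$ once $\delta$ is small relative to the initial data), closed by continuity of $t\mapsto||\mathbf{Rm}_{\widehat{g}}(t)||_{L^{\infty}(\Sigma_{t})}$, and open because the derived bound is strictly sharper than the assumed one, hence it equals $[0,T^{*}]$. The a posteriori estimate then reads $\sup_{t\in[0,T^{*}]}||\mathbf{Rm}_{\widehat{g}}(t)||^{2}_{L^{\infty}(\Sigma_{t})}\le\delta^{-1}\widetilde{C}=:C(^\mathbf{n}\pi,T^{*},||\mathbf{Rm}_{\widehat{g}}||_{H^{2}(\Sigma_{t=0})})$, which no longer mentions $\delta$ since $\delta$ has been fixed as a function of the data.

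The main obstacle is the circularity built into the setup: Moncrief's formula \eqref{eq:integral}, and everything downstream of it (Propositions \ref{prop:l_infty_bound}--\ref{prop:h^2_bound}), presupposes the geodesically convex neighborhoods $\mathcal{G}_{p}$, whose existence via Theorems \ref{thm:chron_inj}--\ref{thm:null_inj} needs precisely the curvature and volume bounds \eqref{eq:bootstrap} that are the target. The continuity method breaks this circle, but one must check carefully that at a hypothetical first failure time $\tau_{0}<T^{*}$ every intermediate proposition was invoked only on $[0,\tau_{0}]$, where \eqref{eq:bootstrap} still holds, and that none of the constants $\widetilde{C},C$ secretly depend on $\delta$---they may enter only through $^\mathbf{n}\pi$, through $t^{*}\le T^{*}$, and through the fixed initial data, so that the final bound is genuinely $\delta$-free. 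Tracking that the pivotal $\delta^{-1}$ (and no worse inverse power of $\delta$) is the only $\delta$-dependence surviving the iteration of Proposition \ref{prop:h^2_bound} is the quantitative heart of the matter.
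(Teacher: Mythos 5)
Your proposal follows essentially the same route as the paper: combine Propositions \ref{prop:l_infty_bound} and \ref{prop:h^2_bound} with the uniform zeroth-order energy bound to obtain $\sup_{t}\lVert\mathbf{Rm}_{\widehat{g}}(t)\rVert^{2}_{L^{\infty}}\leq C_{1}+C_{2}\delta^{-1}$ with $\delta$-independent constants, then fix $\delta$ small (as a function of the data and the injectivity-radius lower bound alone) so that this strictly beats the assumed $\delta^{-4}$, closing the bootstrap. The additional scaffolding you supply --- the explicit open--closed continuity argument, the treatment of the initial layer $t_{p}<\delta$ via Sobolev embedding, and the verification of the volume half of \eqref{eq:bootstrap} --- fills in steps the paper leaves implicit but does not change the argument.
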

\begin{proof}
    The last two propositions, namely \ref{prop:l_infty_bound} and \ref{prop:h^2_bound}, together with the uniform bound for the 0th order energy yields
    \begin{align}
        \sup_{t\in[0,T^*]}||\mathbf{Rm}_{\widehat{g}}(t)||^2_{L^\infty(\Sigma_t)}\leq  C_1+C_2\delta^{-1}
    \end{align}
    where $C_{1}$ and $C_{2}$ do not depend on the $\delta$. $C_{1}+C_{2}\frac{\delta}{\delta^{2}}$ can be made to be smaller than $\frac{1}{4\delta^{4}}$ (the boot-strap assumption made in \ref{eq:bootstrap}) after choosing a sufficiently small but fixed $\delta>0$ that only depends on the initial $H^2$ curvature data, the time $T^*$, and $C(^{\mathbf{n}}\pi)$ (i.e., $\delta^{2}\leq \frac{1-4C_{2}\delta}{4C_{1}}$). Therefore $\sup_{t\in[0,T^*]}\lvert\lvert{\mathbf{Rm}_{\widehat{g}}(t)}\rvert\rvert^{2}_{L^{\infty}(\Sigma_t)}\leq  C(^{\mathbf{n}}\pi,T^*,||\mathbf{Rm}_{\widehat{g}}||_{H^{2}(\Sigma_{t=0})})$.
\end{proof}
\begin{theorem}
    The $H^2(\Sigma_t)$ norm of the Riemann curvature tensor at any $t\in [0,T^*]$ is bounded above by a finite quantity dependent \textit{only} on the slab length $T^*$, the initial $H^2$ curvature data, and the assumed point-wise bound on the deformation tensor of the unit timelike vector field orthogonal to the CMC Cauchy foliation
    \begin{align}
        ||\mathbf{Rm}_{\widehat{g}}(t)||_{H^{2}(\Sigma_t)}\leq C(^{\mathbf{n}}\pi,T^*,||\mathbf{Rm}_{\widehat{g}}||_{H^{2}(\Sigma_{t=0})}) <\infty
    \end{align}
\end{theorem}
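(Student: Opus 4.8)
The plan is to feed the refined pointwise curvature bound of Theorem~\ref{thm:bootstrap_closure} back into the global energy estimates of Proposition~\ref{prop:energy_estimates} and to observe that, once $||\mathbf{Rm}_{\widehat{g}}(t)||_{L^{\infty}(\Sigma_t)}$ is a fixed constant over the entire slab, the estimates \eqref{eq:1st_energy}--\eqref{eq:2nd_energy} close with no further pointwise input, so that Gr\"onwall's inequality simply propagates the initial energies to every later slice.

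First I would set $\Lambda:=C(^{\mathbf{n}}\pi,T^{*},||\mathbf{Rm}_{\widehat{g}}||_{H^{2}(\Sigma_{t=0})})$ to be the constant supplied by Theorem~\ref{thm:bootstrap_closure}, so that $\sup_{t\in[0,T^{*}]}||\mathbf{Rm}_{\widehat{g}}(t)||^{2}_{L^{\infty}(\Sigma_t)}\leq\Lambda$. The zeroth-order energy is already controlled by \eqref{eq:0th_energy}: $E^{0}(t)\leq C(^{\mathbf{n}}\pi,T^{*})E^{0}(0)$ for all $t\in[0,T^{*}]$, with $E^{0}(0)$ determined by the initial $H^{2}$ curvature data. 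Substituting the uniform $L^{\infty}$ bound into \eqref{eq:1st_energy} replaces the troublesome integrand $E^{0}(t)||\mathbf{Rm}_{\widehat{g}}(t)||^{2}_{L^{\infty}(\Sigma_t)}$ by the constant $\Lambda\,C(^{\mathbf{n}}\pi,T^{*})E^{0}(0)$, whose time integral over $[0,t]\subset[0,T^{*}]$ is again a constant; hence $E^{1}(t)\leq C(^{\mathbf{n}}\pi,T^{*},||\mathbf{Rm}_{\widehat{g}}||_{H^{2}(\Sigma_{t=0})})$ for every $t\in[0,T^{*}]$. With $E^{1}$ now bounded uniformly in $t$, the integrand $E^{1}(t)||\mathbf{Rm}_{\widehat{g}}(t)||^{2}_{L^{\infty}(\Sigma_t)}$ in \eqref{eq:2nd_energy} is likewise majorized by a constant, and one last application of Gr\"onwall yields $E^{2}(t)\leq C(^{\mathbf{n}}\pi,T^{*},||\mathbf{Rm}_{\widehat{g}}||_{H^{2}(\Sigma_{t=0})})$ on all of $[0,T^{*}]$.

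It then remains to convert the bound $E^{0}(t)+E^{1}(t)+E^{2}(t)\lesssim C(^{\mathbf{n}}\pi,T^{*},||\mathbf{Rm}_{\widehat{g}}||_{H^{2}(\Sigma_{t=0})})$ into the asserted $H^{2}(\Sigma_t)$ bound on $\mathbf{Rm}_{\widehat{g}}$. By construction (the density computations of Section~4) the three energies control $||\mathbf{Rm}_{\widehat{g}}(t)||^{2}_{L^{2}(\Sigma_t)}$, $||D\mathbf{Rm}_{\widehat{g}}(t)||^{2}_{L^{2}(\Sigma_t)}$ and $||D^{2}\mathbf{Rm}_{\widehat{g}}(t)||^{2}_{L^{2}(\Sigma_t)}$ up to multiplicative constants of the form $C(^{\mathbf{n}}\pi)$; one then trades the gauge-covariant derivative $D$ for the intrinsic covariant derivative on $\Sigma_t$, the difference being lower-order terms built from the connection one-form $\omega$ together with the spatial Christoffel symbols, lapse and shift, all of which are controlled in CMCSH gauge by $^{\mathbf{n}}\pi$ and the now-established $L^{\infty}$ curvature bound via the Cronstr\"om-type representation \eqref{eq:connection}--\eqref{eq:co-frame} and the CMC elliptic equation for $N$. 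Collecting these contributions gives $||\mathbf{Rm}_{\widehat{g}}(t)||_{H^{2}(\Sigma_t)}\leq C(^{\mathbf{n}}\pi,T^{*},||\mathbf{Rm}_{\widehat{g}}||_{H^{2}(\Sigma_{t=0})})<\infty$.

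The genuinely hard work---producing a $\delta$-independent uniform $L^{\infty}$ curvature bound---has already been carried out in Theorem~\ref{thm:bootstrap_closure}, so the remaining obstacle here is essentially bookkeeping: one must verify that the passage from the ad-hoc gauge-covariant energies to the geometric $H^{2}$ norm introduces no quantity that is not already majorized by $^{\mathbf{n}}\pi$ and the curvature, and that every implicit constant depends only on $T^{*}$, the bound \eqref{eq:breakdown_criteria} for $^{\mathbf{n}}\pi$, and the initial data. Once this $H^{2}$ bound is secured, the Main Theorem follows from the standard continuation statement (e.g.\ \cite{andersson2004future}) that non-blow-up of $||\mathbf{Rm}_{\widehat{g}}||_{H^{2}}$ in finite time permits extension of the CMCSH Cauchy problem past $T^{*}$, contradicting the maximality of $T^{*}$ unless $T^{*}=\infty$ or $||^{\mathbf{n}}\pi(t)||_{L^{\infty}(\Sigma_t)}$ fails to remain bounded as $t\to T^{*}$.
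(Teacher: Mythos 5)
Your proposal is correct and follows essentially the same route as the paper: feed the $\delta$-independent $L^{\infty}$ curvature bound of Theorem \ref{thm:bootstrap_closure} back into the energy estimates of Proposition \ref{prop:energy_estimates} over $[0,t]$, apply Gr\"onwall to bound $E^{0},E^{1}$ and then $E^{2}$ uniformly on $[0,T^{*}]$, and conclude since the sum of the three energies controls $||\mathbf{Rm}_{\widehat{g}}(t)||_{H^{2}(\Sigma_t)}$. Your extra remarks on trading the gauge-covariant derivative $D$ for the intrinsic one are a reasonable elaboration of the final step, which the paper asserts without detail.
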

\begin{proof}
    Using Theorem \ref{thm:bootstrap_closure}, we apply the energy estimates of Proposition \ref{prop:energy_estimates} over the time interval $[0,t]$, this yields
    \begin{align}
        E^{1}(t)+E^{0}(t)&\leq e^{C(^{\mathbf{n}}\pi)t}\Big(E^{1}(0)+E^{0}(0)+C(^{\mathbf{n}}\pi,T^*,t,||\mathbf{Rm}_{\widehat{g}}||_{H^{2}(\Sigma_{t=0})})t\Big) \nonumber\\
        &\leq C(^{\mathbf{n}}\pi,T^*,||\mathbf{Rm}_{\widehat{g}}||_{H^{2}(\Sigma_{t=0})})\nonumber
    \end{align}
    where we have used the assumption that $t\leq T^*$ and the monotonicity of the exponential and linear functions in order to get rid of the $t$ dependence on the RHS. The estimate for the 2nd order energy at $t$ is
    \begin{align}
        E^{2}(t)&\leq e^{C(^{\mathbf{n}}\pi)t}\Big(E^{2}(0)+C(^{\mathbf{n}}\pi, T^*, ||\mathbf{Rm}_{\widehat{g}}||_{H^{2}(\Sigma_{t=0})})\sup_{t'\in[0,t]}E^{1}(t') \,t\Big) \nonumber \\
        &\leq C(^{\mathbf{n}}\pi,T^*,||\mathbf{Rm}_{\widehat{g}}||_{H^{2}(\Sigma_{t=0})})\nonumber
    \end{align}
    The sum $E^{2}(t)+E^{1}(t)+E^{0}(t)$ controls $||\mathbf{Rm}_{\widehat{g}}(t)||_{H^{2}(\Sigma_t)}$, thus implying the result.
\end{proof}

\noindent Once we have established that the $H^{2}(\Sigma)$ norm of the Riemann curvature cannot blow up in any finite time interval $[0,T^*]$, the remaining task is to use the necessary elliptic estimates. Using curvature bounds, one obtains necessary bound for the volume $\mathrm{Vol}_{\widehat{g}}(\mathcal{B}_{E}(p,C\delta))$. In terms of the elliptic estimates, we refer to \cite{andersson2004future, klainerman2010breakdown}. An application of the local well-posedness theorem of \cite{elliptichyperbolic} in CMC spatial harmonic coordinates for spacetimes foliated by compact Cauchy slices of negative Yamabe type together with the arguments presented in \cite{lefloch2008injectivity} regarding recovery of the spacetime given curvature bound yields the main theorem. 

\noindent \textbf{Remark:} \textit{At this point, we want to conjecture a continuation criteria for Einstein equations coupled with sources. This is based on the preliminary analysis of the elliptic equation for the lapse function in the CMC gauge. To satisfy one of the criteria in LeFloch and Chen's theorem, one has to obtain a point-wise bound on the lapse function. Let $\mathfrak{TT}:=\mathfrak{TT}_{\mu\nu}dx^{\mu}\otimes dx^{\nu}$ be the stress energy tensor of a source in the Einsteinian spacetime. The lapse function $N$ verifies the following elliptic equation in the CMC gauge
\begin{eqnarray}
\Delta_{g}N+\{|k|^{2}+\frac{S}{n-1}+\frac{n-2}{n-1}E\}N=\frac{\partial \mathrm{tr}_{g}k}{\partial t},
\end{eqnarray}
where $E:=\mathfrak{T}\mathfrak{T}(\mathbf{n},\mathbf{n})$ is the energy density and $S:=g^{ij}(\mathfrak{TT}(\partial_{i},\partial_{j}))$ is the trace of the momentum flux density. Maximum principle yields
$\frac{|\frac{\partial \mathrm{tr}_{g}k}{\partial t}|}{||k||^{2}_{L^{\infty}}+||S||_{L^{\infty}}+||E||_{L^{\infty}}}\leq  ||N||_{L^{\infty}}\leq  |\frac{\partial \mathrm{tr}_{g}k}{\partial t}|$. Therefore, the preliminary continuation criterion would be the finiteness of $||~^{\mathbf{n}}\pi||_{L^{\infty}}+||S||_{L^{\infty}}+||E||_{L^{\infty}}$.  For example,\\
(a) in coupled Einstein-Yang-Mills system one would expect that a preliminary continuation criteria would be the boundedness of $||~^{\mathbf{n}}\pi||_{L^{\infty}}+||F||_{L^{\infty}}$ ($F$ is the Yang-Mills curvature).\\
(b) in coupled Einstein-Euler system, the preliminary continuation criteria can be cast as boundedness of $||~^{\mathbf{n}}\pi||_{L^{\infty}}+||(P,\rho,v)||_{L^{\infty}}$, where $P,\rho,$ and $v$ are the pressure, density, and velocity of the fluid.\\
Of course, one ought study the coupled light cone dynamics  (sound cone as well for fluid) which is essentially the second important step.}

\section{Discussion}

\noindent  The weak cosmic censorship conjecture asserts that all spacetime singularities are hidden inside black holes, i.e. the future null infinity is geodesically complete. In a globally hyperbolic vacuum with non-compact Cauchy hypersurfaces, Penrose's singularity theorem \cite{hawking1970singularities} states that spacetime cannot be future null complete whenever there exists a closed trapped surface. Thus if the weak cosmic censorship hypothesis holds, then it would be impossible for naked singularities to occur. Addressing the question of cosmic censorship is important due to the potential pathologies mentioned in \hyperlink{section.1}{Sec. 1}, but it also plays a vital role in other contexts where it is assumed to be true e.g. black hole radiation. The first step to finding a solution to the conjecture is to determine the breakdown criteria for solutions to the Einstein equations. Although global existence is not known to hold for general gravitational fields due to instability issues that can lead to the formation of black holes, a vacuum is the easiest setting for determining the exact conditions that cause a breakdown of solutions. We tackled this problem using physical principles as backbones to motivate what the continuation criteria should be, namely the $L^{\infty}_{t}L^{\infty}_{\vec{x}}$ bound for the deformation tensor of the unit timelike vector field normal to the CMC Cauchy hypersurfaces. Inspiration was taken from the
general Minkowski space Yang-Mills global existence work by Eardley and Moncrief \cite{eardley1982global2}, but with
the caveat that the spacetime curvature takes the role of the field strength, meaning the background
geometry is not fixed whereas it would be in non-gravitational field theories. The frame bundle formalism of GR led to formulae (connection, co-frame field) that served a crucial role in arriving at the necessary estimates. Of course, injectivity bounds were needed to consider these equations in the first place.        

One would however wonder the physical appeal of our result. Recall that the deformation tensor $^{\mathbf{n}}\pi:=\mathcal{L}_{\mathbf{n}}\widehat{g}$ measures the obstruction of $\mathbf{n}$ to be a timelike Killing vector field. This only requires certain derivatives of the spacetime metric. Therefore a point-wise bound on $^{\mathbf{n}}\pi$ is rather quite rough in the sense that $||^\mathbf{n}\pi||_{L^{\infty}_{t,x}}<\Delta<\infty$ is a larger space that includes the space of classical solutions as a smaller subspace. However, in a quantum theory of gravity, one ought to integrate over the space of Riemannian metrics, not just the smooth ones solving the Einstein's equations \cite{gibbons1993euclidean}. The question then becomes which function space of metrics ought to be considered in the definition of the partition function? It is matter of debate since these infinite dimensional moduli spaces are in general difficult to handle in mathematically rigorous way. A reasonable choice would be the space where classical determinism is valid. In general one would not expect to have a smooth geometry at quantum level and therefore a rough spacetime metric is often desirable (there are several propositions of lattice structure of quantum spacetimes \cite{loll2019quantum, grimmer2022discrete}).

\noindent In addition, the bound on the deformation tensor is a key obstruction to proving the weak cosmic censorship hypothesis. Even though we only considered spacetimes that are foliated by the compact Cauchy hypersurfaces, one may obtain the same result for asymptotically flat spacetimes foliated by maximal slices (the argument behind proving the point-wise bound of the spacetime curvature mostly remains unchanged due to its quasi-local nature). However, the deformation tensor bound seems to be reasonable as most realistic physical systems exhibit an almost-timelike symmetry at least in domains of outer communication. Of course, one can never have any a priori information about this deformation tensor on a dynamical spacetime and as such it may blow up in finite time, hence obstructing the continuation of the solution. Leaving aside the \textit{generic} spacetimes, one may wonder if the global existence result can be proven to be true in certain special spacetimes for the large data. Positive answers to this question are available for Gowdy spacetimes where a $\mathbb{T}^{3}$ symmetry is present \cite{chrusciel1990strong, moncrief1981global}. A quantum jump in the context of large data global existence result would be that of $\mathrm{U}(1)$ problem where a small data result is already established \cite{choquet2003nonlinear, choquet2001future}. Another important direction would be to study large data Einstein-Yang-Mills dynamics without symmetry assumption since Yang-Mills repulsion is expected to counterbalance gravity.

\end{document}